\theoremstyle{plain}
\newtheorem{theorem}{Theorem}
\newtheorem{proposition}{Proposition}
\newtheorem{corollary}{Corollary}
\newtheorem{lemma}{Lemma}
\theoremstyle{definition}
\newtheorem{example}{Example}
\newtheorem{excont}{Example}
\newcommand{\eff}{\operatorname{eff}}
\newcommand{\Pri}{\mathcal{P}}
\newcommand{\E}{E}
\renewcommand{\vec}{\bm}
\newcommand{\T}{\textsc{T}}
\newcommand{\expit}{\operatorname{expit}}
\newcommand{\argmax}{\operatorname{arg \,max}}
\newcommand{\fx}[1]{\vec{f}(\vec{x}_{#1})}
\newcommand{\bbeta}{\bm{\beta}}
\newcommand{\Q}{\mathcal{Q}}
\newcommand{\cS}{\mathcal{S}}
\newcommand{\xiQL}{\xi^\ast_{\Q,L}}
\newcommand{\xiQU}{\xi^\ast_{\Q,U}}
\newcommand{\cP}{\mathcal{P}}
\newcommand{\given}{\,|\,}
\renewcommand{\bm}{} 
\renewcommand{\chapter}[1]{  } 
\newcommand{\mychapter}[1]{
    \refstepcounter{chapter}
    \setcounter{section}{0}
    \setcounter{equation}{0} 
    \par
    \vspace{1em}
    \noindent {\bf \thechapter. #1}
}
\newcommand{\mysection}[1]{
   \refstepcounter{section}
   
   \noindent {\bf \thesection. #1}
}
\newcommand{\mysubsection}[1]{
   \refstepcounter{subsection}
   
   \noindent {\bf \thesubsection. #1}
}
\begin{document}
\renewcommand{\baselinestretch}{1.2}
\markright{
}

\markboth{\hfill{\footnotesize\rm TIMOTHY W.~WAITE 
}\hfill}
{\hfill {\footnotesize\rm  SINGULAR PRIORS IN BAYESIAN DESIGN} \hfill}
\renewcommand{\thefootnote}{}
$\ $\par
\fontsize{10.95}{14pt plus.8pt minus .6pt}\selectfont
\vspace{0.8pc}
\centerline{\large\bf  SINGULAR PRIOR DISTRIBUTIONS  {AND}  }
\vspace{2pt}
\centerline{\large\bf  { ILL-CONDITIONING} IN BAYESIAN \emph{D}-OPTIMAL }
\vspace{2pt}
\centerline{\large\bf  DESIGN FOR {SEVERAL} NONLINEAR MODELS}
\vspace{.4cm}
\centerline{Timothy W. Waite}
\vspace{.4cm}
\centerline{\it  University of Manchester}
\vspace{.55cm}
\fontsize{9}{11.5pt plus.8pt minus .6pt}\selectfont

\begin{quotation}
\noindent {\it Abstract:}
For Bayesian $D$-optimal design, we define a \emph{singular  prior distribution} for the model parameters as a prior distribution such that the determinant of the Fisher information matrix has a prior geometric mean of zero for all designs. 
For such a prior distribution, the Bayesian $D$-optimality criterion fails to select a design.  For the exponential decay model, we characterize singularity of the prior distribution in terms of the expectations of a few elementary transformations of the parameter. For a compartmental  model and {several multi-parameter generalized linear models,} we establish sufficient conditions for singularity of a prior distribution. For {the generalized linear models} we also obtain sufficient conditions for non-singularity.
In the existing literature,  weakly informative prior distributions are commonly recommended as a default choice for inference in logistic regression. Here it is shown that some of the recommended prior distributions are singular, and hence should not be used for Bayesian $D$-optimal design.   Additionally, methods are developed to derive and assess Bayesian $D$-efficient designs when numerical evaluation of the objective function fails due to ill-conditioning, {as often occurs for heavy-tailed prior distributions. These numerical methods are illustrated for logistic regression.} \par

\vspace{9pt}
\noindent {\it Key words and phrases:}
Compartmental model, exponential decay model, {generalized linear model}, ill-conditioning, logistic regression. 
\par
\end{quotation}\par

\fontsize{10.95}{14pt plus.8pt minus .6pt}\selectfont
\mychapter{Introduction}

In recent years, much effort has been devoted to the development of $D$-optimal design methods for nonlinear problems; for example, nonlinear models (e.g.~\citet*{yang-stuf,yang2012identifying,yang2010garza}), generalized linear models (\citet*{khuri2006design,woods06, yang11,yang2015d}), and linear models with mixed effects (\citet*{jones2009d}).  In each of these areas, the choice of a $D$-optimal design depends on the unknown vector of model parameters, $\bm{\theta} \in \Theta \subseteq \mathbb{R}^p$. 

One approach to choosing a design is to make a `best guess' of the parameter values, and calculate a corresponding locally $D$-optimal design \citep{chernoff1953locally}, i.e.~$\xi^\ast_\bm{\theta} \in \argmax_{\xi \in \Xi} |M(\xi; \theta)|$, where $M(\xi;\theta)$ is the Fisher information matrix for design $\xi \in \Xi$, and $\Xi$ is the set of all competing designs. However, the performance of a locally optimal design  may be highly sensitive to misspecification of the value of $\theta$. 
Then a Bayesian approach is often used to derive designs that are efficient for a variety of plausible values for $\theta$. This approach requires the adoption of a prior distribution, $\Pri$, on the parameters, and maximization of the value of an objective function that quantifies the expected information contained in the experiment. Throughout, we assume that $\cP$ is a probability measure on the measure space $(\Theta, \Sigma)$, with $\Sigma$ the Borel $\sigma$-algebra over $\Theta$.
A widely used objective function is 
\begin{equation}
\phi(\xi; \Pri) = 
\int_\Theta \log 
	|  M(\xi ; {\bm{\theta}})| \,
	d\Pri(\bm{\theta})
\label{eq:bayes-d}  \,,
\end{equation}
see, for example, \citet*{chaloner-larntz} and \citet*{gotwalt}. We adopt the measure-theoretic formulation of integration, under which the notation $\int_\Theta g (\theta)\,d\cP(\theta)=\infty$ is standard when $g:\Theta \to \mathbb{R}$ is a non-negative $\Sigma$-measurable function (\citet{capinski2004measure}, pp.~77-8).
When $g:\Theta \to \mathbb{R}$ is a general $\Sigma$-measurable function, it is said that $\int_\Theta g(\theta)\, d\cP(\theta)=-\infty$ if and only if $\int_\Theta g^{+}(\theta)\, d\cP(\theta) < \infty$ and $\int_\Theta g^{-}(\theta)\, d\cP(\theta) = \infty$, where $g^{+}(\theta)= \max\{0,g(\theta)\} $ and $g^{-}(\theta)= \max\{0, -g(\theta) \}$.

A design that maximizes \eqref{eq:bayes-d} is said to be \emph{(pseudo-)Bayesian $D$-optimal}, and may be used whether or not a Bayesian analysis will be performed (e.g.~\citet*{woods06}). Maximization of \eqref{eq:bayes-d} is equivalent to maximization of an asymptotic approximation to the Shannon information gain from prior to posterior (\citet*{chaloner1995bayesian}).

In nonlinear problems, a \emph{singular parameter vector} is a $\theta$ such that $M(\xi;\bm{\theta})$ has determinant zero for \emph{any} design $\xi \in \Xi$. 
For such $\bm{\theta}$, it is difficult to estimate the parameters no matter which design is used, often because of a lack of model  identifiability (see Section \ref{sec:logistic}). 
In this situation,  the local $D$-optimality criterion fails to select a design.  The analogue of a singular parameter vector for Bayesian $D$-optimality is defined through: 

\begin{enumerate}[label={(\alph*)}]
\itemsep 0pt \parskip 0pt \parsep 0pt
\item Given  $\xi \in \Xi$ and a prior distribution, $\cP$, we say that $\xi$  is a \emph{Bayesian  singular design  with respect to $\cP$} if $\phi(\xi; \cP) = -\infty$.

\item {Given a prior distribution, $\cP$, we say that $\cP$ is a \emph{singular prior distribution} if \emph{all} $\xi \in \Xi$ are Bayesian singular with respect to $\cP$, or equivalently if the geometric mean of $|M(\xi; \theta)|$ under $\cP$ is zero for all $\xi\in\Xi$. Above, the geometric mean of a non-negative random variable $X$ is defined as $\E^\mathcal{G}(X)= \exp[ \E \{ \log(X)\}]$, with $\E^\mathcal{G}(X) = 0$ if $\E\log(X)=-\infty$ (\citet{feng2015generalized}).}
\end{enumerate}

{For a singular prior distribution $\cP$, Bayesian $D$-optimality cannot be used to select a design, since all designs have the same objective function value $\phi(\xi; \cP)=-\infty$.}
In many models, such as the exponential decay model and logistic regression, it is straightforward to detect singular parameter vectors, $\theta$, by inspection of the information matrix. However, as shown below, it is more difficult to detect whether $\Pri$ is a singular prior distribution, except in the case of point priors.

A different, but related, problem is the presence of  ill-conditioned information matrices in a quadrature approximation  to \eqref{eq:bayes-d}. {For several models, this is likely to occur for a heavy-tailed prior distribution $\cP$, even if $\cP$ is theoretically non-singular.} 
Such ill-conditioning causes failure of numerical selection of Bayesian $D$-optimal designs.
  
In this paper, we clarify and extend the set of prior distributions for which Bayesian $D$-optimal design is feasible for three important classes of models. In Sections \ref{sec:exp}, \ref{sec:compart}, and \ref{sec:logistic}, respectively, we give  examples of singular prior distributions for the one-factor exponential decay model, a three-parameter compartmental model, and {several} multi-factor {generalized linear models.} In Section \ref{sec:logistic} the default weakly informative prior proposed for logistic regression by \citet*{gelman2008weakly} is shown to be singular. For the exponential and {generalized linear} models,  sufficient conditions for a prior distribution to be non-singular are established. These conditions are easily checked to determine if the Bayesian $D$-optimality criterion can be used to select designs under $\cP$. In Section \ref{sec:illcond}, novel methods are developed that enable the selection of highly Bayesian $D$-efficient designs for logistic regression when the  quadrature approximation to \eqref{eq:bayes-d} is ill-conditioned, {thereby facilitating design for heavy-tailed prior distributions.} Finally, in Section \ref{sec:discussion} we discuss alternative approaches to finding efficient designs when $\cP$ is a singular prior distribution.

\mychapter{Singularity of prior distributions for {some} standard models}

\mysection{Exponential decay model}
\label{sec:exp}

We derive necessary and sufficient conditions for a prior distribution to be singular for the exponential decay model  which is used, for example, to model the concentration of a chemical compound over time. {This model is commonly used as a simple illustrative example of a nonlinear model in the optimal design of experiments literature, e.g. \citet{dette1997bayesian,atkinson2003horwitz}; the results here help to develop our intuition.}
Here, two parameterizations are considered: by rate, $\beta>0$, and by `lifetime', $\theta = 1/\beta>0$. For the former, the model for the response, $y$, in terms of explanatory variable, $x>0$, is 
\begin{alignat*}{3}
y_i =
e^{-\beta x_i}  + \epsilon_i 
	\,,& \hspace{0.5cm} 
		\epsilon_i  \sim N(0,\sigma^2)\,, 
\end{alignat*}
where $i=1,\ldots, n$, $x_i\geq0$, and $\sigma >0$.

Assume that $\Xi = \mathcal{X}^n$, where $\mathcal{X}=[0,\infty)$. Then design $\xi = ( x_1,\ldots, x_n ) \in \Xi$ has information matrix
\begin{align*}
M_\beta (\xi ; \beta) = 
\sum_{i=1}^{n} 
	x_i^2  e^{-2\beta x_i } \,.
\end{align*}

Suppose that at least one $x_i >0$ and let $S_{xx}= \sum_{i=1}^{n}{x_i^2}$. Then
\begin{equation}
-2\beta \max_{i=1,\ldots,n}\{ x_i\} 
\leq 
\log|M_\beta (\xi ; \beta) | -  \log S_{xx}\leq 
-2 \beta \min_{i \,:\, x_i > 0}\{x_i  \} \,. 
\label{eq:exp-inequal}
\end{equation}
By taking expectations, the following result is obtained.
\begin{proposition}
Suppose that at least one $x_i>0$. Then, for the $\beta$-parameterization, $\phi(\xi;\Pri)>-\infty$ if and only if $\E_\Pri( \beta ) < \infty$. 
 \end{proposition}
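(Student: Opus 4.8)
The plan is to exploit the linear sandwich \eqref{eq:exp-inequal} by analysing the positive and negative parts of the integrand separately. Since the model has a single parameter, $|M_\beta(\xi;\beta)| = M_\beta(\xi;\beta) > 0$ whenever some $x_i > 0$, so writing $g(\beta) = \log|M_\beta(\xi;\beta)|$ gives $\phi(\xi;\Pri) = \int_\Theta g(\beta)\,d\Pri(\beta)$. By the measure-theoretic convention recorded above, this equals $\int g^+\,d\Pri - \int g^-\,d\Pri$, and $\phi(\xi;\Pri) = -\infty$ holds precisely when $\int g^+\,d\Pri < \infty$ and $\int g^-\,d\Pri = \infty$. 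I would set $x_{\max} = \max_i x_i$ and $x_{\min} = \min_{i:x_i>0} x_i$; both are finite and strictly positive because at least one $x_i > 0$, and $S_{xx} > 0$ so $\log S_{xx}$ is finite.

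First I would dispose of the positive part. Rearranging the right-hand inequality in \eqref{eq:exp-inequal} gives $g(\beta) \le \log S_{xx} - 2\beta x_{\min} \le \log S_{xx}$ for every $\beta > 0$, so $g^+(\beta) \le \max\{0,\log S_{xx}\}$ is bounded by a constant. Since $\Pri$ is a probability measure, $\int g^+\,d\Pri < \infty$ for \emph{every} prior. Consequently $\phi(\xi;\Pri) > -\infty$ if and only if $\int g^-\,d\Pri < \infty$, and the whole proposition reduces to showing that $\int g^-\,d\Pri < \infty$ if and only if $\E_\Pri(\beta) < \infty$.

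For the negative part, the two inequalities in \eqref{eq:exp-inequal} sandwich it: with $h(\beta) = \max\{0,\, 2\beta x_{\min} - \log S_{xx}\}$ and $H(\beta) = \max\{0,\, 2\beta x_{\max} - \log S_{xx}\}$, one has $h(\beta) \le g^-(\beta) \le H(\beta)$ for all $\beta > 0$, whence $\int h\,d\Pri \le \int g^-\,d\Pri \le \int H\,d\Pri$ by monotonicity for nonnegative functions. The key observation is that each of $h$ and $H$ differs from a fixed positive multiple of $\beta$ by a uniformly bounded amount: for all $\beta > 0$ one checks $|h(\beta) - 2x_{\min}\beta| \le |\log S_{xx}|$ and $|H(\beta) - 2x_{\max}\beta| \le |\log S_{xx}|$. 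Integrating these two-sided bounds (and using $x_{\min}, x_{\max} > 0$) shows that $\int h\,d\Pri$ and $\int H\,d\Pri$ are each finite exactly when $\E_\Pri(\beta) < \infty$. The upper bound then yields the ``if'' direction, $\E_\Pri(\beta) < \infty \Rightarrow \int g^-\,d\Pri \le \int H\,d\Pri < \infty$, and the lower bound yields the ``only if'' direction in contrapositive form, $\E_\Pri(\beta) = \infty \Rightarrow \int g^-\,d\Pri \ge \int h\,d\Pri = \infty$.

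The only delicate point is the bookkeeping around the constant $\log S_{xx}$, which may have either sign. I would make explicit that the truncation inherent in the $\max\{0,\cdot\}$ contributes only a bounded quantity, since on the region $\{\beta \le \log S_{xx}/(2x_{\min})\}$ the variable $\beta$ is confined to a bounded interval; this is exactly what the uniform bound $|h(\beta) - 2x_{\min}\beta| \le |\log S_{xx}|$ encodes, and it lets the additive constant be absorbed without affecting whether the integral is finite. Beyond this, every step is monotonicity of the integral for nonnegative measurable functions applied to the given linear sandwich, so no substantive obstacle is anticipated.
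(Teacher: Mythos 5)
Your proof is correct and takes essentially the same route as the paper: the paper's entire argument is ``take expectations across the sandwich inequality \eqref{eq:exp-inequal}'', which is exactly what you do, with your positive/negative-part decomposition simply making the measure-theoretic bookkeeping explicit. The extra care (bounded $g^{+}$, the sandwich $h \le g^{-} \le H$, absorbing the constant $\log S_{xx}$) is a faithful elaboration of the paper's one-line proof rather than a different method.
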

 
{Here the prior, $\cP$, is non-singular provided the rate parameter has finite expectation, but $\cP$ can be singular if the distribution of $\beta$ is heavy-tailed with infinite mean, e.g.~if $\beta$ is half-Cauchy (cf.~\citet{polson2012half}).}
  
 For the $\theta$-parameterization,  a change-of-variable argument shows that 
 \begin{equation}
 \log |M_\theta(\xi; \theta)| = \log |M_\beta(\xi; \beta)| - 4\log\theta \,.
 \label{eq:exp-reparam}
 \end{equation} This enables derivation of the following result; for proof see the supplementary material.
\begin{proposition}
For the $\theta$-parameterization, the prior distribution $\cP$ is singular if and only if either 
$\E_{\cP} ( 1/\theta ) = \infty$ or $\E_{\cP} ( \log\theta )=\infty$. 
\label{prop:exponential}
\end{proposition}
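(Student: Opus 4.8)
The plan is to work directly from the reparameterization identity \eqref{eq:exp-reparam} together with the two-sided bound \eqref{eq:exp-inequal}, translating them into a bracketing of the integrand $g(\theta):=\log|M_\theta(\xi;\theta)|$. Writing $c_1=\max_i x_i$ and $c_2=\min_{i:x_i>0}x_i$, both strictly positive once some $x_i>0$, I would substitute $\beta=1/\theta$ into \eqref{eq:exp-inequal} and subtract $4\log\theta$ as in \eqref{eq:exp-reparam} to obtain
\begin{equation*}
\log S_{xx}-\frac{2c_1}{\theta}-4\log\theta \;\le\; g(\theta)\;\le\;\log S_{xx}-\frac{2c_2}{\theta}-4\log\theta .
\end{equation*}
The value of this display is that $g$ is sandwiched between two explicit functions whose divergences at the two ends of $(0,\infty)$ are governed precisely by $1/\theta$ as $\theta\to0$ and by $\log\theta$ as $\theta\to\infty$; the whole argument is then an end-by-end tail analysis.

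First I would dispose of the positive part. The upper bound $U(\theta):=\log S_{xx}-2c_2/\theta-4\log\theta$ is smooth on $(0,\infty)$ with a single interior maximum (at $\theta=c_2/2$, since $U'(\theta)=(2c_2-4\theta)/\theta^2$), and tends to $-\infty$ at both ends, so it is bounded above by a finite constant $U_{\max}$. As $\cP$ is a probability measure, $\int_\Theta g^{+}\,d\cP \le \max\{0,U_{\max}\}<\infty$ for every $\cP$. By the convention recalled in the Introduction, $\phi(\xi;\cP)=-\infty$ is therefore equivalent to $\int_\Theta g^{-}\,d\cP=\infty$, and it remains only to characterize divergence of this single integral.

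Next I would analyse $g^{-}$ using the sandwich. On a sufficiently small interval $(0,\delta)$ the term $2c_2/\theta$ dominates $4\log\theta$, so $-g(\theta)\ge c_2/\theta>0$ there; since the complementary region contributes at most $1/\delta$ to $\E_\cP(1/\theta)$, we get $\int_\Theta g^{-}\,d\cP=\infty$ whenever $\E_\cP(1/\theta)=\infty$. On $(\Delta,\infty)$ the term $4\log\theta$ dominates and $-g(\theta)\ge 2\log\theta>0$, giving $\int_\Theta g^{-}\,d\cP=\infty$ whenever $\E_\cP\{(\log\theta)^{+}\}=\infty$. Conversely, the companion bound yields $g^{-}(\theta)\le|\log S_{xx}|+2c_1/\theta+4|\log\theta|$ pointwise, hence $\int_\Theta g^{-}\,d\cP\le|\log S_{xx}|+2c_1\,\E_\cP(1/\theta)+4\,\E_\cP|\log\theta|$, which is finite when $\E_\cP(1/\theta)<\infty$ and $\E_\cP\{(\log\theta)^{+}\}<\infty$ (the first of these also forcing $\E_\cP\{(\log\theta)^{-}\}<\infty$, as explained below). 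This establishes that $\int_\Theta g^{-}\,d\cP=\infty$ if and only if $\E_\cP(1/\theta)=\infty$ or $\E_\cP\{(\log\theta)^{+}\}=\infty$.

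The remaining step, which I expect to be the main obstacle, is the bookkeeping needed to reconcile the upper-tail condition $\E_\cP\{(\log\theta)^{+}\}=\infty$ with the stated two-sided condition $\E_\cP(\log\theta)=\infty$, and more generally to keep track of the opposing signs of $2c/\theta$ and $4\log\theta$ near $\theta=0$. The key elementary fact is that $1/\theta>-\log\theta$ for all $\theta\in(0,1)$, immediate from the monotonicity of $\theta\mapsto 1/\theta+\log\theta$ on $(0,1)$; consequently $\E_\cP(1/\theta)<\infty$ forces $\E_\cP\{(\log\theta)^{-}\}<\infty$. Thus, when the first disjunct $\E_\cP(1/\theta)=\infty$ fails we have $\E_\cP(\log\theta)=\infty\iff\E_\cP\{(\log\theta)^{+}\}=\infty$, while when it holds the first disjunct already delivers singularity (and indeed $\E_\cP\{(\log\theta)^{-}\}=\infty$ itself implies $\E_\cP(1/\theta)=\infty$, so no indeterminate case is lost). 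Combining the two disjuncts therefore gives exactly ``$\E_\cP(1/\theta)=\infty$ or $\E_\cP(\log\theta)=\infty$'', completing the argument.
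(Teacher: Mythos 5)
Your proof is correct and follows essentially the same route as the paper's: both arguments combine \eqref{eq:exp-inequal} with \eqref{eq:exp-reparam} to sandwich $\log|M_\theta(\xi;\theta)|$, and then perform an end-by-end tail analysis in which the $1/\theta$ term dominates near $\theta=0$ (after splitting at a small $\delta$) and the $\log\theta$ term dominates as $\theta\to\infty$. Your preliminary reduction to the single question $\int g^-\,d\cP=\infty$ (via the uniform upper bound on $g^+$), and the explicit bookkeeping reconciling $\E_\cP\{(\log\theta)^+\}=\infty$ with $\E_\cP(\log\theta)=\infty$, are presentational refinements of the same argument rather than a different method.
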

In the context of designs that maximize $\phi(\xi;\cP)$ for nonlinear models, \citet{chaloner1995bayesian} refer to potential `technical problems using prior distributions with unbounded support where [\ldots] $M(\xi;\theta)$ may be arbitrarily close to being singular'.  Corollary \ref{cor:exp-unif} below shows that, even with bounded support, seemingly innocuous prior distributions can cause Bayesian $D$-optimality to fail as a design selection criterion.
\begin{corollary}
\label{cor:exp-unif}
For the $\theta$-parameterization, the prior distribution $\Pri = U(0,a)$, $a>0$, is  singular.
\end{corollary}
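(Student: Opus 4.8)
The plan is to apply Proposition \ref{prop:exponential} directly, so that the claim reduces to an elementary moment check. By that result, $\cP$ is singular precisely when $\E_{\cP}(1/\theta) = \infty$ or $\E_{\cP}(\log\theta) = \infty$; hence it suffices to verify that at least one of these two quantities diverges when $\theta \sim U(0,a)$.

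First I would dispose of the logarithmic condition, which I expect to fail. On the bounded support $(0,a)$ the function $\log\theta$ is bounded above by $\log a$, so its positive part is integrable and $\E_{\cP}(\log\theta)$ cannot equal $+\infty$; indeed $a^{-1}\int_0^a \log\theta\,d\theta = \log a - 1$ is finite. The decisive quantity is therefore $\E_{\cP}(1/\theta)$. With density $a^{-1}$ on $(0,a)$, this is $a^{-1}\int_0^a \theta^{-1}\,d\theta$, and the integrand $\theta^{-1}$ has a non-integrable singularity at the origin, so the integral diverges to $+\infty$ in the sense of the measure-theoretic convention recorded in the Introduction. Since $\E_{\cP}(1/\theta) = \infty$, Proposition \ref{prop:exponential} immediately yields singularity of $\cP$.

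There is essentially no obstacle to overcome here: the only point meriting a word of care is to confirm that the divergence is located at $\theta \to 0^+$, that is, at small lifetimes (equivalently large rates $\beta$), so that the uniform prior places sufficient mass near vanishing $\theta$ to drive the prior geometric mean of $|M_\theta(\xi;\theta)|$ to zero. This matches the interpretation of \eqref{eq:exp-reparam}, in which the reparameterization introduces the $-4\log\theta$ term whose expectation, though finite, is dominated by the $1/\theta$ contribution coming from the $\log|M_\beta(\xi;\beta)|$ term via the lower bound in \eqref{eq:exp-inequal}.
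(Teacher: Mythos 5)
Your proof is correct and is exactly the derivation the paper intends: the corollary is presented as an immediate consequence of Proposition \ref{prop:exponential}, with the decisive computation being $\E_{\cP}(1/\theta) = a^{-1}\int_0^a \theta^{-1}\,d\theta = \infty$, which is what you verify. Your additional check that $\E_{\cP}(\log\theta)$ is finite is not needed, since the proposition only requires one of the two divergence conditions to hold, but it is harmless.
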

{Note that under the prior for $\theta$ in Corollary \ref{cor:exp-unif}, the corresponding implied prior distribution for $\beta$ has a proper density, $p(\beta) = 1/(a\beta^2)$ for $\beta \geq 1/a$. However, this implied distribution for $\beta$ has unbounded support and is heavy tailed, such that $\E(\beta) = \infty$. In other words, the implied a priori expectation is that the decay is very rapid.}
\\[0em]
\mysection{Compartmental model}
\label{sec:compart}

In this section, we derive sufficient conditions for a prior distribution to be  singular for the following three-parameter compartmental model:\begin{equation}
y_i =  \theta_3 \{ e^{-\theta_1 x_i} - e^{-\theta_2 x_i} \}  + \epsilon_i\,,
\qquad \epsilon_i \sim N(0,\sigma^2) \,,
\label{eq:compart-standard}
\end{equation}
where $x_i \geq 0$, $i=1,\ldots,n$, $\theta_2 > \theta_1>0$, $\theta_3 >0$ and $\sigma>0$. Here, $\Xi = [0,\infty)^n$. 
As with the exponential model, often the response $y_i$ is a concentration of a compound in a system, and the $x_i$ are the observation times.
{For example, \citet{atk93} consider a theophylline kinetics experiment on horses, finding optimal sampling times for model \eqref{eq:compart-standard} under several different (pseudo-)Bayesian criteria.}

The information matrix for the $i$th time point is
$$
M(x_i ; \theta) = \begin{pmatrix}
	x^2_i \theta^2_3 e^{-2\theta_1 x_i} 
		&	-x^2_i \theta^2_3 e^{-(\theta_1+\theta_2)x_i}
		&	-f_i x_i  e^{-\theta_1 x_i}	\\
	-x^2_i \theta^2_3 e^{-(\theta_1+\theta_2)x_i }   
		&      x^2_i \theta^2_3 e^{-2\theta_2 x_i}
		& 	f_i x_i  e^{-\theta_2 x_i}	\\
	-f_i x_i  e^{-\theta_1 x_i}  
		&	f_i x_i e^{-\theta_2 x_i} 
		&	f^2_i / \theta^2_3
\end{pmatrix} \,,
$$
where $f_i = \theta_3 \{ e^{-\theta_1 x_{i}} - e^{-\theta_2 x_{i}} \}$. We have $|M(\xi;\theta)|=0$ when { (i) }$\theta_1=\theta_2$ or  { (ii) }$\theta_3=0$, and  $|M(\xi; \theta)|\to 0$ when { (iii) }$\theta_1 \to \infty$. {Physically,  conditions (i) and (iii) correspond to situations where the flow rates in and out of the compartment are either exactly balanced, or both very rapid. Each of the potentially very different parameter scenarios in (i)--(iii) results in a similar response profile, in which the concentration is close to zero throughout the duration of the experiment. Thus, if such a profile is observed, it is difficult to ascertain which values of the parameters generated the data. }

From the above, it is clear that for $\cP$ to be a non-singular prior distribution { its probability density must not be too highly concentrated near regions where $\theta_2=\theta_1$ or $\theta_3=0$, nor can the prior for $\theta_1$ be too heavy-tailed.}  This is formalized by Proposition \ref{thm:compart-model}, for which the following two lemmas are required; proofs are given in the supplementary material. Let $\delta = \theta_2 - \theta_1 > 0$.

\begin{lemma}
\label{lemma:compart-bounds}
We have the following bounds on $\log|M(\xi;\theta)|$, 
\begin{equation*}
- 6\theta_1 x_{\max}
\leq
	\log |M(\xi;\theta)| - 4 \log \theta_3
	- \log |\tilde{M}_{\delta,1}|
\leq
-6\theta_1 x_{\min} \,,
\end{equation*}
where  
$\tilde{M}_{\delta,1}$ is 
$\tilde{M}_{\delta,\theta_3} 
$ evaluated at $\theta_3=1$, and
\begin{align*}
&\tilde{M}_{\delta,\theta_3} = \sum_{i=1}^{n} \tilde{M}_{\delta,\theta_3}^{(i)} \,,
\quad
x_{\min}  = 
	\min_{i \,:\,x_i>0}\{ x_i\} \,, \quad 
x_{\max} =
	\max_{i=1,\ldots,n}\{ x_i   \}  \,, 
\\
\tilde{M}_{\delta, \theta_3}^{(i)} 
&=
\begin{pmatrix}
 	x^2_i \theta_3^2
		&	-x^2_i \theta^2_3 e^{-\delta x_i}
		&	-x_i \theta_3 (1-e^{-\delta x_i}) 	\\
	- x^2_i \theta_3^2 e^{-\delta x_i}
		& 	x^2_i \theta^2_3 e^{-2\delta x_i}
		&	x_i \theta_3 e^{-\delta x_i} (1-e^{-\delta x_i}) \\
	- x_i \theta_3 (1-e^{-\delta x_i}) 
		&	x_i \theta_3 e^{-\delta x_i} (1-e^{-\delta x_i})
		& 	(1-e^{-\delta x_i})^2
\end{pmatrix} \,.
\end{align*}

\end{lemma}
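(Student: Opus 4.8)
The plan is to reduce $|M(\xi;\theta)|$ to $|\tilde{M}_{\delta,1}|$ by extracting a common scalar from each single-point contribution and then invoking monotonicity of the determinant on positive semidefinite matrices. First I would set $\theta_2 = \theta_1 + \delta$ and substitute into $M(x_i;\theta)$ and $f_i$; a routine entry-by-entry inspection then shows that every entry of $M(x_i;\theta)$ carries the common factor $e^{-2\theta_1 x_i}$, giving the factorization
\[
M(x_i;\theta) = e^{-2\theta_1 x_i}\,\tilde{M}_{\delta,\theta_3}^{(i)} .
\]
Since each entry of $\tilde{M}_{\delta,\theta_3}^{(i)}$ contains a factor $x_i$ or $(1-e^{-\delta x_i})$, the contribution vanishes at $x_i = 0$, so only the points with $x_i > 0$ matter and $\sum_i \tilde{M}_{\delta,\theta_3}^{(i)} = \tilde{M}_{\delta,\theta_3}$.

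Next I would exploit that each single-point contribution $M(x_i;\theta)$ is rank one and positive semidefinite, being the outer product of the gradient of the mean response with itself, so that every $\tilde{M}_{\delta,\theta_3}^{(i)}$ is positive semidefinite. Writing $M(\xi;\theta) = \sum_i w_i\,\tilde{M}_{\delta,\theta_3}^{(i)}$ with weights $w_i = e^{-2\theta_1 x_i}$ bounded, over the contributing points, by $e^{-2\theta_1 x_{\max}} \le w_i \le e^{-2\theta_1 x_{\min}}$, replacing each weight by either bound and summing yields, in the Loewner order,
\[
e^{-2\theta_1 x_{\max}}\,\tilde{M}_{\delta,\theta_3} \preceq M(\xi;\theta) \preceq e^{-2\theta_1 x_{\min}}\,\tilde{M}_{\delta,\theta_3} .
\]
Applying the standard fact that $0 \preceq A \preceq B$ implies $|A| \le |B|$ (from Weyl monotonicity of the eigenvalues), together with $|cB| = c^3 |B|$ for $3 \times 3$ matrices, gives
\[
e^{-6\theta_1 x_{\max}}\,|\tilde{M}_{\delta,\theta_3}| \le |M(\xi;\theta)| \le e^{-6\theta_1 x_{\min}}\,|\tilde{M}_{\delta,\theta_3}| .
\]

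To isolate $\theta_3$, I would observe that with $D = \operatorname{diag}(\theta_3,\theta_3,1)$ each summand satisfies $\tilde{M}_{\delta,\theta_3}^{(i)} = D\,\tilde{M}_{\delta,1}^{(i)}\,D$, whence $\tilde{M}_{\delta,\theta_3} = D\,\tilde{M}_{\delta,1}\,D$ and therefore $|\tilde{M}_{\delta,\theta_3}| = |D|^2\,|\tilde{M}_{\delta,1}| = \theta_3^4\,|\tilde{M}_{\delta,1}|$. Taking logarithms in the last display, substituting $\log|\tilde{M}_{\delta,\theta_3}| = 4\log\theta_3 + \log|\tilde{M}_{\delta,1}|$, and then subtracting $4\log\theta_3 + \log|\tilde{M}_{\delta,1}|$ throughout produces exactly the claimed two-sided bound.

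The one step that is not purely mechanical is the determinant sandwich: because the determinant of a sum does not factor, the weights $w_i$ cannot simply be pulled out as a single scalar, and the argument depends essentially on the positive semidefiniteness of the $\tilde{M}_{\delta,\theta_3}^{(i)}$ and on monotonicity of the determinant under the Loewner order. The factorization of $M(x_i;\theta)$ and the $\theta_3$-rescaling are then routine. In the degenerate case $|\tilde{M}_{\delta,1}| = 0$ all three quantities are $-\infty$ and the inequalities hold trivially in the extended reals.
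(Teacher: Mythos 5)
Your proposal is correct and follows essentially the same route as the paper: factor out $e^{-2\theta_1 x_i}$ from each single-point matrix, sandwich $M(\xi;\theta)$ between $e^{-2\theta_1 x_{\max}}\tilde{M}_{\delta,\theta_3}$ and $e^{-2\theta_1 x_{\min}}\tilde{M}_{\delta,\theta_3}$ in the Loewner order (handling $x_i=0$ points as vanishing contributions, just as the paper does via its case distinction), take log-determinants, and use $|\tilde{M}_{\delta,\theta_3}|=\theta_3^4|\tilde{M}_{\delta,1}|$. Your explicit justifications --- positive semidefiniteness of each rank-one summand and the congruence $\tilde{M}_{\delta,\theta_3}=D\tilde{M}_{\delta,1}D$ with $D=\operatorname{diag}(\theta_3,\theta_3,1)$ --- are details the paper leaves implicit, not a different argument.
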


\begin{lemma}
\label{lemma:compart-mtilde}
If   $\int_{\delta < 1} \log \delta \,d\cP(\theta)= -\infty$,  
 then $\E_\cP ( \log|\tilde{M}_{\delta,1}| ) = -\infty$.
\end{lemma}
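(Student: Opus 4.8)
The plan is to reduce everything to the order of $|\tilde{M}_{\delta,1}|$ as $\delta\to0$ and then feed the resulting bound into the paper's definition of a $-\infty$ integral. The key structural observation is that each summand $\tilde{M}_{\delta,1}^{(i)}$ is rank one: writing it as $g_i g_i^{\T}$ with $g_i=(x_i,\,-x_i e^{-\delta x_i},\,-(1-e^{-\delta x_i}))^{\T}$, we have $\tilde{M}_{\delta,1}=GG^{\T}$, where $G$ is the $3\times n$ matrix with columns $g_1,\dots,g_n$. By the Cauchy--Binet formula, $|\tilde{M}_{\delta,1}|=\sum_{i<j<k}D_{ijk}^2$, where $D_{ijk}$ is the $3\times3$ minor of $G$ on columns $i,j,k$. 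This converts the determinant of a sum into a sum of squared minors and makes the order in $\delta$ transparent.

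The crux is to show that each minor vanishes to order $\delta^4$, so that $|\tilde{M}_{\delta,1}|=O(\delta^8)$ near $\delta=0$. I would first split each column as $g_i=x_i(1,-1,0)^{\T}+(1-e^{-\delta x_i})(0,x_i,-1)^{\T}$ and expand $D_{ijk}$ by multilinearity. Since any two copies of $(1,-1,0)^{\T}$ are parallel and the three vectors $(0,x_m,-1)^{\T}$ all lie in the plane $\{v_1=0\}$, only the terms carrying exactly one factor $(1,-1,0)^{\T}$ survive, yielding the closed form $D_{ijk}=x_i w_j w_k(x_k-x_j)+x_j w_k w_i(x_i-x_k)+x_k w_i w_j(x_j-x_i)$ with $w_m=1-e^{-\delta x_m}$. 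The main obstacle is the delicate cancellation here: since $w_m=\delta x_m+O(\delta^2)$, this expression is naively $O(\delta^2)$, but a Taylor expansion shows the $\delta^2$ and $\delta^3$ coefficients vanish identically, and the $\delta^4$ coefficient collapses, via the alternating-sum identity $x_i^2(x_k-x_j)+x_j^2(x_i-x_k)+x_k^2(x_j-x_i)=(x_j-x_i)(x_k-x_i)(x_k-x_j)$, to $\tfrac1{12}x_ix_jx_k(x_j-x_i)(x_k-x_i)(x_k-x_j)$. Hence $D_{ijk}=O(\delta^4)$ for every triple, so $|\tilde{M}_{\delta,1}|\le C\delta^8$ on some $(0,\delta_0)$, with $C$ depending only on the design. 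Only this upper bound is needed: the exact coefficient and the positivity of the leading term play no role, and the order-$\delta^4$ vanishing is a polynomial identity in the $x_i$ valid even for degenerate designs (which only make $|\tilde{M}_{\delta,1}|$ smaller).

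With the bound in hand, I would verify the two conditions in the paper's definition of $\int g\,d\cP=-\infty$ applied to $g=\log|\tilde{M}_{\delta,1}|$. For the positive part, the entries of $G$ are uniformly bounded in $\delta$, so $|\tilde{M}_{\delta,1}|$ is bounded above and $(\log|\tilde{M}_{\delta,1}|)^{+}$ is bounded, hence integrable against the probability measure $\cP$. For the negative part, shrinking $\delta_0$ so that $8\log\delta+\log C<0$ on $(0,\delta_0)$, the bound gives $(\log|\tilde{M}_{\delta,1}|)^{-}\ge-8\log\delta-\log C$ on $\{\delta<\delta_0\}$. Integrating and using $\int_{\delta<\delta_0}\log\delta\,d\cP(\theta)=-\infty$ — which follows from the hypothesis $\int_{\delta<1}\log\delta\,d\cP(\theta)=-\infty$ because $\log\delta$ is bounded on $\{\delta_0\le\delta<1\}$, a set of finite measure — shows the negative part integrates to $+\infty$. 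Together these give $\E_\cP(\log|\tilde{M}_{\delta,1}|)=-\infty$, as required.
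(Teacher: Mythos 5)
Your proof is correct, but it reaches the crucial estimate by a genuinely different route from the paper. The paper works with $g_\xi(\delta)=|\tilde{M}_{\delta,1}|$ as a single scalar function: a separate auxiliary result (Lemma \ref{lemma:compart-derivs}, whose first part is verified by symbolic computation in Mathematica) shows $g_\xi^{(k)}(0)=0$ for $k=1,\ldots,7$ and $g_\xi^{(8)}(0)>0$ whenever the design has at least three distinct positive $x_i$; Taylor's theorem then yields the two-sided approximation $g_\xi(\delta)=(\kappa/2)\delta^8+O(\delta^9)$ with $\kappa>0$, a logarithmic comparison shows that $\int_{\delta<\epsilon}\log g_\xi\,d\cP(\theta)$ and $\int_{\delta<\epsilon}\log\delta\,d\cP(\theta)$ diverge together, and designs with fewer than three distinct positive points are dispatched separately as a rank-deficient case. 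You instead factor $\tilde{M}_{\delta,1}=GG^{\T}$, invoke Cauchy--Binet, and expand each $3\times 3$ minor by multilinearity; the closed form for $D_{ijk}$, the cancellations at orders $\delta^2$ and $\delta^3$, and the $\tfrac{1}{12}x_ix_jx_k(x_j-x_i)(x_k-x_i)(x_k-x_j)$ coefficient at order $\delta^4$ all check out, giving the one-sided bound $|\tilde{M}_{\delta,1}|\le C\delta^8$, which is all this implication needs. Your route buys three things: it is hand-verifiable (no computer algebra); it never requires positivity of the leading coefficient, so you never take logarithms of the approximant and need no analogue of the paper's Lemma \ref{lemma:compart-derivs}(ii); and it absorbs degenerate designs uniformly, since the order-$\delta^4$ vanishing is a polynomial identity in the $x_i$ and degeneracy only shrinks the determinant. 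What it gives up is the sharp two-sided rate $g_\xi(\delta)\asymp\delta^8$ that the paper's argument establishes for non-degenerate designs --- information the paper's proof generates but the lemma as stated never uses. Indeed, squaring your $\delta^4$ coefficients and summing over triples would recover the positivity of $g_\xi^{(8)}(0)$, so your computation essentially subsumes the paper's auxiliary lemma. Your measure-theoretic endgame (bounded positive part, negative part forced to $+\infty$ on $\{\delta<\delta_0\}$, and the reduction from $\{\delta<1\}$ to $\{\delta<\delta_0\}$) coincides with the paper's.
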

\begin{proposition}
\label{thm:compart-model}
Suppose $\int_{\theta_3>1} \log \theta_3 \,d\cP (\theta)< \infty$. For model \eqref{eq:compart-standard}, the prior parameter distribution $\Pri$ is singular if  $\E_{\cP}( \theta_1 ) = \infty$, $\int_{\theta_3<1} \log \theta_3 \,d\cP (\theta)=-\infty $, or $\int_{\delta<1} \log \delta\, d\cP (\theta) =-\infty$.
\end{proposition}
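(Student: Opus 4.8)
The plan is to show that $\phi(\xi;\cP)=-\infty$ for \emph{every} $\xi\in\Xi$, which by definition (b) makes $\cP$ singular. Fix a design $\xi$. If all its support points are zero then $|M(\xi;\theta)|=0$ identically and $\phi(\xi;\cP)=-\infty$ trivially, so I may assume at least one $x_i>0$, so that $x_{\min}$ and $x_{\max}$ are finite and positive and Lemma \ref{lemma:compart-bounds} applies. The entire argument will rest on only the \emph{upper} inequality in that lemma, namely
\begin{equation*}
\log|M(\xi;\theta)| \le R(\theta), \qquad R(\theta) := 4\log\theta_3 + \log|\tilde{M}_{\delta,1}| - 6\theta_1 x_{\min}.
\end{equation*}
Since $\log|M(\xi;\theta)|\le R(\theta)$ pointwise, its positive part is dominated by $R^{+}$ and its negative part dominates $R^{-}$; hence it suffices to prove $\int_\Theta R\,d\cP=-\infty$ in the sense of the measure-theoretic convention recorded in the Introduction.

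First I would check that $R$ is integrable from above, i.e. $\int_\Theta R^{+}\,d\cP<\infty$, by treating the three summands $4\log\theta_3$, $\log|\tilde{M}_{\delta,1}|$, and $-6\theta_1 x_{\min}$ separately. The positive part of $4\log\theta_3$ is supported on $\{\theta_3>1\}$ and integrates to $4\int_{\theta_3>1}\log\theta_3\,d\cP<\infty$ by the standing hypothesis; the term $-6\theta_1 x_{\min}$ is nonpositive and so contributes nothing; and for the middle term I would verify that $\log|\tilde{M}_{\delta,1}|$ is bounded above by a design-dependent constant, using that every entry of $\tilde{M}_{\delta,1}$ is a bounded function of $\delta$ (since $e^{-\delta x_i}\le1$ and $1-e^{-\delta x_i}\le1$), whence its determinant, a polynomial in those entries, is bounded above. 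Consequently $\int_\Theta R^{+}\,d\cP<\infty$, so $\int_\Theta R\,d\cP$ is well defined in $[-\infty,\infty)$; moreover each of the three summands is integrable from above, so additivity yields $\int_\Theta R\,d\cP = 4\int\log\theta_3\,d\cP + \int\log|\tilde{M}_{\delta,1}|\,d\cP - 6x_{\min}\E_\cP(\theta_1)$ with no risk of an $\infty-\infty$ ambiguity.

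Next I would dispatch the three hypotheses in turn, showing that in each case exactly one summand has integral $-\infty$ while the other two remain below $+\infty$. If $\E_\cP(\theta_1)=\infty$ the last term is $-\infty$. If $\int_{\theta_3<1}\log\theta_3\,d\cP=-\infty$ then $\int\log\theta_3\,d\cP=-\infty$, its part over $\{\theta_3\ge1\}$ being finite by hypothesis. If $\int_{\delta<1}\log\delta\,d\cP=-\infty$ then Lemma \ref{lemma:compart-mtilde} gives $\E_\cP(\log|\tilde{M}_{\delta,1}|)=-\infty$. In every case the remaining two summands are bounded above in integral by the previous step, so the displayed sum is $-\infty$. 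This gives $\int_\Theta R\,d\cP=-\infty$, hence $\phi(\xi;\cP)=\int_\Theta\log|M(\xi;\theta)|\,d\cP=-\infty$, and as $\xi$ was arbitrary, $\cP$ is singular.

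The only delicate point is the measure-theoretic bookkeeping of the middle two paragraphs: the conclusion genuinely requires that \emph{none} of the three summands can be $+\infty$, which is exactly why the standing assumption $\int_{\theta_3>1}\log\theta_3\,d\cP<\infty$ and the upper boundedness of $\log|\tilde{M}_{\delta,1}|$ are indispensable. Without them one could not exclude an $\infty-\infty$ cancellation and the additivity step would fail. The remaining work, namely the entrywise bounds on $\tilde{M}_{\delta,1}$, is routine.
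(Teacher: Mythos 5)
Your proposal is correct and follows essentially the same route as the paper's own proof: both rest on the upper inequality of Lemma \ref{lemma:compart-bounds}, bound $\log|\tilde{M}_{\delta,1}|$ above by a design-dependent constant, use the standing hypothesis to control the positive part of $4\log\theta_3$ and the nonpositivity of $-6\theta_1 x_{\min}$, and invoke Lemma \ref{lemma:compart-mtilde} to convert the $\delta$-condition into $\E_\cP(\log|\tilde{M}_{\delta,1}|)=-\infty$. Your treatment is merely more explicit about the measure-theoretic bookkeeping (positive/negative parts and the exclusion of $\infty-\infty$), which the paper leaves implicit.
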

 Heavy-tailed priors such as the half-Cauchy are increasingly recommended as weakly informative priors in various models (\citet{gelman2008weakly}; \citet{polson2012half}). For model \eqref{eq:compart-standard}, $\cP$ is singular if $\theta_1$ is half-Cauchy distributed, although for physiological compartmental models more specific prior information is often used (\citet{gelman1996physiological}). 
\\[0em]
\mysection{{Generalized linear models}}
\label{sec:logistic}

Suppose there are $n$ design points  $\vec{x}_i = (x_{i1},\ldots, x_{iq})^\T \in \mathcal{X}$,  {with responses $y_i$, $i=1,\ldots,n$.} 
We assume a generalized linear model (GLM; \citet*{mccullagh-nelder}), {thus $y_i$ has an exponential family distribution with mean $\mu_i = \mu(x_i; \beta)$ and variance $\gamma v(\mu_i)$, where $\mu$ satisfies
\begin{equation}
h[ \mu(x; \beta) ]  =  \eta(x; \beta)= \vec{f}^\T(\vec{x})\bm{\beta}\,,
\label{eq:linear-predictor}
\end{equation}
with $h$ the link function, $\gamma$ a dispersion parameter, $v$ the variance function, and $\eta_i = \eta(x_i; \beta)$ the linear predictor. For binomial and Poisson responses, $\gamma=1$ with variance function $v(\mu)=\mu(1-\mu)$ and $v(\mu)=\mu$, respectively.} Above, $\vec{f}(\vec{x}) = (f_0(\vec{x}),\ldots, f_{p-1}(\vec{x}))^\T$ contains regression functions \mbox{$f_j: \mathcal{X} \to \mathbb{R}$}, $j=0,\ldots,p-1$, and $\bm{\beta}= (\beta_0,\beta_1,\ldots ,\beta_{p-1})^\T \in \Theta$ is a vector of $p$  regression parameters. We let $\mathcal{X}=[-1,1]^q$ and $\Xi = \mathcal{X}^n$.

For design $\xi = ( \vec{x}_1 ,\ldots, \vec{x}_n )$  and model \eqref{eq:linear-predictor} 
\begin{align}
M(\xi; \bbeta) &= \sum_{i=1}^{n} w_i \,\vec{f}(\vec{x}_i) \vec{f}^\T(\vec{x}_i) 
\notag \\ 
w(\eta) &= 
{ \frac{1}{\gamma v(\mu)}\left( \frac{\partial \mu}{ \partial \eta }\right)^2  \,,}
\label{eq:glm-weight}
\end{align}
{with $w_i = w(\eta_i)$, $i=1,\ldots,n$ (e.g.~\citet{khuri2006design}, \citet{atkinson2015designs}, \citet{yang2015d}).}

The following  {lemmas are important first steps towards the derivation of} results on singular prior distributions. {Lemma \ref{lemma:info-bound} also}  facilitates the development of numerical methods to overcome ill-conditioning in Section \ref{sec:illcond}. {The proofs are} straightforward; the details are omitted. Let $F$ be the model matrix with rows $\vec{f}^\T(\vec{x}_i)$, noting that 
$\sum_{i=1}^{n} 
	\vec{f}(\vec{x}_i) 
	\vec{f}^{\T}(\vec{x}_i) =F^\T F$ 
is the information matrix of $\xi$ under a linear model with regressors specified by $\vec{f}$. The inequality below is with respect to the Loewner partial ordering on real symmetric matrices, in which $M_1 \preceq M_2$ if and only if $ M_2 - M_1$ is non-negative definite {(for example, \citet[p.11]{pukelsheim1993optimal}).} 

\begin{lemma}
For {a generalized linear model,} the information matrix satisfies
$$
\min_{i=1,\ldots,n} \{ w_i \} 
F^\T F
\preceq 
M(\xi;\bbeta) 
\preceq 
\max_{i=1,\ldots,n} \{ w_i \} 
F^\T F \,.
$$
{Thus, since the log-determinant respects the Loewner ordering,
\begin{align*}
p\log \min_{i} \{ w_i  \} + \log | F^\T F | 
 \leq
  \log | M(\xi;\bbeta) | 
  \leq
  p\log \max_i \{ w_i \} + \log |F^\T F| \,.
\end{align*}}
\label{lemma:info-bound}
\end{lemma}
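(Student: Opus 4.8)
The plan is to establish the Loewner-ordering sandwich first and then deduce the log-determinant bounds from it using two standard facts about determinants.

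First I would consider the matrix differences directly. Since $M(\xi;\bbeta) = \sum_{i=1}^{n} w_i \,\vec{f}(\vec{x}_i)\vec{f}^\T(\vec{x}_i)$ and $F^\T F = \sum_{i=1}^{n} \vec{f}(\vec{x}_i)\vec{f}^\T(\vec{x}_i)$, I can write
$M(\xi;\bbeta) - \bigl(\min_j w_j\bigr) F^\T F = \sum_{i=1}^{n} \bigl(w_i - \min_j w_j\bigr)\, \vec{f}(\vec{x}_i)\vec{f}^\T(\vec{x}_i)$.
Each coefficient $w_i - \min_j w_j$ is non-negative, and each rank-one outer product $\vec{f}(\vec{x}_i)\vec{f}^\T(\vec{x}_i)$ is non-negative definite, since $a^\T \vec{f}(\vec{x}_i)\vec{f}^\T(\vec{x}_i) a = \{\vec{f}^\T(\vec{x}_i) a\}^2 \geq 0$ for every real vector $a$. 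A non-negative combination of non-negative definite matrices is non-negative definite, so the difference is $\succeq 0$, which is the lower Loewner bound. The upper bound is symmetric: $\bigl(\max_j w_j\bigr) F^\T F - M(\xi;\bbeta) = \sum_{i=1}^{n} \bigl(\max_j w_j - w_i\bigr)\, \vec{f}(\vec{x}_i)\vec{f}^\T(\vec{x}_i) \succeq 0$.

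Next, for the displayed log-determinant inequality I would invoke two elementary facts. First, the determinant is monotone on the cone of non-negative definite matrices with respect to the Loewner order: if $0 \preceq A \preceq B$ then $|A| \leq |B|$, which follows from the monotonicity of the ordered eigenvalues under the Loewner order together with their non-negativity. Applying this to the sandwich from the first step (noting that $\bigl(\min_j w_j\bigr) F^\T F$ is itself non-negative definite) gives $\bigl|\bigl(\min_j w_j\bigr) F^\T F\bigr| \leq |M(\xi;\bbeta)| \leq \bigl|\bigl(\max_j w_j\bigr) F^\T F\bigr|$. Second, for a scalar $c \geq 0$ and a $p \times p$ matrix $A$ one has $|cA| = c^{p}|A|$, so $\log|cA| = p\log c + \log|A|$. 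Substituting $c = \min_j w_j$ and $c = \max_j w_j$ and taking logarithms yields exactly the stated bounds.

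There is no substantive obstacle here — the paper rightly calls the argument straightforward — but the one point I would state cleanly is the boundary behaviour of the weights. For several of the generalized linear models treated in this section, notably logistic regression, the weight $w(\eta)$ in \eqref{eq:glm-weight} can be made arbitrarily small, so $\min_j w_j$ may equal zero; then $p\log\min_j w_j = -\infty$ and the lower bound degenerates to the trivial statement $-\infty \leq \log|M(\xi;\bbeta)|$. This is harmless for the inequality and is, in fact, precisely the feature exploited later: the lower bound records how vanishing weights can drive the log-determinant to $-\infty$, which is the mechanism behind singular prior distributions for these models.
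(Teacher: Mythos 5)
Your proof is correct and is essentially the standard argument the paper has in mind when it omits the details: the Loewner sandwich obtained by writing each difference as a non-negative combination of the rank-one matrices $\vec{f}(\vec{x}_i)\vec{f}^\T(\vec{x}_i)$, followed by determinant monotonicity on the positive semi-definite cone and the scaling identity $|cA|=c^p|A|$. One correction to your closing remark, though: a lower bound that degenerates to $-\infty$ is vacuous and cannot be what drives singularity; in the paper it is the \emph{upper} bound $p\log\max_i\{w_i\}+\log|F^\T F|$ that produces the singular-prior results (Lemma \ref{lemma:glm-sing}(ii), via $\E_\cP\{\log\max_i w_i\}=-\infty$), while the lower bound involving $\min_i w_i$ is the one used to establish non-singularity.
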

{\begin{lemma}
Suppose that $\xi$ is non-singular for the linear model with regressors given by $\vec{f}$, that is 
$
\left| F^\T F
 \right|>0 
$. Then we have the following:
\begin{enumerate}[label=\emph{(\roman*)}]
\item 
If $E_\Pri \{ \log \min_i w_i \} > -\infty$, then $\phi(\xi; \Pri) > -\infty$, i.e.~$\xi$ is Bayesian non-singular with respect to $\Pri$ under the GLM.
\item 
If $E_\Pri \{ \log \max_i w_i \} = -\infty$, then $\phi(\xi; \Pri) = -\infty$, i.e.~$\xi$ is Bayesian singular with respect to $\Pri$ under the GLM.  
\end{enumerate}
\label{lemma:glm-sing}
\end{lemma}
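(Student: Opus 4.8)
The plan is to take expectations in the Loewner sandwich of Lemma \ref{lemma:info-bound} and to read off the two conclusions from the signed-integral conventions recalled in the Introduction. Write $g(\theta)=\log|M(\xi;\bbeta)|$, so that $\phi(\xi;\cP)=\int_\Theta g\,d\cP$, and put $c=\log|F^\T F|$; the assumption that $\xi$ is non-singular for the linear model means $|F^\T F|>0$, so $c$ is a \emph{finite} constant independent of $\theta$. With the shorthand $W_{\min}=\log\min_i w_i$ and $W_{\max}=\log\max_i w_i$, Lemma \ref{lemma:info-bound} states that $pW_{\min}+c\le g\le pW_{\max}+c$ pointwise in $\theta$. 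Everything then reduces to comparing the positive and negative parts $g^{+}=\max\{0,g\}$ and $g^{-}=\max\{0,-g\}$ with those of the two affine bounds.

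For part (i) I would control only the negative part. Since $x\mapsto x^{-}$ is non-increasing and $g\ge pW_{\min}+c$, we have $g^{-}\le (pW_{\min}+c)^{-}$, and the elementary inequality $(pW_{\min}+c)^{-}\le pW_{\min}^{-}+|c|$ (valid because $p>0$) gives $\int_\Theta g^{-}\,d\cP\le p\int_\Theta W_{\min}^{-}\,d\cP+|c|$. Under the conventions of the Introduction the hypothesis $\E_\cP(W_{\min})>-\infty$ forces $\int_\Theta W_{\min}^{-}\,d\cP<\infty$, whence $\int_\Theta g^{-}\,d\cP<\infty$. By definition $\phi(\xi;\cP)=-\infty$ requires $\int_\Theta g^{-}\,d\cP=\infty$, so this case is excluded and $\phi(\xi;\cP)>-\infty$, i.e. $\xi$ is Bayesian non-singular.

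Part (ii) is the mirror image, now using the upper bound $g\le pW_{\max}+c$, which simultaneously controls both parts of $g$. Monotonicity gives $g^{+}\le (pW_{\max}+c)^{+}\le pW_{\max}^{+}+|c|$ and $g^{-}\ge (pW_{\max}+c)^{-}\ge pW_{\max}^{-}-|c|$. By definition $\E_\cP(W_{\max})=-\infty$ means $\int_\Theta W_{\max}^{+}\,d\cP<\infty$ and $\int_\Theta W_{\max}^{-}\,d\cP=\infty$; integrating the two inequalities then yields $\int_\Theta g^{+}\,d\cP<\infty$ and $\int_\Theta g^{-}\,d\cP=\infty$, which is exactly the Introduction's definition of $\phi(\xi;\cP)=-\infty$, i.e. $\xi$ is Bayesian singular.

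I expect no serious obstacle: the Loewner sandwich does all the analytic work, and passing to the scalar quantities $W_{\min},W_{\max}$ reduces the claim to a one-dimensional integrability question. The only point requiring care is bookkeeping with the extended-real integral: one must apply the Introduction's definition of when $\int_\Theta g\,d\cP=-\infty$ rather than forming the possibly indeterminate difference $\int_\Theta g^{+}\,d\cP-\int_\Theta g^{-}\,d\cP$, and one must verify the two sign inequalities $(pW+c)^{\pm}\le pW^{\pm}+|c|$ and $(pW+c)^{-}\ge pW^{-}-|c|$, each of which follows from a short case analysis on the sign of $pW$ together with the triangle inequality.
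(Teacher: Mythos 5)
Your proof is correct, and it is exactly the argument the paper has in mind: the paper omits the proof of this lemma as ``straightforward,'' meaning precisely what you do --- take expectations through the Loewner sandwich of Lemma \ref{lemma:info-bound} and handle the extended-real integral via the positive/negative-part conventions stated in the Introduction. Your careful bookkeeping with $g^{+}$, $g^{-}$ and the inequalities $(pW+c)^{\pm}\le pW^{\pm}+|c|$, $(pW+c)^{-}\ge pW^{-}-|c|$ supplies the omitted details faithfully.
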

Lemma \ref{lemma:glm-sing} can often be used to identify clear conditions on the prior distribution that lead to singularity (or non-singularity). However, to do so it is necessary to analyse the tail behaviour of the GLM weight function, $w(\eta)$, as $|\eta| \to \infty$ in order to establish whether (i) or (ii) above holds. Thus, the results depend upon which link function is chosen. In the remainder of Section \ref{sec:logistic}, results are given for logistic, probit and Poisson regression.}
\par

{\mysubsection{Logistic regression}\\[0em]
For logistic regression, $y_i \given \beta \sim \text{Bernoulli}(\pi_i)$, where $\pi_i = \Pr(y_i=1 \given \beta) = \mu(x_i ; \beta)$. The link function is the logit, $h(\pi)=\log\{\pi/ (1-\pi)\}$, and
\begin{align}
w(\eta) &= \exp( -|\eta|) \expit(|\eta|)^2 \label{eq:logistic-weight} \\
&  \sim \exp(-|\eta|) \text{ as } |\eta| \to \infty \,. \notag
\end{align}
Above, $
\expit(\eta) = 1 / \{ 1+ e^{-\eta} \}$. Lemma \ref{lemma:glm-sing} is now used to establish   sufficient conditions for the prior distribution to be non-singular for logistic regression. 
}
\begin{theorem}
Suppose that $\Pri$ is such that $\E_\Pri ( |\beta_j| ) < \infty$, for $j=0,\ldots,p-1$. If $\xi$ is non-singular for the linear model with regressors given by $\vec{f}$, that is 
$
\left| F^\T F
 \right|>0
$, 
then $\phi(\xi; \Pri) > -\infty$,  i.e. $\xi$ is also Bayesian non-singular with respect to $\Pri$ for the logistic model.
\label{thm:logistic-positive}
\end{theorem}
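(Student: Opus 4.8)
The plan is to verify the hypothesis of part (i) of Lemma \ref{lemma:glm-sing}: since $\xi$ is assumed non-singular for the linear model, i.e.~$|F^\T F|>0$, it suffices to show that $\E_\Pri\{\log\min_i w_i\} > -\infty$, whereupon $\phi(\xi; \Pri) > -\infty$ follows immediately from that lemma. Because the logarithm is increasing and the index set is finite, $\log\min_i w_i = \min_i \log w(\eta_i)$, so the whole argument reduces to obtaining an integrable lower bound for $\log w(\eta_i)$ that holds uniformly over the design points.

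The first step is to bound the logistic weight from below. From \eqref{eq:logistic-weight}, $\log w(\eta) = -|\eta| + 2\log\expit(|\eta|)$. Since $\expit$ is increasing and $|\eta| \geq 0$, we have $\expit(|\eta|) \geq \expit(0) = 1/2$, so the $\expit$ factor is harmless: $\log w(\eta) \geq -|\eta| - 2\log 2$ for every $\eta \in \mathbb{R}$. Thus the only potentially problematic behaviour of $w$ is governed by the linear term $-|\eta|$, and it remains to control $\E_\Pri(|\eta_i|)$.

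The second step propagates this through the linear predictor. By \eqref{eq:linear-predictor}, $\eta_i = \vec{f}^\T(\vec{x}_i)\bbeta = \sum_{j=0}^{p-1} f_j(\vec{x}_i)\beta_j$, so the triangle inequality gives $|\eta_i| \leq \sum_{j=0}^{p-1} |f_j(\vec{x}_i)|\,|\beta_j|$. Because the design $\xi$ is fixed with finitely many points, $C = \max_{i,j}|f_j(\vec{x}_i)|$ is a finite constant, which yields the uniform bound $\log w(\eta_i) \geq -C\sum_{j=0}^{p-1}|\beta_j| - 2\log 2$, valid simultaneously for all $i$. Taking the minimum over $i$ preserves this bound, and then taking expectations under $\Pri$ gives $\E_\Pri\{\min_i \log w(\eta_i)\} \geq -C\sum_{j=0}^{p-1}\E_\Pri(|\beta_j|) - 2\log 2$, which is finite precisely because each first absolute moment $\E_\Pri(|\beta_j|)$ is assumed finite. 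This establishes the hypothesis of Lemma \ref{lemma:glm-sing}(i) and completes the proof.

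I do not anticipate a serious obstacle, as the logistic weight decays only exponentially in $|\eta|$ while $|\eta|$ is linear in $\bbeta$, so a finite first moment of each $\beta_j$ matches the integrability requirement exactly. The two points needing care are ensuring the lower bound on $\log w(\eta_i)$ is uniform across the design points before taking the minimum, and checking that the $\expit(|\eta|)^2$ factor genuinely does no harm — it only increases the weight relative to $\exp(-|\eta|)$ and so is absorbed into the crude constant $-2\log 2$ above.
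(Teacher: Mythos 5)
Your proof is correct and follows essentially the same route as the paper's: both arguments reduce the problem (via Lemma \ref{lemma:info-bound}, of which Lemma \ref{lemma:glm-sing}(i) is the packaged form) to bounding $\E_\Pri\{\min_i \log w_i\}$ below, using $w(\eta)\geq (1/4)e^{-|\eta|}$ from $\expit(|\eta|)\geq 1/2$, the triangle inequality $|\eta_i|\leq \sum_j |f_j(\vec{x}_i)|\,|\beta_j|$, and the assumed finite first moments $\E_\Pri(|\beta_j|)<\infty$. The only cosmetic difference is your single constant $C=\max_{i,j}|f_j(\vec{x}_i)|$ in place of the paper's per-coordinate constants $\max_i|f_j(\vec{x}_i)|$, which changes nothing.
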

Note that there is no requirement for $\Pri$ to have bounded support. In particular, this result provides theoretical reassurance that Bayesian $D$-optimality can be used to select a design  with a  normal or log-normal prior  on the parameters. {There is also no requirement for the parameters to be independent a priori. For example, the result applies to a normal-mixture hierarchical variable selection prior distribution (\citet{chipman1997bayesian}).}

Other important prior distributions  do not satisfy the conditions of Theorem 1; for example that proposed by \citet*{gelman2008weakly} which we denote by $\Pri_G$. These authors  recommended rescaling before fitting the model. For observational studies, each explanatory variable is transformed to have mean zero and a standard deviation of 1/2.
This ensures that the method reflects the widely-held default prior belief  that higher order interactions are likely to make a smaller contribution to the linear predictor. The combination of $\Pri_G$ and this scaling was shown to give improved predictive performance relative to both maximum likelihood and penalized logistic regression.  An analogue of the above method for designed experiments is to combine $\Pri_G$ with a standardization of the design variables to have range $[-1/2,1/2]$. This achieves a similar penalization of higher order interactions.  

It is possible to obtain a partial {inverse result} to Theorem \ref{thm:logistic-positive}. 

\begin{proposition}
Given $j \in \{ 0,\ldots,p-1\}$, suppose that: 
\begin{enumerate}[label=\emph{(\roman*)}]
\itemsep 0pt \parskip 0pt \parsep 0pt
\item $\Pri$ is such that $\Pr(\beta_j > 1)>0$
{ \item$\Pri$ is such that, for all $\delta> 0$, 
$$
\Pr(|\beta_k| < \delta \text{ for all } k\neq j \,|\, \beta_j > 1)>0 \,
$$
\item $\Pri$ is such that, for all $\delta>0$,   
$$\E_\Pri [\beta_j \given \beta_j >1,  \, |\beta_k| < \delta,\text{ for all } k\neq j] =\infty$$}
\item $\xi$ is such that $\min_{i=1,\ldots,n} | f_j (\vec{x}_i) | > 0$.
\end{enumerate}
Then $\xi$ is Bayesian singular with respect to $\Pri$, i.e. $\phi(\xi ; \Pri) = -\infty$. 
\label{prop:logistic-negative}
\end{proposition}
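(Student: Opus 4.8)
The plan is to show that $\xi$ is Bayesian singular by verifying the hypothesis of Lemma \ref{lemma:glm-sing}(ii), namely that $\E_\Pri\{\log\max_i w_i\}=-\infty$. If $|F^\T F|=0$ then $\xi$ is singular for the linear model, so $M(\xi;\bbeta)$ is singular for every $\bbeta$ (the weights $w_i$ being strictly positive) and $\phi(\xi;\Pri)=-\infty$ trivially; hence I may assume $|F^\T F|>0$, so that Lemma \ref{lemma:glm-sing} applies. From the logistic weight \eqref{eq:logistic-weight}, $w(\eta)=\exp(-|\eta|)\expit(|\eta|)^2\le\exp(-|\eta|)$, so $\log w(\eta_i)\le-|\eta_i|\le 0$ for every $i$; taking the maximum over $i$ and using monotonicity of $\log$ gives the pointwise bound
\[
\log\max_i w_i \;=\; \max_i \log w(\eta_i)\;\le\; -\min_i|\eta_i|\;\le\;0 .
\]
Consequently $(\log\max_i w_i)^+=0$ and $(\log\max_i w_i)^-\ge \min_i|\eta_i|$, so it suffices to prove that $\E_\Pri\{\min_i|\eta_i|\}=\infty$.

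The core of the argument is to produce this divergence on the conditioning event $E_\delta=\{\beta_j>1\}\cap\{\,|\beta_k|<\delta\text{ for all }k\neq j\,\}$ for a suitably small $\delta>0$. Write $\eta_i=\fx{i}^\T\bbeta=f_j(\vec{x}_i)\beta_j+\sum_{k\neq j}f_k(\vec{x}_i)\beta_k$, set $c=\min_i|f_j(\vec{x}_i)|>0$ by hypothesis (iv), and let $B=\max_i\sum_{k\neq j}|f_k(\vec{x}_i)|$, which is finite because the maximum is over the finitely many design points of $\xi$. On $E_\delta$ the nuisance terms are controlled by $|\sum_{k\neq j}f_k(\vec{x}_i)\beta_k|\le \delta B$, while $|f_j(\vec{x}_i)\beta_j|\ge c\beta_j$. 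Choosing $\delta$ so that $\delta B\le c/2$ then yields, for every $i$ and on $E_\delta$, $|\eta_i|\ge c\beta_j-c/2\ge c\beta_j/2$ (using $\beta_j>1$), hence $\min_i|\eta_i|\ge c\beta_j/2$ on $E_\delta$.

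Finally I would integrate this bound. Since $\min_i|\eta_i|\ge 0$ everywhere and $\min_i|\eta_i|\ge (c/2)\beta_j$ on $E_\delta$,
\[
\E_\Pri\{\min_i|\eta_i|\}\;\ge\;\tfrac{c}{2}\,\E_\Pri\{\beta_j\,\mathbf{1}_{E_\delta}\}\;=\;\tfrac{c}{2}\,\Pr(E_\delta)\,\E_\Pri[\beta_j\given E_\delta].
\]
Hypotheses (i)--(ii) give $\Pr(E_\delta)=\Pr(\beta_j>1)\,\Pr(|\beta_k|<\delta\ \forall k\neq j\given \beta_j>1)>0$, and hypothesis (iii) gives $\E_\Pri[\beta_j\given E_\delta]=\infty$, so the right-hand side is $\infty$. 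Thus $\E_\Pri\{\min_i|\eta_i|\}=\infty$, whence $\E_\Pri\{(\log\max_i w_i)^-\}=\infty$ while $\E_\Pri\{(\log\max_i w_i)^+\}=0$, giving $\E_\Pri\{\log\max_i w_i\}=-\infty$; Lemma \ref{lemma:glm-sing}(ii) then delivers $\phi(\xi;\Pri)=-\infty$. The one step requiring genuine care is the uniform-in-$i$ lower bound on $|\eta_i|$ in the middle paragraph: the interference from the nuisance coefficients $\beta_k$, $k\neq j$, must be dominated by the growth of $f_j(\vec{x}_i)\beta_j$ at every design point simultaneously, which is precisely what conditions (iv) and (ii) are engineered to guarantee once $\delta$ is taken small enough.
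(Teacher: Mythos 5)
Your proposal is correct, and its skeleton matches the paper's proof exactly: bound $\log|M(\xi;\bbeta)|$ above via $\log|F^\T F| + p\log\max_i w_i$ (Lemma \ref{lemma:info-bound}, which is also what Lemma \ref{lemma:glm-sing}(ii) rests on), use the tail bound $w(\eta)\le e^{-|\eta|}$ to reduce everything to showing $\E_\Pri\{\min_i|\eta_i|\}=\infty$, and then obtain that divergence by conditioning on the event $\{\beta_j>1,\ |\beta_k|<\delta \text{ for all } k\neq j\}$ on which the $j$th term dominates the linear predictor, invoking (i)--(iv) to finish. The one step you execute genuinely differently is the control of $\min_i|\eta_i|$ on that event. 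The paper introduces a separation constant $\epsilon$ (chosen so that distinct values of $|f_j(x_i)|$ differ by more than $2\epsilon$), and runs a short contradiction argument to show the index minimizing $|\eta_i|$ must also minimize $|f_j(x_i)|$, yielding $\min_i|\eta_i|\ge |f_j(x_{i^\ast})|\beta_j-\epsilon$. You bypass the identification of the minimizer entirely: with $c=\min_i|f_j(x_i)|>0$ and $\delta$ small enough that $\delta B\le c/2$, you get the uniform bound $|\eta_i|\ge c\beta_j/2$ for every $i$, which is all the argument needs and is cleaner (the paper never reuses the location of the minimizer). A second, minor difference in your favour: since the proposition, unlike Lemma \ref{lemma:glm-sing}, does not assume $|F^\T F|>0$, you dispose of the degenerate case $|F^\T F|=0$ explicitly, whereas the paper handles it only implicitly through the Loewner upper bound. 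Both routes deliver the same conclusion; yours trades the paper's sharper description of where the minimum is attained for a shorter and more robust estimate.
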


{
A more intuitive understanding of the reason that the above conditions lead to a singular prior distribution can be obtained by considering locally optimal design.
There, we have that $|M(\xi;\beta)| \approx 0$ if the responses are close to deterministic, i.e.~if  for all design points the success probability $\Pr(y_i = 1 \given \beta )$  is close to either 0 or 1. In that case, there is also a high probability of separation (\citet{albert1984existence}) and thus non-existence of maximum likelihood estimates.
For Bayesian design, a heavy-tailed prior satisfying the conditions of Proposition \ref{prop:logistic-negative} leads to similarly extreme values of the success probability, which is now a random variable owing to dependence on $\beta$.  Specifically, the implied distribution on $\Pr(y_i = 1 \given \beta)$ has high concentration near either 0 or 1, in the following sense:}
{\begin{proposition}
Under the conditions in Proposition \ref{prop:logistic-negative}, there exists an event $\mathcal{E}\subseteq \Theta$, with $\Pr(\mathcal{E})>0$, 
conditional upon which either $\Pr(y_i = 1 \given \beta)$ or $1-\Pr(y_i = 1 \given \beta)$ has prior geometric mean zero, according to whether $  f_j(x_i)<0$ or $f_j(x_i)>0$ respectively.  
\label{prop:extreme-probs}
\end{proposition}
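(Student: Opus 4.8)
The plan is to take the event $\mathcal{E}$ to be exactly the conditioning event that already appears in Proposition \ref{prop:logistic-negative}. Fix any $\delta>0$ and set
$\mathcal{E} = \{\beta_j > 1\} \cap \{ |\beta_k| < \delta \text{ for all } k \neq j \}$.
Conditions (i) and (ii) of Proposition \ref{prop:logistic-negative} together give $\Pr(\mathcal{E})>0$, and condition (iii) gives $\E_\Pri[\beta_j \given \mathcal{E}] = \infty$. All geometric means in the statement are to be read with respect to the conditional law of $\beta$ given $\mathcal{E}$, so by the definition of geometric mean the task reduces to showing that the relevant conditional expectation of a log-probability is $-\infty$.

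The key step is to isolate the contribution of $\beta_j$ to the linear predictor. I would write $\eta_i = \vec{f}^\T(\vec{x}_i)\beta = f_j(\vec{x}_i)\beta_j + R_i$, where $R_i = \sum_{k\neq j} f_k(\vec{x}_i)\beta_k$. On $\mathcal{E}$ the remainder is uniformly bounded, $|R_i| \leq \delta C_i$ with $C_i = \sum_{k\neq j}|f_k(\vec{x}_i)|$ a finite constant for the fixed design $\xi$, while $\beta_j > 1$. Consequently $\eta_i$ is bounded below (above) on $\mathcal{E}$ according to whether $f_j(\vec{x}_i)>0$ ($<0$), and condition (iv), which forces $|f_j(\vec{x}_i)|>0$, guarantees that the divergence of $\E_\Pri[\beta_j \given \mathcal{E}]$ propagates to $\eta_i$.

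I would then split into the two sign cases. If $f_j(\vec{x}_i)>0$, then $1 - \Pr(y_i=1\given\beta) = \expit(-\eta_i)$, so the elementary bound $\log\{1-\Pr(y_i=1\given\beta)\} = -\log(1+e^{\eta_i}) \leq -\eta_i$ together with $\eta_i \geq f_j(\vec{x}_i)\beta_j - \delta C_i$ gives $\E_\Pri[\eta_i \given \mathcal{E}] \geq f_j(\vec{x}_i)\E_\Pri[\beta_j \given \mathcal{E}] - \delta C_i = \infty$, whence $\E_\Pri[\log\{1-\Pr(y_i=1\given\beta)\}\given\mathcal{E}] = -\infty$ and $1-\Pr(y_i=1\given\beta)$ has conditional geometric mean zero. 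The case $f_j(\vec{x}_i)<0$ is symmetric: replacing $\eta_i$ by $-\eta_i$ and using $\log \expit(\eta_i) = -\log(1+e^{-\eta_i}) \leq \eta_i$ shows $\Pr(y_i=1\given\beta)$ itself has conditional geometric mean zero.

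I expect the only delicate point to be the measure-theoretic bookkeeping needed to ensure no expectation is of the indeterminate form $\infty - \infty$. This is controlled by the two boundedness facts established in the second step: $\beta_j$ is bounded below by $1$ and $R_i$ is bounded on $\mathcal{E}$, so $\pm\eta_i$ is bounded below and each conditional expectation is well-defined in $(-\infty,\infty]$; and $\log\{1-\Pr(y_i=1\given\beta)\}$ (respectively $\log\Pr(y_i=1\given\beta)$) is bounded above by $0$, so its expectation is well-defined in $[-\infty,0]$ and the pointwise upper bound passes to the expectation by monotonicity, forcing the value $-\infty$.
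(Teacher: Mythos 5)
Your proposal is correct and follows essentially the same route as the paper's proof: conditioning on the event $\mathcal{E}=\{\beta_j>1,\ |\beta_k|<\delta \text{ for all } k\neq j\}$, isolating the contribution $f_j(\vec{x}_i)\beta_j$ to the linear predictor with a bounded remainder, and applying the elementary bounds $1-\Pr(y_i=1\given\beta)\leq e^{-\eta_i}$ and $\Pr(y_i=1\given\beta)\leq e^{\eta_i}$ so that $\E_\Pri[\beta_j\given\mathcal{E}]=\infty$ forces the relevant conditional expectation of the log-probability to $-\infty$. Your added care about well-definedness of the expectations (boundedness of $\pm\eta_i$ below and of the log-probabilities above) is a minor refinement of, not a departure from, the paper's argument.
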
 
 The proofs of Propositions \ref{prop:logistic-negative} and \ref{prop:extreme-probs} both rest on the identification of a region, $\mathcal{E}$, of parameter space where the linear predictor $\eta_i$ can be approximated by the contribution, $\beta_j f_j(x_i)$, from the $j$th predictor.}

The Gelman prior distribution, $\Pri_G$, places independent standard Cauchy distributions on
$(1/10)\beta_0, (2/5)\beta_1, \ldots, (2/5)\beta_{p-1}$.
{Thus, the prior distributions for the regression coefficients are heavy-tailed, with undefined prior mean. The parameters are expected a priori to have large magnitude, i.e.~$\E|\beta_k| = \infty$, $k=0,\ldots,p-1$.} For a model with an intercept term, $f_0(x)=1$, and Proposition \ref{prop:logistic-negative} may be applied with $j =0$; {conditions (ii) and (iii) follow since $\beta_0$ is both heavy-tailed and independent of the other parameters, hence:}

\begin{corollary}
For a logistic model with an intercept term, the prior distribution $\cP_G$ is singular.
\end{corollary}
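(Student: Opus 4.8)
The plan is to apply Proposition \ref{prop:logistic-negative} with $j=0$, verifying its four hypotheses for the Gelman prior $\cP_G$ and then concluding that \emph{every} design $\xi$ is Bayesian singular, which is precisely the definition of a singular prior distribution. Because singularity requires all $\xi \in \Xi$ to be Bayesian singular, I would check that the hypotheses hold uniformly in $\xi$; this is automatic here since the only condition involving the design, namely (iv), holds for every $\xi$.

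First, since the model contains an intercept term, $f_0(\vec{x}) \equiv 1$, so that $\min_{i}|f_0(\vec{x}_i)| = 1 > 0$ for \emph{every} design $\xi$, and hypothesis (iv) is satisfied regardless of the choice of design. The remaining conditions (i)--(iii) are verified using the structure of $\cP_G$: the coefficients are mutually independent, and $\beta_0$ is a standard Cauchy rescaled by the factor $10$ (equivalently, Cauchy with scale parameter $10$). For (i), $\Pr(\beta_0 > 1) = \Pr(\text{standard Cauchy} > 1/10) > 0$. For (ii), independence implies that conditioning on $\{\beta_0 > 1\}$ leaves the joint law of $(\beta_k)_{k\neq 0}$ unchanged; since each $\beta_k$ has an absolutely continuous, everywhere-positive density, $\Pr(|\beta_k| < \delta) > 0$ for each $\delta>0$, and by independence the intersection over $k\neq 0$ also has positive probability.

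For (iii), again by independence the conditioning on the other coordinates is irrelevant to $\beta_0$, so the required conditional mean reduces to $\E_{\cP_G}[\beta_0 \mid \beta_0 > 1]$. This is infinite because the Cauchy density decays like $\beta^{-2}$, so the truncated first moment $\int_1^\infty \beta\, p(\beta)\, d\beta$ diverges. With (i)--(iv) established, Proposition \ref{prop:logistic-negative} yields $\phi(\xi; \cP_G) = -\infty$ for every $\xi \in \Xi$, and hence $\cP_G$ is singular. The only genuine computation is the divergence of the truncated Cauchy mean in (iii); everything else is immediate from independence and full support, so I do not anticipate any real obstacle---the essential point is simply recognising that the heavy Cauchy tail forces the conditional mean to be infinite.
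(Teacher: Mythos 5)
Your proof is correct and follows essentially the same route as the paper: apply Proposition \ref{prop:logistic-negative} with $j=0$, noting that the intercept gives $\min_i|f_0(\vec{x}_i)|=1>0$ for every design, and that conditions (i)--(iii) follow from the heavy (Cauchy) tail of $\beta_0$ together with its prior independence from the other coefficients. Your verification is in fact somewhat more explicit than the paper's one-line justification, but there is no substantive difference in approach.
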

{Often prior independence of parameters is not a reasonable assumption. For example, \citet{chipman1997bayesian} define a hierarchical variable selection prior in which the probability of an interaction term being active is dependent on whether the parent terms are active, thereby satisfying the weak heredity principle. Proposition \ref{prop:logistic-negative} can be used to show that, for logistic regression, a prior with this hierarchical structure is singular if the prior distribution of the intercept parameter is a mixture of two scaled zero-mode Cauchy distributions rather than a mixture of two scaled zero-mean normal distributions. In this case, the intercept is again both heavy-tailed and (typically) independent of the other parameters.}

For logistic models with a single controllable variable, scalar $x \in \mathcal{X}$, $\mathcal{X}=\mathbb{R}$, Bayesian $D$-optimal design has also been  studied for a different parameterization (for example,~\citet*{chaloner-larntz}):
\begin{equation}
h(\pi_i) = \beta_1(x_i-\mu) \,,
\label{eq:chal-larntz-model}
\end{equation}
which can be obtained from \eqref{eq:linear-predictor} via $\beta_0 = -\beta_1 \mu$. When $\beta_1=0$, $\mu$ is not identifiable and $|M_\theta (\xi;\theta)|=0$ for all $\xi \in \Xi$, with $\theta=(\mu,\beta_1)^\T$, $\Xi=\mathcal{X}^n$.
The following result, which is straightforward to prove using Theorem \ref{thm:logistic-positive}, provides sufficient conditions for a prior distribution to be non-singular for this form of the model.

\begin{proposition}
\label{prop:logistic-other-param}
For the $(\mu, \beta_1)$-parameterization in \eqref{eq:chal-larntz-model}, 
if (i) $\E_{\cP} ( |\mu \beta_1| ) < \infty$, (ii) $\E_{\cP} ( |\beta_1| )<\infty$ and (iii) $\E_{\cP} ( \log|\beta_1| )> -\infty$, then any design with two or more support points is Bayesian non-singular with respect to $\cP$. 
Hence (i)--(iii) are sufficient  for $\cP$ to be non-singular. 
In this case, $\xi$ is Bayesian $D$-optimal  for $(\beta_0,\beta_1)$ if and only if it is Bayesian $D$-optimal for $(\mu,\beta_1)$.
\end{proposition}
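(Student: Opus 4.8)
The plan is to transfer the problem to the $(\beta_0,\beta_1)$-parameterization of \eqref{eq:linear-predictor}, where Theorem \ref{thm:logistic-positive} is available, and then pay for the change of variables with the term $\E_\cP(\log|\beta_1|)$. Writing $\beta = (\beta_0,\beta_1)^\T = (-\mu\beta_1,\beta_1)^\T$ as a function of $\theta=(\mu,\beta_1)^\T$, the Jacobian $\partial\beta/\partial\theta$ is lower triangular with diagonal entries $-\beta_1$ and $1$, so $|\det(\partial\beta/\partial\theta)| = |\beta_1|$. The reparameterization identity $M_\theta(\xi;\theta) = (\partial\beta/\partial\theta)^\T M_\beta(\xi;\beta)\,(\partial\beta/\partial\theta)$ then yields the pointwise relation
\begin{equation}
\log|M_\theta(\xi;\theta)| = \log|M_\beta(\xi;\beta)| + 2\log|\beta_1| \,,
\label{eq:chal-reparam}
\end{equation}
in direct analogy with \eqref{eq:exp-reparam}.

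Next I would apply Theorem \ref{thm:logistic-positive} in the $\beta$-parameterization, under the prior $\tilde\cP$ induced on $(\beta_0,\beta_1)$ by $\cP$. Conditions (i) and (ii) give precisely $\E_{\tilde\cP}(|\beta_0|) = \E_\cP(|\mu\beta_1|) < \infty$ and $\E_{\tilde\cP}(|\beta_1|) = \E_\cP(|\beta_1|) < \infty$, the two hypotheses of Theorem \ref{thm:logistic-positive} for $p=2$. Any design with two or more distinct support points has $|F^\T F| = n\sum_i x_i^2 - (\sum_i x_i)^2 > 0$ for the linear model with regressors $1$ and $x$, so Theorem \ref{thm:logistic-positive}, together with the change-of-variables identity $\int_\Theta \log|M_\beta(\xi;\beta(\theta))|\,d\cP(\theta) = \int \log|M_\beta(\xi;\beta)|\,d\tilde\cP(\beta)$, gives $\int_\Theta \log|M_\beta(\xi;\beta(\theta))|\,d\cP(\theta) > -\infty$.

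It then remains to integrate \eqref{eq:chal-reparam}, and the main obstacle is purely measure-theoretic: I must ensure that integrating the right-hand side does not create an indeterminate $\infty - \infty$. To this end I would show that both terms are in fact $\cP$-integrable. For the first term, the logistic weight \eqref{eq:logistic-weight} satisfies $w(\eta) \leq \tfrac14$, so Lemma \ref{lemma:info-bound} bounds $\log|M_\beta(\xi;\beta)|$ above by a finite, design-dependent constant; combined with the lower bound $> -\infty$ just obtained, the integral $\int_\Theta \log|M_\beta(\xi;\beta(\theta))|\,d\cP(\theta)$ is finite. For the second term, $\log|\beta_1| \leq |\beta_1|$ makes the positive part $\cP$-integrable by (ii), while (iii) controls the negative part, so $\E_\cP(\log|\beta_1|)$ is finite. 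Both integrals being finite, \eqref{eq:chal-reparam} may be integrated termwise, giving $\phi(\xi;\cP) = \int_\Theta \log|M_\beta(\xi;\beta(\theta))|\,d\cP(\theta) + 2\E_\cP(\log|\beta_1|) > -\infty$. Hence every design with at least two support points is Bayesian non-singular, and since such designs exist, $\cP$ is non-singular.

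Finally, for the optimality equivalence I would note that the additive term $2\E_\cP(\log|\beta_1|)$ in the last display is a finite constant independent of $\xi$, whereas both objective functions equal $-\infty$ on the excluded designs supported at a single point. Consequently the two objectives differ by a fixed constant on the set of admissible designs, so $\argmax_{\xi \in \Xi}\phi(\xi;\cP)$ computed in the $(\mu,\beta_1)$-parameterization coincides with that computed in the $(\beta_0,\beta_1)$-parameterization, establishing the stated equivalence of Bayesian $D$-optimal designs.
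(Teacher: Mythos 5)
Your proof is correct and takes essentially the approach the paper intends: the paper provides no separate proof of this proposition, remarking only that it is ``straightforward to prove using Theorem \ref{thm:logistic-positive}'', and your argument --- the reparameterization identity $\log|M_\theta(\xi;\theta)| = \log|M_\beta(\xi;\beta)| + 2\log|\beta_1|$, Theorem \ref{thm:logistic-positive} applied to the induced prior on $(\beta_0,\beta_1)=(-\mu\beta_1,\beta_1)$ via conditions (i)--(ii), and conditions (ii)--(iii) to make both terms finitely integrable so that termwise integration and the constant-shift argmax argument are valid --- is precisely the intended filling-in of that remark. (One trivial slip: the Jacobian $\partial\beta/\partial\theta$ as you order it is upper, not lower, triangular, but its determinant is $-\beta_1$ either way, so nothing changes.)
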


{\mysubsection{Probit regression}\\[0ex]
For probit regression, $y_i \given \beta \sim \text{Bernoulli}(\pi_i)$, $\pi_i = \mu(x_i; \beta)$, with link $h(\pi) = \Phi^{-1}(\pi)$, where $\Phi$ is the standard normal c.d.f.. Here,
$$
w(\eta) = \frac{\varphi(\eta)^2}{\Phi(\eta) (1-\Phi(\eta))} \,,
$$
where $\varphi(\eta) = \frac{1}{\sqrt{2\pi}} e^{-\eta^2/2}$ is the standard normal p.d.f..
The following asymptotic approximation holds (e.g.~\citet{abramowitz1972handbook}, p.298)
\begin{align*}
1 - \Phi(\eta) 
& \sim \frac{ 1  }{  \eta\sqrt{ 2\pi }  } 
	e^{  -\eta^2/2  } \quad \text{ as } \eta \to \infty \,.
\end{align*}
Also, as $\eta \to \infty$, $\Phi(\eta) \to 1$, and so by symmetry of $w(\eta)$
\begin{equation}
w(\eta) \sim \frac{1}{\sqrt{2\pi}}|\eta| e^{-\eta^2/2}
\quad \text{ as  }|\eta|\to \infty \,.
\label{eq:probit-tail}
\end{equation}

This asymptotic approximation can be used with Lemma \ref{lemma:glm-sing} to obtain analogues of the results for logistic regression, with different conditions on the prior distribution.

\begin{theorem}
If $E_\Pri |\beta_k \beta_l| < \infty$, for $k,l=0,1,\ldots,p-1$, then $\cP$ is non-singular for the probit regression model.
\label{thm:probit-nonsing}
\end{theorem}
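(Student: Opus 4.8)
The plan is to exhibit at least one Bayesian non-singular design, which suffices for non-singularity of $\cP$. The strategy mirrors the proof of Theorem \ref{thm:logistic-positive}: fix any design $\xi$ that is non-singular for the linear model with regressors $\vec{f}$, so that $|F^\T F|>0$, and verify the hypothesis $\E_\Pri\{\log\min_i w_i\}>-\infty$ of Lemma \ref{lemma:glm-sing}(i). Such a design exists whenever $n\geq p$ and $\vec{f}$ is not linearly degenerate, so no obstacle arises here.

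The crux is to upgrade the tail asymptotic \eqref{eq:probit-tail} to a global lower bound on $\log w(\eta)$. From \eqref{eq:probit-tail} we have $\log w(\eta) = -\tfrac{1}{2}\eta^2 + \log|\eta| - \tfrac{1}{2}\log(2\pi) + o(1)$ as $|\eta|\to\infty$; since the $\log|\eta|$ term is eventually positive, there is $R>0$ with $\log w(\eta)\geq -\tfrac{1}{2}\eta^2$ for $|\eta|>R$. On the compact set $|\eta|\leq R$ the weight $w$ is continuous and strictly positive (indeed $w(0)=2/\pi$), hence bounded below by a positive constant. Combining the two regimes yields a constant $C$ with
\[
\log w(\eta) \;\geq\; -\tfrac{1}{2}\eta^2 - C \qquad \text{for all } \eta\in\mathbb{R}.
\]

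Next I would bound the worst-case linear predictor. Writing $\eta_i = \fx{i}^\T\bbeta$ and using that the regression functions are bounded on the compact design region, say $|f_k(\vec{x}_i)|\leq B$ for all $k$ and all $\vec{x}_i\in\mathcal{X}=[-1,1]^q$, the expansion $\eta_i^2 = \sum_{k,l} f_k(\vec{x}_i)f_l(\vec{x}_i)\beta_k\beta_l$ gives $\eta_i^2 \leq B^2\sum_{k,l}|\beta_k\beta_l|$, uniformly in $i$. Hence
\[
\log\min_i w_i = \min_i \log w(\eta_i) \;\geq\; -\tfrac{1}{2}\max_i\eta_i^2 - C \;\geq\; -\tfrac{1}{2} B^2 \sum_{k,l}|\beta_k\beta_l| - C.
\]
Taking prior expectations and invoking $\E_\Pri|\beta_k\beta_l|<\infty$ for all $k,l$ shows $\E_\Pri\{\log\min_i w_i\}>-\infty$; Lemma \ref{lemma:glm-sing}(i) then gives $\phi(\xi;\Pri)>-\infty$, so $\xi$ is Bayesian non-singular and $\cP$ is non-singular.

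The only genuinely delicate step is the first: obtaining a clean inequality valid for every $\eta$ from a purely asymptotic statement. What makes it painless is that the subdominant $\log|\eta|$ term in \eqref{eq:probit-tail} has a favourable sign at large $|\eta|$, so the Gaussian factor $e^{-\eta^2/2}$ controls the tail while the compact middle range is handled by continuity and positivity; thereafter the argument is a direct moment bound. The appearance of the products $\beta_k\beta_l$ in the hypothesis (rather than the $\E_\Pri|\beta_j|$ of the logistic case) is precisely the imprint of the quadratic decay $w(\eta)\sim c\,|\eta|e^{-\eta^2/2}$: the $\eta^2$ in the exponent expands into the bilinear form $\sum_{k,l}f_k f_l\,\beta_k\beta_l$, in contrast to the linear decay $w(\eta)\sim e^{-|\eta|}$ that produced only first moments.
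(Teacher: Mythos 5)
Your proposal is correct and follows essentially the same route as the paper: both arguments lower-bound $\log|M(\xi;\bbeta)|$ via Lemma \ref{lemma:info-bound} (your Lemma \ref{lemma:glm-sing}(i) is just its packaged form), handle the tail regime with the asymptotic \eqref{eq:probit-tail} and the bounded regime with positivity of $w$ on compacts, and then conclude from $\E_\Pri \max_i \eta_i^2 < \infty$, which the bilinear expansion reduces to the hypothesis $\E_\Pri|\beta_k\beta_l| < \infty$. The only difference is organizational: you fold the two regimes into a single global pointwise bound $\log w(\eta) \geq -\tfrac{1}{2}\eta^2 - C$ before integrating, whereas the paper splits the expectation into the events $\{\max_i|\eta_i| \leq \kappa\}$ and $\{\max_i|\eta_i| > \kappa\}$ and bounds each piece separately.
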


\begin{proposition}
Given $j \in \{ 0, \ldots, p-1\}$, suppose that:
\begin{enumerate}[label=\emph{(\roman*)}]
\itemsep 0pt \parskip 0pt \parsep 0pt
\item $\Pri$ is such that 
$\Pr(\beta_j > 1 ) > 0$
and, for all $\delta>0$,
$$\Pr ( |\beta_k| < \delta \text{ for all }k\neq j \given \beta_j > 1) > 0 $$
\item  $\Pri$ is such that, for all $\delta>0$, 
$$
E_\Pri [\, |\beta_j|^2  \,\big|\, \beta_j> 1, |\beta_k| < \delta \text{ for all } k\neq j] = \infty
$$
\item   $\xi$ is such that $\min_i |f_j(x_i)| >0$.

\end{enumerate}
Then, for the probit link the design $\xi$ is Bayesian singular with respect to $\cP$, i.e. $\E_\Pri \log|M(\xi;\bbeta)| = -\infty$.
\label{prop:probit-sing}
\end{proposition}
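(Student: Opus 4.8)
The plan is to mirror the argument for the logistic case (Proposition \ref{prop:logistic-negative}), replacing the exponential tail of the logistic weight by the Gaussian tail \eqref{eq:probit-tail} of the probit weight; this substitution is precisely what turns the first-moment hypothesis of Proposition \ref{prop:logistic-negative} into the second-moment hypothesis (ii) here. First I would dispose of the degenerate case $|F^\T F|=0$. Writing $M(\xi;\beta)=F^\T W F$ with $W=\operatorname{diag}(w_1,\dots,w_n)$ of full rank (since every $w_i>0$), we have $\operatorname{rank} M(\xi;\beta)=\operatorname{rank} F<p$, so $\log|M(\xi;\beta)|=-\infty$ identically and $\phi(\xi;\cP)=-\infty$ trivially. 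Hence I may assume $|F^\T F|>0$ and appeal to Lemma \ref{lemma:glm-sing}(ii): it then suffices to prove that $\E_\cP\{\log\max_i w_i\}=-\infty$, equivalently $\E_\cP\{\min_i(-\log w(\eta_i))\}=+\infty$.

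Next I would record two facts about the probit weight $w$. Since $w$ is continuous, strictly positive, and tends to $0$ as $|\eta|\to\infty$, it is bounded above, so $-\log w(\eta)\ge -C_1$ for a finite $C_1$; consequently $\min_i(-\log w(\eta_i))$ is bounded below and its expectation is well defined in $(-\infty,+\infty]$. More importantly, the Gaussian tail \eqref{eq:probit-tail} yields constants $c_0>0$ and $\eta_0\ge 0$ with
\[
-\log w(\eta)\ \ge\ c_0\,\eta^2\qquad\text{whenever } |\eta|\ge \eta_0 \,.
\]
It is the quadratic growth of $-\log w$ (against the linear growth $-\log w(\eta)\sim|\eta|$ in the logistic case) that makes $|\beta_j|^2$, rather than $|\beta_j|$, the governing quantity.

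I would then localise to the event $\mathcal{E}=\{\beta_j>1\}\cap\{|\beta_k|<\delta\text{ for all }k\neq j\}$ for a fixed convenient $\delta>0$; by (i), $\Pr(\mathcal{E})>0$. Put $m=\min_i|f_j(x_i)|>0$ by (iii), and $B=\max_i\sum_{k\neq j}|f_k(x_i)|$, which is finite because the design has finitely many points. On $\mathcal{E}$, the decomposition $\eta_i=f_j(x_i)\beta_j+\sum_{k\neq j}f_k(x_i)\beta_k$ gives $|\eta_i|\ge m\beta_j-B\delta$ for every $i$, so for $\beta_j$ exceeding a threshold $T$ (chosen so that $m\beta_j-B\delta\ge\tfrac{m}{2}\beta_j$ and $\tfrac{m}{2}\beta_j\ge\eta_0$) we obtain, uniformly in $i$, $-\log w(\eta_i)\ge c_0\eta_i^2\ge\tfrac{c_0 m^2}{4}\beta_j^2$. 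Hence $\min_i(-\log w(\eta_i))\ge\tfrac{c_0 m^2}{4}\beta_j^2$ on $\mathcal{E}\cap\{\beta_j>T\}$, whence
\[
\E_\cP\{\min_i(-\log w(\eta_i))\}\ \ge\ \frac{c_0 m^2}{4}\,\E_\cP\!\left[\beta_j^2\,\mathbf{1}_{\mathcal{E}\cap\{\beta_j>T\}}\right] \,.
\]
By (ii), $\E_\cP[\beta_j^2\mid\mathcal{E}]=\infty$ with $\Pr(\mathcal{E})>0$, and excising the bounded slice $\{1<\beta_j\le T\}$ (contributing at most $T^2\Pr(\mathcal{E})<\infty$) leaves the right-hand side infinite. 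Thus $\E_\cP\{\min_i(-\log w(\eta_i))\}=+\infty$, which closes the argument.

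I expect the only genuinely delicate step to be securing the uniform lower bound $|\eta_i|\ge\tfrac{m}{2}\beta_j$ simultaneously across all design points while keeping both $\Pr(\mathcal{E})$ and the conditional second moment under control; this is exactly where conditions (i)--(iii) are used in concert, with (iii) ensuring that a large $\beta_j$ drives \emph{every} linear predictor away from $0$. Converting the asymptotic equivalence \eqref{eq:probit-tail} into the clean pointwise inequality $-\log w(\eta)\ge c_0\eta^2$ is routine but warrants a careful statement, as it is the precise mechanism that produces the second-moment condition.
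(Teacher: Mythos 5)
Your proof is correct, and while it shares the paper's core strategy---condition on an event $\mathcal{E}$ where $\beta_j$ dominates every linear predictor, then use the Gaussian tail \eqref{eq:probit-tail} so that assumption (ii) forces the relevant expectation to diverge---its technical execution is genuinely different and leaner. The paper works directly with the bound $\log|M(\xi;\bbeta)| \leq \log|F^\T F| + p\log L + p\log\min_i|\eta_i| - (p/2)\min_i\eta_i^2$ on the event $\{\min_i|\eta_i|>\kappa\}$, and must then invoke an auxiliary result (Lemma \ref{lemma:st-infty}, stated and proved in the supplement) \emph{twice}: once to show that the nuisance term $p\log\min_i|\eta_i|$ cannot rescue the expectation from $-\infty$, and once more to dispose of the linear cross term $2\epsilon\min_i|f_j(x_i)|\,|\beta_j|$ that appears when $\min_i\eta_i^2$ is bounded below by a quadratic in $\beta_j$. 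You avoid both applications: absorbing the logarithmic correction into the constant $c_0<1/2$ at the level of the weight function itself (the pointwise bound $-\log w(\eta)\geq c_0\eta^2$ for $|\eta|\geq\eta_0$), and truncating to $\{\beta_j>T\}$ so that $\min_i(-\log w(\eta_i))\geq (c_0 m^2/4)\beta_j^2$ holds uniformly on the conditioning event, reduces everything to Lemma \ref{lemma:glm-sing}(ii) plus a one-line excision of the bounded slice $\{1<\beta_j\leq T\}$; you also handle the degenerate case $|F^\T F|=0$, which Lemma \ref{lemma:glm-sing} assumes away and the paper leaves implicit. What the paper's route buys is a reusable dominance lemma and a proof whose structure exactly mirrors that of Theorem \ref{thm:probit-nonsing}; what yours buys is brevity and transparency about the mechanism producing the second-moment condition. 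One trivial repair: your displayed inequality $\E_\cP\{\min_i(-\log w(\eta_i))\}\geq (c_0 m^2/4)\,\E_\cP[\beta_j^2\,\mathbb{I}(\mathcal{E}\cap\{\beta_j>T\})]$ is not quite justified by the boundedness claim $-\log w\geq -C_1$ alone; either carry the additive constant $-C_1$ through (it changes nothing), or note that $w(\eta)\leq w(0)=2/\pi<1$, so $-\log w$ is in fact strictly positive everywhere and the inequality holds as written.
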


\begin{corollary}
For a probit model with an intercept term, the prior distribution $\Pri_G$ is singular. 
\end{corollary}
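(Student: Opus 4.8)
The plan is to verify the three hypotheses of Proposition \ref{prop:probit-sing} with $j=0$, the intercept coordinate, and then invoke that proposition together with the definition of a singular prior distribution. Recall that under $\Pri_G$ the rescaled coefficients $(1/10)\beta_0, (2/5)\beta_1, \ldots, (2/5)\beta_{p-1}$ are independent standard Cauchy variables; in particular $\beta_0$ is a scaled Cauchy variable that is independent of $\beta_1, \ldots, \beta_{p-1}$. Condition (iii) is immediate: since the model contains an intercept, $f_0(\vec{x}) \equiv 1$, so $\min_i |f_0(\vec{x}_i)| = 1 > 0$ for every design $\xi$. For condition (i) I would use prior independence. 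As $\beta_0$ is a scaled Cauchy variable, $\Pr(\beta_0 > 1) > 0$; and by independence the conditional law of $(\beta_1, \ldots, \beta_{p-1})$ given $\beta_0 > 1$ coincides with their joint marginal law. Since each $\beta_k$ ($k \neq 0$) is a scaled Cauchy variable with a density that is strictly positive near the origin, $\Pr(|\beta_k| < \delta \text{ for all } k \neq 0 \given \beta_0 > 1) > 0$ for every $\delta > 0$, which gives condition (i).

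Condition (ii) is the crux, and is where the heavy tail enters. Again using independence, conditioning on $\beta_0 > 1$ and on $|\beta_k| < \delta$ for $k \neq 0$ leaves the conditional distribution of $\beta_0$ equal to that of a scaled Cauchy variable restricted to $(1,\infty)$. The Cauchy density decays like $\beta_0^{-2}$ in the upper tail, so the conditional second-moment integral behaves like $\int_1^\infty \beta_0^2 \cdot \beta_0^{-2}\, d\beta_0 = \infty$; hence $\E_\Pri[\,|\beta_0|^2 \given \beta_0 > 1, |\beta_k| < \delta \text{ for all } k \neq 0] = \infty$, establishing condition (ii). It is worth noting the contrast with the logistic case: there the weight satisfies $\log w(\eta) \sim -|\eta|$ and only an infinite first moment is required, whereas the probit weight satisfies $\log w(\eta) \sim -\eta^2/2$ by \eqref{eq:probit-tail}, so an infinite \emph{second} moment is needed — which the Cauchy tail still supplies.

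With (i)--(iii) verified, Proposition \ref{prop:probit-sing} yields that every $\xi \in \Xi$ is Bayesian singular with respect to $\Pri_G$ under the probit model. By the definition of singularity, a prior distribution is singular exactly when all designs are Bayesian singular, so $\Pri_G$ is singular, as claimed. I do not anticipate a genuine obstacle: the only quantitative input is the divergence of the truncated Cauchy second moment, and every other step is an immediate consequence of the prior independence built into $\Pri_G$.
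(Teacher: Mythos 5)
Your proof is correct and follows essentially the same route as the paper: apply Proposition \ref{prop:probit-sing} with $j=0$, noting that $f_0(\vec{x})\equiv 1$ handles condition (iii), while the independence and heavy tail of the scaled Cauchy intercept under $\Pri_G$ give conditions (i) and (ii), the key quantitative input being the divergent truncated second moment $\int_1^\infty \beta_0^2\,\beta_0^{-2}\,d\beta_0=\infty$. Your side remark contrasting the logistic case (infinite first moment suffices, since $\log w(\eta)\sim -|\eta|$) with the probit case (infinite second moment needed, since $\log w(\eta)\sim -\eta^2/2$) matches the paper's own observation that even a $t(2)$ prior on $\beta_0$ would keep $\Pri_G$ singular.
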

Again, a heavy-tailed prior on the intercept parameter results in the prior being singular for Bayesian $D$-optimality. The intuitive interpretation is similar to that for the logistic model. Note that $\Pri_G$ would remain singular even if it were made somewhat less heavy-tailed, for example by replacing the Cauchy prior on $\beta_0$ with a $t(2)$ prior. In this case, condition (ii) above will still hold because $\beta_0$ has infinite variance.}
\\[0em]
{\mysubsection{Poisson regression}\\[0ex]
Consider the model $y_i \given \beta \sim \text{Poisson}(\lambda_i)$, with $\mu_i = \lambda_i$ and $h(\mu) = \log \mu$. Optimal designs for this model were considered by \citet{russell-poiss} and \citet{mcgree2012robust}. Here, $w(\eta) = \exp(\eta)$ and we have the following results.

\begin{theorem}
For the Poisson regression model with log link, if $E_\Pri |\beta_k| < \infty$, $k=0,\ldots,p-1$, and $|F^\T F|>0$ then $E_\Pri\log|M(\xi; \bbeta) | > -\infty$. Hence, if the first moments for $\beta_k$ are finite then $\Pri$ is non-singular.
\label{thm:poiss-nonsing}
\end{theorem}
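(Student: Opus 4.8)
The plan is to verify the hypothesis of Lemma \ref{lemma:glm-sing}(i), namely that $E_\Pri\{\log \min_i w_i\} > -\infty$, after which the conclusion $\phi(\xi;\Pri) > -\infty$ is immediate. The decisive structural feature of the log-link Poisson model is that $w(\eta) = e^{\eta}$, so the log-weight coincides exactly with the linear predictor: $\log w_i = \eta_i = \vec{f}^\T(\vec{x}_i)\bbeta$. Since $\eta \mapsto e^{\eta}$ is increasing, $\log \min_i w_i = \min_i \eta_i$, and the whole problem reduces to bounding $E_\Pri(\min_i \eta_i)$ from below.

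First I would obtain a deterministic lower bound for $\min_i \eta_i$ that is linear in the $|\beta_j|$. Writing $\eta_i = \sum_{j=0}^{p-1} f_j(\vec{x}_i)\beta_j$ and using $\min_i \eta_i \geq -\max_i|\eta_i|$ together with the triangle inequality gives
\begin{equation*}
\min_{i} \eta_i \geq -\sum_{j=0}^{p-1} C_j\, |\beta_j|, \qquad C_j = \max_{i=1,\ldots,n} |f_j(\vec{x}_i)|.
\end{equation*}
For a fixed design the constants $C_j$ are finite, being maxima over the $n$ design points, so no boundedness assumption on the regression functions is required. Taking prior expectations and invoking the assumed finiteness of the first absolute moments yields
\begin{equation*}
E_\Pri(\log \min_i w_i) = E_\Pri\!\left(\min_i \eta_i\right) \geq -\sum_{j=0}^{p-1} C_j\, E_\Pri|\beta_j| > -\infty,
\end{equation*}
which is exactly the hypothesis of Lemma \ref{lemma:glm-sing}(i). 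Hence $\phi(\xi;\Pri) > -\infty$ whenever $|F^\T F| > 0$; and since the estimability assumption guarantees at least one such design, $\Pri$ is non-singular, completing the proof.

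I do not expect a genuine obstacle here; the only point requiring care is recognizing that the exponential weight makes $\log w$ \emph{linear} in $\beta$, so that control of $\min_i \eta_i$—and hence non-singularity—follows from first moments alone. This mirrors the logistic case (where $w(\eta) \sim e^{-|\eta|}$ gives $\log w \asymp -|\eta|$, again linear) and contrasts with the probit case in Theorem \ref{thm:probit-nonsing}, where $w(\eta)\sim |\eta| e^{-\eta^2/2}$ forces $\log w \asymp -\eta^2/2$ and thus requires the stronger second-moment condition $E_\Pri|\beta_k\beta_l|<\infty$. A subtlety specific to Poisson is that $w = e^{\eta}$ tends to $0$ only as $\eta \to -\infty$ (it diverges as $\eta \to +\infty$), so only the lower tail of the linear predictor threatens singularity; this is precisely what the lower bound on $\min_i\eta_i$ addresses, and it explains why the upper-bound half of Lemma \ref{lemma:info-bound} plays no role in the argument.
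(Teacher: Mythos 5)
Your proof is correct and follows essentially the same route as the paper's: both exploit that $\log w_i = \eta_i$ for the log link, apply the lower bound from Lemma \ref{lemma:info-bound} (of which Lemma \ref{lemma:glm-sing}(i) is the expectation-level restatement), and control $\min_i \eta_i \geq -\max_i|\eta_i|$ via the triangle inequality and the first-moment assumptions. The only cosmetic difference is that the paper cites the proof of Theorem \ref{thm:logistic-positive} for the finiteness of $E_\Pri\max_i|\eta_i|$, whereas you write out the bound $-\sum_j \max_i|f_j(\vec{x}_i)|\,|\beta_j|$ explicitly.
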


\begin{proposition}
Given $j \in \{ 0,\ldots, p-1\}$, suppose that:
 \begin{enumerate}[label=\emph{(\roman*)}]
 \itemsep 0pt \parskip 0pt \parsep 0pt
 \item $\Pri$ is such that $\beta_j$ is supported on $(-\infty, 0)$ 
 \item $\Pri$ is such that $\Pr(\beta_j < -1)>0$ and, for all $\delta>0$, 
$$
\Pr(|\beta_k| < \delta \text{ for all } k\neq j \,|\, \beta_j < -1)>0 
$$
\item $\Pri$ is such that for all $\delta>0$, 
$$E_\Pri[ \beta_j \given \beta_j < -1,\, |\beta_k|<\delta \text{ for all }k\neq j ] = -\infty$$
\item $\Pri$ is such that $E_\Pri|\beta_k|<\infty$, $k\neq j$
 \item $\xi$ is such that $f_j(x_i) > 0$ for $i =1,\ldots,n $.
 \end{enumerate}
Then $E_\Pri \log|M(\xi; \beta)| = -\infty$, i.e.~$\xi$ is singular  for the Poisson model with log link under Bayesian $D$-optimality.
\label{prop:poiss-sing}
\end{proposition}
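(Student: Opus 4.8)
The plan is to reduce the claim to a first-moment condition on $\beta_j$ by invoking Lemma~\ref{lemma:glm-sing}(ii), which guarantees that $\xi$ is Bayesian singular as soon as $E_\Pri\{\log\max_i w_i\}=-\infty$. This presupposes $|F^\T F|>0$; if instead $|F^\T F|=0$ then $F^\T F$, and hence $M(\xi;\bbeta)$, is singular for every $\beta$, so $\xi$ is trivially Bayesian singular and there is nothing to prove. For the log link the weight is $w(\eta)=\exp(\eta)$, so $\log\max_i w_i=\max_i\eta_i$ with $\eta_i=f_j(\vec{x}_i)\beta_j+\sum_{k\neq j}f_k(\vec{x}_i)\beta_k$; it therefore suffices to show that $E_\Pri\{\max_i\eta_i\}=-\infty$.

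First I would construct a $\Pri$-almost-sure upper bound on $\max_i\eta_i$ that isolates the contribution of $\beta_j$. Set $m=\min_i f_j(\vec{x}_i)$ and $C=\max_{i,\,k\neq j}|f_k(\vec{x}_i)|$, both finite constants determined by $\xi$. Condition (v) gives $m>0$ and condition (i) gives $\beta_j<0$, so $f_j(\vec{x}_i)\beta_j\le m\beta_j$ for every $i$; bounding the remaining terms by $\sum_{k\neq j}|f_k(\vec{x}_i)|\,|\beta_k|\le C\sum_{k\neq j}|\beta_k|$ yields the $i$-free bound $\max_i\eta_i\le m\beta_j+C\sum_{k\neq j}|\beta_k|=:Y$. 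Next I would verify $E_\Pri\beta_j=-\infty$: by conditions (i)--(ii) the event $A=\{\beta_j<-1\}\cap\{|\beta_k|<\delta\text{ for all }k\neq j\}$ has positive probability for a suitable $\delta>0$, and condition (iii) then forces $\int_A\beta_j\,d\Pri=-\infty$. Because $\beta_j<0$ $\Pri$-a.s., its positive part vanishes while $\int\beta_j^-\,d\Pri\ge-\int_A\beta_j\,d\Pri=\infty$, so $E_\Pri\beta_j=-\infty$. Together with condition (iv) this gives $E_\Pri Y=m\,E_\Pri\beta_j+C\sum_{k\neq j}E_\Pri|\beta_k|=-\infty$, using $m>0$.

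Finally I would transfer this to $\max_i\eta_i$ through the measure-theoretic definition of an integral equal to $-\infty$, rather than through linearity of expectation. Since $Y^+\le C\sum_{k\neq j}|\beta_k|$ is integrable, $E_\Pri Y=-\infty$ forces $\int Y^-\,d\Pri=\infty$; and from $\max_i\eta_i\le Y$ we get $(\max_i\eta_i)^+\le Y^+$ and $(\max_i\eta_i)^-\ge Y^-$, so $\max_i\eta_i$ has integrable positive part and non-integrable negative part, i.e.\ $E_\Pri\{\max_i\eta_i\}=-\infty$. Lemma~\ref{lemma:glm-sing}(ii) then delivers the conclusion.

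I expect the only genuine subtleties to be the two extended-real steps: deducing the unconditional $E_\Pri\beta_j=-\infty$ from the conditional hypothesis (iii) on the event $A$, and justifying that $E_\Pri Y=-\infty$ passes to $E_\Pri\{\max_i\eta_i\}=-\infty$ via the $g^+/g^-$ definition, since an $\infty-\infty$ cancellation must be avoided throughout. Notably, no prior independence among the coordinates of $\beta$ is required: the bound $Y$ uses only deterministic design constants and already separates $\beta_j$ from the other $\beta_k$, so the dependence permitted by the hypotheses causes no difficulty. The remaining estimates are routine.
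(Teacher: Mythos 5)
Your proof is correct, and it takes a genuinely different route through the key step. Both arguments begin the same way: the upper bound of Lemma~\ref{lemma:info-bound} (you route it through Lemma~\ref{lemma:glm-sing}(ii), which is the same inequality) reduces everything to showing $\E_\Pri\{\max_i \eta_i\}=-\infty$. The paper then follows the template of the logistic proof (Proposition~\ref{prop:logistic-negative}): it introduces the event $\mathcal{E}_2=\{\beta_j<-1,\ |\beta_k|<\delta \text{ for all } k\neq j\}$ together with the $\epsilon$-separation constant, bounds $\max_i\eta_i\leq \beta_j\min_i f_j(x_i)+\epsilon$ \emph{on} $\mathcal{E}_2$, gets $\E[\max_i\eta_i\given\mathcal{E}_2]=-\infty$ from (iii), shows $\E[\max_i\eta_i\,\mathbb{I}(\mathcal{E}_2^C)]<\infty$ from a second bound valid off the event together with (iv), and adds the two pieces. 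You instead exploit conditions (i) and (v) to avoid any event decomposition at the level of $\eta$: since $\beta_j<0$ almost surely and $f_j(x_i)\geq m>0$, the single bound $\max_i\eta_i\leq m\beta_j+C\sum_{k\neq j}|\beta_k|$ holds $\Pri$-almost surely, and the conditioning event is used exactly once, to upgrade the conditional hypothesis (iii) to the unconditional statement $\E_\Pri(\beta_j)=-\infty$ (no $\infty-\infty$ cancellation can occur because $\beta_j^+=0$ a.s.). Condition (iv) and the $g^+/g^-$ bookkeeping then finish. What your version buys: it dispenses with the $\epsilon$-separation machinery and the conditional/complement split, makes the extended-real accounting more transparent, and explicitly covers the degenerate case $|F^\T F|=0$ (which the paper's bound absorbs silently, since then $\log|F^\T F|=-\infty$). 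What the paper's version buys is uniformity: the same event-based argument carries over to the logistic and probit models, where $\beta_j$ is not sign-constrained, the relevant quantities are $|\eta_i|$ or $\eta_i^2$, and a global almost-sure bound of your form is unavailable. One small presentational point: you assert $\E_\Pri Y=-\infty$ by linearity and then deduce $\int Y^-\,d\Pri=\infty$; strictly the logic runs the other way, i.e. one first checks $\int Y^+\,d\Pri\leq C\sum_{k\neq j}\E_\Pri|\beta_k|<\infty$ and $\int Y^-\,d\Pri=\infty$ (from $|m\beta_j|\leq Y^-+C\sum_{k\neq j}|\beta_k|$), and these two facts \emph{are} the statement $\E_\Pri Y=-\infty$; but this is precisely the subtlety you flag, and the argument goes through.
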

\begin{corollary}
For a Poisson model with log link containing an intercept, i.e. $f_0(x)=1$, if $\beta_0$ has a negative half-Cauchy prior independently of $\beta_k$, $k=1,\ldots,p-1$, with  $\E_\Pri|\beta_k| < \infty$, then $\Pri$ is singular. 
\end{corollary}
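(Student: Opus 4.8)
The plan is to derive this as a direct application of Proposition \ref{prop:poiss-sing} with the index choice $j=0$, corresponding to the intercept. I would verify, in turn, that the stated prior $\Pri$ satisfies conditions (i)--(v) of that proposition, since this yields $\E_\Pri \log |M(\xi;\bbeta)| = -\infty$ for every $\xi$, which is precisely the definition of a singular prior distribution. The essential observation that upgrades a single-design statement to one about the whole prior is that, because $f_0(x) \equiv 1$, condition (v) ($f_j(x_i) > 0$ for all $i$) holds automatically for \emph{every} design $\xi \in \Xi$; thus no design can escape singularity, and it suffices to check the remaining conditions once.

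Conditions (i), (ii) and (iv) I would check from the assumed structure of $\Pri$. A negative half-Cauchy law for $\beta_0$ is supported on $(-\infty,0)$, giving (i), and assigns positive probability to $\{\beta_0 < -1\}$. Because $\beta_0$ is independent of $(\beta_1,\ldots,\beta_{p-1})$, the conditional probability in (ii) reduces to the unconditional $\Pr(|\beta_k| < \delta \text{ for all } k \neq 0)$, which is positive for each $\delta>0$ whenever the remaining coordinates place mass in every neighbourhood of the origin (as holds for any prior with density positive at $\mathbf{0}$); I would note this mild regularity explicitly. Condition (iv), $\E_\Pri|\beta_k| < \infty$ for $k\neq 0$, is exactly the finite first-moment hypothesis of the corollary.

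The crux is condition (iii), the requirement that $\E_\Pri[\beta_0 \mid \beta_0 < -1,\, |\beta_k| < \delta \text{ for all } k\neq 0] = -\infty$. Using independence, the conditioning on the other coordinates is irrelevant, so this equals $\E_\Pri(\beta_0 \mid \beta_0 < -1)$. I would then compute this directly from the (possibly scaled) negative half-Cauchy density, which is proportional to $(1+x^2)^{-1}$ on $(-\infty,-1)$: since $\int_{-\infty}^{-1} |x|/(1+x^2)\,dx$ diverges, the conditional mean is $-\infty$. This heavy-tail divergence is the single genuine calculation in the argument and is the mechanism by which singularity arises, mirroring the infinite-conditional-mean hypothesis on the distinguished coefficient in Proposition \ref{prop:poiss-sing}.

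With (i)--(v) established, Proposition \ref{prop:poiss-sing} gives $\E_\Pri \log|M(\xi;\bbeta)| = -\infty$ for each $\xi$, and hence $\Pri$ is singular. The main obstacle is not any deep estimate but the bookkeeping around condition (ii): one must confirm that the auxiliary coordinates concentrate mass near the origin, so that the conditioning region has positive probability for all $\delta>0$. Everything else follows from prior independence and the divergent tail integral of the negative half-Cauchy.
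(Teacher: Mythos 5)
Your route---invoking Proposition \ref{prop:poiss-sing} with $j=0$ and observing that $f_0\equiv 1$ makes condition (v) automatic for every design---is exactly the derivation the paper intends, and your checks of (i), (iii) and (iv) are correct; in particular, independence plus the divergence of $\int_1^\infty x/(1+x^2)\,dx$ gives (iii). However, what you describe as ``bookkeeping'' around condition (ii) is a genuine gap, not a formality. The corollary's hypotheses are only independence of $\beta_0$ from the remaining coefficients and $\E_\Pri|\beta_k|<\infty$; they do \emph{not} imply that $(\beta_1,\ldots,\beta_{p-1})$ places mass near the origin. For instance, the corollary permits $\beta_1$ to be a point mass at $2$ (independent of $\beta_0$, with $\E_\Pri|\beta_1|=2<\infty$), in which case $\Pr(|\beta_1|<\delta \given \beta_0<-1)=0$ for all $\delta<2$, condition (ii) fails, and Proposition \ref{prop:poiss-sing} cannot be invoked. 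Patching this by assuming a density positive at the origin, as you propose, proves a strictly weaker statement than the one claimed.

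The statement is nevertheless true as written, and the intercept structure admits a direct argument that bypasses the conditioning machinery entirely. Since $w(\eta)=e^{\eta}$, Lemma \ref{lemma:info-bound} gives $\log|M(\xi;\bbeta)| \leq p\max_i \eta_i + \log|F^\T F|$ (if $|F^\T F|=0$ then $|M(\xi;\bbeta)|=0$ identically and the conclusion is trivial). With an intercept, $\eta_i = \beta_0 + \sum_{k\geq 1} f_k(\vec{x}_i)\beta_k$, so
\begin{equation*}
\max_i \eta_i \;\leq\; \beta_0 + \sum_{k=1}^{p-1} \max_i |f_k(\vec{x}_i)|\,|\beta_k| \,.
\end{equation*}
The positive part of the right-hand side is dominated by the sum (since $\beta_0<0$), which has finite expectation by hypothesis, while its negative part has infinite expectation because $\E_\Pri(-\beta_0)=\infty$ for the negative half-Cauchy. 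Hence $\E_\Pri\max_i\eta_i=-\infty$ under the paper's integration convention, and therefore $\phi(\xi;\Pri)=-\infty$ for every $\xi\in\Xi$, i.e.~$\Pri$ is singular---with no assumption on the $\beta_k$, $k\neq 0$, beyond independence and finite first moments. This is the argument you should give: it covers the full statement, and it isolates the real mechanism (the intercept contributes $\beta_0$ additively to \emph{every} linear predictor, so its infinite negative mean drags down $\max_i\eta_i$ regardless of the other coefficients).
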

Here, a heavy-tailed negative intercept parameter can result in a singular prior. Intuitively, it is clear that large negative values of $\beta_0$ will lead to experiments where most of the responses are zero, leading to difficulties obtaining precise estimates of $\beta_0$ and the other parameters.}
\mychapter{Numerical methods to overcome ill-conditioning}
\label{sec:illcond}
\mysection{Objective function approximation}

In performing a numerical search for Bayesian $D$-optimal designs it is necessary to approximate the objective function, usually via a weighted sum,
\begin{equation}
\phi(\xi;\Pri) \approx \phi(\xi; \Q) 
= \sum_{l=1}^{N_\Q} 
	v_l 
	\log|M(\xi ; \bbeta^{(l)} ) | \,,
\label{eq:naive-quadrature}
\end{equation} over a weighted sample, 
$$
\Q = 
\left\{  
\begin{array}{ccc}
\bm{\beta}^{(1)} & \ldots & \bm{\beta}^{(N_\Q)} \\
v_1 &\ldots &v_{N_\Q} 
\end{array}
\right\}\,,
$$
of parameter vectors, $\bm{\beta}^{(l)}$, $l=1,\ldots, N_\Q$, with corresponding integration weights $v_l$,  satisfying $\sum_{l=1}^{N_\Q} v_l = 1$. 

The sample $\mathcal{Q}$ may be obtained, for example, by space-filling criteria, 
as used by \citet*{woods06}, Latin hypercube sampling, or a quadrature scheme, such as that applied by \citet*{gotwalt}. 
Quadrature methods, and in particular the Gotwalt method, can often yield highly accurate approximations. 

A problem with approximation \eqref{eq:naive-quadrature}
is that for multi-parameter models numerical evaluation of $\phi(\xi;\Q)$ can fail due to the presence 
of ill-conditioned matrices $M(\xi; \beta^{(l)})$, whose determinant will be estimated numerically as zero. {Note this can occur even for non-singular $\cP$; for singular $\cP$ there is little point in evaluating $\phi(\xi;\Q)$ since $\phi(\xi;\cP)=-\infty$.}
When numerical evaluation of $\phi(\xi;\Q)$ fails for all $\xi \in \Xi$,  we say that $\Q$ is an \emph{ill-conditioned quadrature scheme}.
{In principle, $\Q$ can be ill-conditioned for any prior distribution. However, for the models considered here, such as logistic regression, ill-conditioning of $\Q$ is more likely if the underlying prior distribution is heavy-tailed. 
In that case, there is high probability of large $\beta$, and so also of $M(\xi;\beta)$ being ill-conditioned.
For any prior, even without heavy tails, other circumstances that may lead to ill-conditioning of $\Q$ include: (i) use of a quadrature scheme, such as the Gotwalt method, which oversamples the tails of $\mathcal{P}$; and (ii) use of a large number of quadrature points. In most integration problems, an increased number of quadrature points leads to improved approximation of the integral; paradoxically, in Bayesian $D$-optimal design this  may cause numerical evaluation to fail due to ill-conditioning. }
\\[0em]
\mysection{Objective function bounds for logistic regression}
\label{sec:logistic-numerical}

For some important models, it is possible to obtain bounds that allow approximation of $\phi(\xi; \Q)$ when $\Q$ is ill-conditioned, {as often occurs for heavy-tailed priors. 
These bounds may be applied to enable straightforward selection of Bayesian $D$-efficient designs for such priors (see Section \ref{sec:logistic-use-of-bounds}).} 
Here we focus on the case of logistic regression, {but a similar approach can be used for the compartmental model (using Lemma \ref{lemma:compart-bounds}), and other GLMs.}
From Lemma \ref{lemma:info-bound} and \eqref{eq:logistic-weight}, we see that $\phi(\xi; \bbeta) = \log |M(\xi;\bm{\beta})|$ lies in $[\phi_L(\xi;\bbeta), \phi_U(\xi;\bbeta)]$, where 
\begin{align*}
\phi_L(\xi; \bbeta) &=  \log|F^\T F|  + p \min_{i=1,\ldots,n} \{ -|\eta_i| + 2\log\expit|\eta_i| \}  \\
\phi_U(\xi; \bbeta) &= \log|F^\T F| + 
 p\max_{i=1,\ldots,n} \{ -|\eta_i| + 2\log\expit|\eta_i| \}  \,.
\end{align*}
Let  $\cS$ be the set of   $l \in \{1,\ldots, N_\Q\}$ for which $M(\xi; \bm{\beta}^{(l)})$ is ill-conditioned, then:
\begin{equation}
\phi_L(\xi ; \Q) \leq \phi(\xi; \Q) \leq \phi_U(\xi; \Q) \,,
\label{eq:bound-phi}
\end{equation}
where 
\begin{align*}
\phi_L(\xi; \Q) 
	&= \sum_{l \in \{1,\ldots,N_\Q\}\backslash \cS } 
		v_l \log |M(\xi ; \bbeta^{(l)})| 
	+ \sum_{l \in \cS} v_l  
	\log|F^\T F|  \\
	& \qquad  
		+	\sum_{l \in \cS} v_l	\,
		p\min_{i=1,\ldots,n} 	 
 	 		\{ -| \vec{f}^\T(\vec{x}_i) \bbeta^{(l)} | + 2\log\expit| \vec{f}^\T(\vec{x}_i) \bbeta^{(l)} |  \}  \allowdisplaybreaks	 \\ 
\phi_U(\xi; \Q) 
	&= \sum_{l \in \{1,\ldots,N_\Q\}\backslash \cS } 
		v_l \log |M(\xi ; \bbeta^{(l)})| 
	+ \sum_{l \in \cS} v_l  
	\log|F^\T F|  \\*
	& \qquad  
		+	\sum_{l \in \cS} v_l	\,
		p\max_{i=1,\ldots,n} 
 	 		\{ -| \vec{f}^\T(\vec{x}_i) \bbeta^{(l)} | + 2\log\expit| \vec{f}^\T(\vec{x}_i) \bbeta^{(l)} |  \} \,.
\end{align*}

The bounds $\phi_L(\xi; \Q)$, $\phi_U(\xi; \Q)$ are   much better  conditioned  than $\phi(\xi; \Q)$. The bounds for $\log|M(\xi;\bm{\beta}^{(l)})|$, $l \in \mathcal{S}$, are often wide. However, as the corresponding $v_l$ is often very small, we may nonetheless obtain from \eqref{eq:bound-phi} a relatively narrow interval for $\phi(\xi;\Q)$. Note that \eqref{eq:bound-phi} specifies an interval that contains the approximation $\phi(\xi;\Q)$, and not necessarily the value of   $\phi(\xi; \Pri)$.

In the remainder of Section \ref{sec:illcond}, we use the following example to show how the bounds enable an extension of the set of prior distributions for which Bayesian $D$-efficient designs can be obtained in practice. We begin by illustrating the use of bounds for the objective function.

\begin{example}
\label{ex:objfn_eval}
Potato-packing  experiment (\citet*{woods06}). We use one of the authors' models, defined  by
\begin{align*}
\fx{} &= (1, x_1, x_2, x_3, x_1 x_2, x_1 x_3, x_2 x_3)^\T \\
\bbeta &= (\beta_0, \beta_1, \beta_2, \beta_3, \beta_{12}, \beta_{13}, \beta_{23} )^\T \,,
\end{align*}
where $q=3$,  $\vec{x} = (x_1,x_2,x_3)^\T$.
We adopt a different prior distribution, namely 
$\log \beta_0 \sim N(-1, 2)$, $\beta_1 \sim N(2, 2)$, 
$\beta_2 \sim N(1, 2)$, $\beta_3 \sim N(-1, 2)$, 
and $\beta_{12},\beta_{13},\beta_{23} \sim N(0.5,2)$ independently. 
{Note that the log-normal prior for the intercept parameter is heavy-tailed.} 
However, from Theorem \ref{thm:logistic-positive}, {the above joint} prior distribution is non-singular.

For a double-replicate of the $2^3$ full factorial design, the value of $\phi(\xi; \Pri)$ was approximated using the Gotwalt quadrature scheme, with 5 radial points and 4 random rotations. Direct numerical evaluation of  $\phi(\xi; \Q)$ failed, {since $\mathcal{S}$ was non-empty: it contained 39 parameter vectors.} However, from \eqref{eq:bound-phi}, $\phi(\xi; \Q) \in [-6.85, -6.78]$.
\end{example}
\par

\mysection{Use of bounds in design optimization and assessment}
\label{sec:logistic-use-of-bounds}

The bounds from \eqref{eq:bound-phi} may also be used within an optimization algorithm to help find Bayesian $D$-efficient designs. The \emph{Bayesian $D$-efficiency} of $\xi$ is
$$
\operatorname{Bayes-eff}(\xi ; \Pri)
= \exp \{  [ \phi(\xi ; \Pri) - \phi(\xi^\ast_\cP ; \Pri )]  /p\} \times 100 \% \,,
$$
where $\xi^\ast_\cP \in \argmax_{\xi \in \Xi} \phi(\xi; \Pri)$ is a Bayesian $D$-optimal design. Bayesian $D$-efficiencies near 100\% indicate that $\xi$ achieves a near-optimal trade-off in performance across the support of the prior distribution for $\bbeta$.

When $\Q$ is well-conditioned, the Bayesian $D$-efficiency may be approximated by numerical search for a $\xi^\ast_\Q \in \argmax_{\xi \in \Xi} \phi(\xi; \Q)$ that maximizes the quadrature approximated objective function, and substitution of the design found into
 $$
\operatorname{Bayes-eff}(\xi ; \Q) = \exp \{  [ \phi(\xi ; \Q) - \phi(\xi^\ast_\Q ; \Q)]  /p\} \times 100 \% \,.
$$
However, if $\Q$ is ill-conditioned, {for example if $\cP$ is heavy-tailed,} then this method fails since (i) $\phi(\xi;\Q)$ cannot be evaluated directly, and (ii) $\xi^\ast_\Q$ cannot be found using a numerical search. We may nonetheless use numerical methods to find designs $\xiQL$ and $\xiQU$ that maximize the lower and upper bounds respectively, 
i.e.{} 
$\xiQL \in \argmax_{\xi \in \Xi} \phi_L(\xi; \Q)$ and 
$\xi^\ast_{\Q,U} \in \argmax_{\xi \in \Xi}\phi_U(\xi; \Q)$. Then a lower bound for the Bayesian efficiency of $\xiQL$ can be approximated, via substitution of the designs found into 
\begin{equation}
\operatorname{Bayes-eff}(\xiQL ; \Q) \geq \exp \{  [ \phi_L(\xiQL; \Q) - \phi_U(\xiQU ; \Q)]  /p\} \times 100 \% \,.
\label{eq:bayes-LB}
\end{equation}
To find exact designs that maximize the bounds, we use a continuous co-ordinate exchange algorithm similar to that of \citet*{gotwalt}.

\begin{excont}[continued]
A co-ordinate exchange algorithm was used, with 100 random starts, to search for $\xiQL$, $\xiQU$ among exact designs with $n=16$ runs.  The quadrature scheme $\Q$ was generated using the Gotwalt method, with 3 radial points and one random rotation, yielding a total of 217 support points for $\Q$. The design $\xiQL$, given in Table \ref{tab:ex1-design}, {is very similar to $\xi^\ast_{\mathcal{Q},U}$: to 2 d.p.~the two are identical. For this $\mathcal{Q}$,  the objective function $\phi(\xi;\mathcal{Q})$ cannot be computed exactly due to ill-conditioning. Thus, given an alternative design $\xi'$, e.g. a 16-run combination of $\xi^\ast_{\mathcal{Q},L}$ and $\xi^\ast_{\mathcal{Q},U}$, it is not possible to evaluate whether $\xi'$ has higher Bayesian $D$-efficiency than $\xi^\ast_{\mathcal{Q},L}$.  However, the lower bound on the Bayesian $D$-efficiency is $\operatorname{Bayes-eff}(\xiQL  ; \Q) \gtrsim 99.4 \%$, so any improvement to be gained by using a different design will be very small.}
\end{excont}

\begin{table}[ptb]
\centering
\begin{tabular}{rrrrrrrrr}

 Run & $x_1$ & $x_2$ & $x_3$ && Run& $x_1$ & $x_2$ & $x_3$ \\ 
  \hline
1 & 0.456 & 1.000 & 1.000 & &  9& -1.000 & -1.000 & 1.000 \\ 
2 & -1.000 & -1.000 & -1.000 && 10 & -0.269 & 1.000 & 1.000 \\ 
3& -1.000 & 0.512 & -1.000 & &11 & 1.000 & -1.000 & -1.000 \\ 
 4& -0.137 & -1.000 & -1.000 && 12 & 1.000 & -1.000 & 0.045 \\ 
 5& 1.000 & -1.000 & 1.000 &  &13& -1.000 & -1.000 & -0.124 \\ 
6 & 1.000 & 1.000 & -1.000 & &14 & 0.085 & -1.000 & 1.000 \\ 
7 & 1.000 & -0.038 & 1.000 & & 15& -1.000 & 1.000 & -0.213 \\ 
8  & -1.000 & 1.000 & 1.000 & & 16 & -0.149 & 1.000 & -1.000 \\ 

\end{tabular}\caption{Example \ref{ex:objfn_eval}, Bayesian design, $\xiQL$, that maximizes the lower bound $\phi_L(\xi;\Q)$. \label{tab:ex1-design}}
\end{table}

\begin{figure}[bhpt]
\begin{center}
\includegraphics[width=0.7\linewidth]{./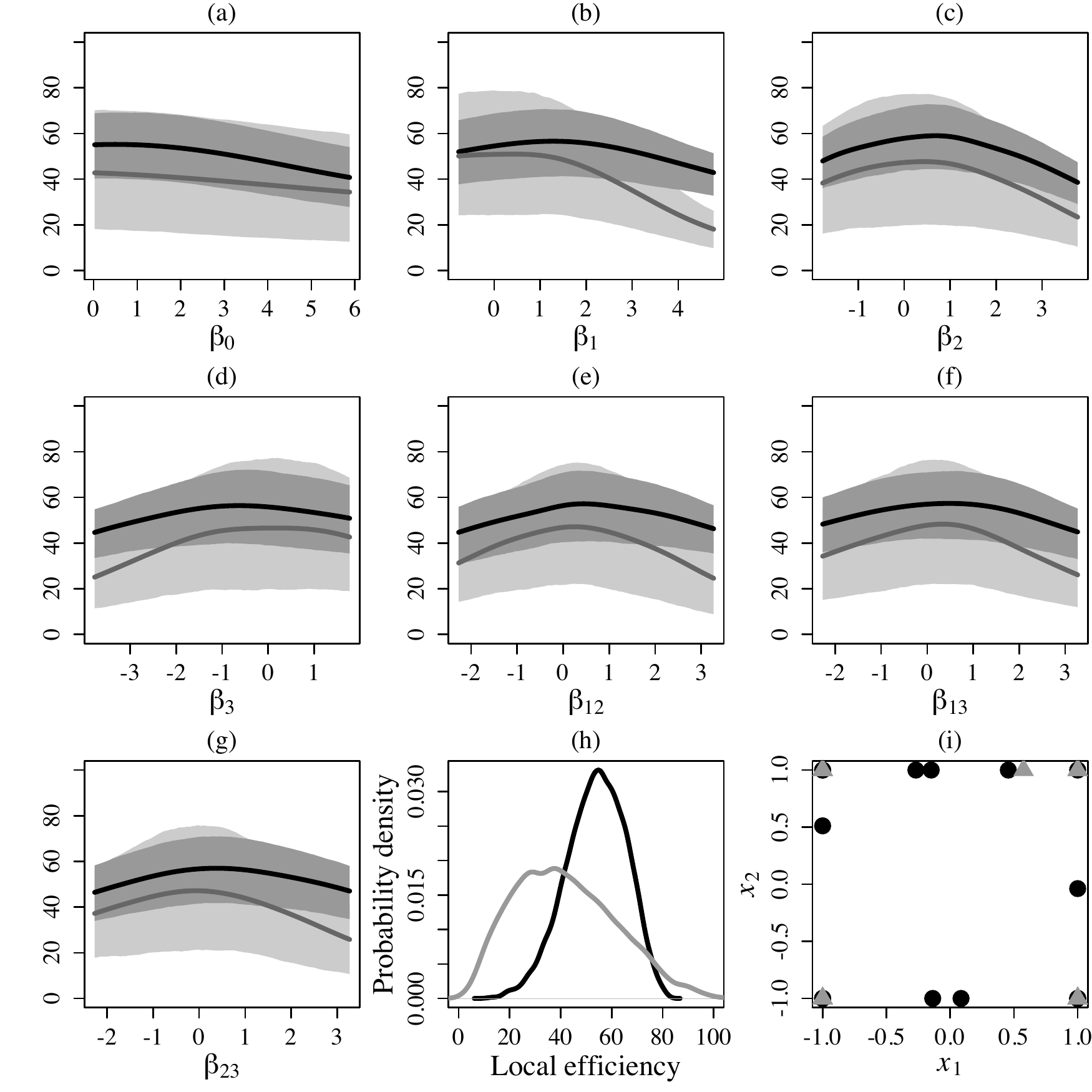}
\caption{ 
{Robustness comparison of the Bayesian $D$-efficient design $\xiQL$ (black lines/points) versus the EW-optimal design (grey lines/points). {\em Panels} (a)--(g): conditional distribution, given $\beta_k$, of the local efficiency, $\eff(\xi ; \beta)$,  induced by the prior on $\beta$  (solid line, conditional mean; shaded region, 10\% and 90\% quantiles). {\em Panel} (h): marginal distribution of the local efficiency. {\em Panel} (i): 2-dimensional projection of the design points.}
\label{fig:ex1-main-effs}}
\end{center}
\end{figure}

Note that the computation of the numerical value of the lower bound in \eqref{eq:bayes-LB} is  approximate since we cannot be certain to have found the global optimum $\xiQU$, although in the above example an assessment of the objective function values from the different random initializations of the algorithm suggests that the number of starts is adequate.

To assess the performance of a given design, $\xi$, for different $\bbeta$, we use the  local $D$-efficiency, 
\begin{equation}
\eff(\xi; \bbeta) =  \{ |M(\xi ; \bbeta)| /  |M(\xi^\ast_\bbeta ; \bbeta)| \}^{1/p}
 \,,
\label{eq:local-eff}
\end{equation}
where $\xi^\ast_{\bbeta} \in \argmax_{\xi \in \Xi} |M(\xi;\bbeta)|$ is a locally $D$-optimal design.
 For some  $\bbeta$, $M(\xi; \bbeta)$ is well-conditioned for most $\xi \in \Xi$. In this case, the local $D$-efficiency can be approximated by searching numerically for $\xi^\ast_{\bbeta}$, and substituting the design found into \eqref{eq:local-eff}.
 For other $\bbeta$, $M(\xi; \beta)$ is  ill-conditioned for all $\xi \in \Xi$. 
 Then  approximate bounds on the efficiency can be derived by numerical search  
 for the designs $\xi^\ast_{U,\bbeta} \in \argmax_{\xi \in \Xi} \phi_U(\xi; \bbeta)$ and 
 $\xi^\ast_{L,\bbeta} \in \argmax_{\xi \in \Xi}\phi_L(\xi;\bbeta)$,  and from the fact that 
\begin{equation}
\exp \frac{1}{p}
[
\phi_L(\xi; \bbeta) - \phi_U(\xi^\ast_{U,\bbeta} ; \bbeta)
]
\leq
\eff(\xi;\bbeta) 
\leq
\exp \frac{1}{p}
[
\phi_U(\xi;\bbeta) - \phi_L(\xi^\ast_{L,\bbeta}; \bbeta)
] \,.
\label{eq:logistic-eff-bound}
\end{equation}

To visualize the dependence of the local efficiency on the individual parameters, {for each regression coefficient $\beta_j$} we plot the approximate {mean and 10\% and 90\% quantiles of the conditional distribution of $\eff(\xi;\beta)$ given $\beta_j$.} Owing to the need to search for a locally $D$-optimal design,  evaluation of $\eff(\xi ; \bbeta)$  is computationally intensive. Thus, before computing {the conditional mean and quantiles}  it is advantageous to first build a statistical emulator  of  $\eff(\xi;\bbeta)$ as a  function  of $\bbeta$, using Gaussian process interpolation. This is analogous to the approach followed in the computer experiments literature when the main effects of a computationally expensive simulator are visualized (e.g.{} \citet*[][Ch.7]{santner2003design}). A similar method was used by \citet*{waite} to visualize the efficiency profile of Bayesian designs for logistic models with random effects.

\begin{excont}[continued]
We consider further the performance of the design, $\xiQL$, that maximizes the lower bound for $\phi(\xi;\mathcal{Q})$. The support points of the quadrature scheme are used to train the emulator of $\eff(\xiQL ;\bbeta)$. In the example, only three out of the 217  $\bbeta$ vectors in $\mathcal{Q}$  led to  $M(\xi; \beta)$ being ill-conditioned for all $\xi \in \Xi$. For these vectors, the efficiency bounds in \eqref{eq:logistic-eff-bound} gave no additional information beyond $\eff(\xiQL;\bbeta) \in [0\%, 100\%]$. Thus we decided to omit these  $\bbeta$ vectors from the training set, as including the bounds $[0\%,100\%]$ would not substantively reduce uncertainty about the efficiency at these $\bbeta$. 
Figure \ref{fig:ex1-main-effs} shows approximations to the conditional mean {and conditional quantiles (given $\beta_j$) of the local efficiency, obtained using the emulator and Monte Carlo sampling.} Also shown is a {kernel density estimate for the marginal} distribution of local efficiencies of $\xiQL$ induced by the prior distribution on $\bbeta$. This is derived by computing the Kriging-based estimates of $\eff(\xiQL ; \bbeta)$ for a Monte Carlo sample of 10,000 $\bbeta$ vectors from the prior distribution. From Figure \ref{fig:ex1-main-effs}, it appears that the modal {local} efficiency of $\xiQL$ is in the range 55-60\%. The lower and upper quartiles of the {local} efficiency distribution are approximately 46\% and 62\%. 
{Although at first glance the typical local efficiencies may appear fairly low, it is important to remember that due to the large amount of prior uncertainty here, there will exist no design whose local efficiency is significantly higher than $\xiQL$  uniformly across the entire parameter space. 
The design $\xiQL$ achieves a near-optimal trade-off in performance, as quantified by the high estimated Bayesian $D$-efficiency obtained earlier,  across the very different parameter scenarios that are possible under the prior for $\beta$. The design is thus relatively robust. There appear to be no significant areas of the parameter space where the design performance is very poor and, for example, the prior probability that $\eff(\xiQL;\beta)< 0.2$ appears negligible.  }

{For comparison, results are included for the EW-optimal design, $\xi^\ast_\text{EW}$, advocated by \citet{yang2016optimal}, which maximizes 
$$
\psi_{\text{EW}}(\xi) 
= \log |E M_\beta(\xi ; \beta) | 
= \log \left| \sum_{i=1}^{n} \E[w(\eta_i)] \,f(x_i) f^\T(x_i) \right| \,.
$$
 In factorial experiments for logistic regression, \citet{yang2016optimal} found EW-optimal designs to be of comparable statistical efficiency to Bayesian $D$-optimal designs, while requiring less computational effort to obtain. Here, as shown in Figure \ref{fig:ex1-main-effs}(a)--(h), the EW-optimal design is much less robust than the Bayesian $D$-efficient design, at least in this case; its local efficiency $\eff(\xi^\ast_\text{EW} ; \theta)$ has generally lower mean, both conditionally on $\beta_j$ and marginally, and the local efficiency also exhibits higher variability. The smaller difference in robustness between Bayesian $D$-optimal and EW-optimal designs observed by \citet{yang2016optimal} may be due to their restriction to a factorial design space: here, the greater performance of the Bayesian $D$-efficient design appears to be due to the inclusion of a greater number of factor settings in the interior of $(-1,1)$ (see Figure \ref{fig:ex1-main-effs}(i)).
 
In addition to the worse statistical performance of the EW-optimal design in this case, here the computational convenience of EW-optimal designs over Bayesian $D$-optimal designs is much reduced. For factorial problems, a reduction in computational cost is achieved by precomputing $\E[w(\eta)]$ for every point in the finite design space, enabling faster evaluation of $\psi_{\text{EW}}(\xi)$. Here, precomputation is not possible since we have continuous factors and  therefore an uncountably infinite design space. One computational benefit of the EW criterion is that it successfully avoids problems with ill-conditioning, but this offers only a minor advantage over Bayesian $D$-optimality, for which ill-conditioning problems can now be overcome using the bounds developed in Section \ref{sec:logistic-numerical}.

 }
\end{excont}
\par

\mychapter{Discussion}
\label{sec:discussion}

{The central tenet of this paper is that it is not permissible to use a singular prior distribution in conjunction with Bayesian $D$-optimality as a design selection criterion. Our new theoretical results, summarized below, can help to ascertain whether a prior is singular or non-singular. This is useful, since if it can be demonstrated that the prior is non-singular, then we may proceed to find Bayesian $D$-optimal designs either using standard methods, or using the new numerical techniques developed in Section \ref{sec:illcond} if problems are encountered with ill-conditioning.  If it is instead demonstrated that the current prior is singular, then $\phi(\xi;\mathcal{P})=-\infty$ for all $\xi \in \Xi$, meaning that design selection fails. One rough intuitive interpretation of this is that the parameter uncertainty under this prior is so great that any design will have low (local) efficiency across a significant portion of the parameter space. In this case, there are two possibilities: (i) consider a different prior distribution, or (ii) adopt a different design selection criterion. These alternatives are considered in Sections \ref{sec:alt-prior} and \ref{sec:alt-criteria} respectively.

To summarize our theoretical results,  for three generalized linear models we have given conditions that can easily be checked to establish non-singularity of $\cP$ and, importantly, identified that a prominent class of default prior distributions for logistic regression should not be used for Bayesian $D$-optimal design. For the compartmental model in Section \ref{sec:compart}, sufficient conditions were established only for singularity of $\cP$, thus highlighting only prior distributions that should not be used. Though desirable, the proof of an inverse result guaranteeing non-singularity seems highly involved and is beyond the scope of this paper.  Future work could seek to develop results on singular prior distributions for population pharmacokinetic models, for which optimal sampling times are more commonly sought (e.g.~\citet{mentre}). Such models extend \eqref{eq:compart-standard} by allowing subject-specific kinetic parameters.}

{\mysection{Alternative prior distributions}
\label{sec:alt-prior}

Often there are multiple plausible candidates for a suitable prior distribution. 
For example, in the subjectivist framework, informative priors are elicited from expert knowledge by obtaining summaries to which a probability distribution may be fitted (e.g.~\citet{garthwaite2005statistical}, \citet{oakley2007uncertainty}). 
Typically there will be multiple distributions that fit the observed summaries. 
Away from this approach, if using uninformative or weakly informative priors there are still often multiple possible candidate priors. 
Thus, if design selection fails because $\mathcal{P}$ is singular but there exists an alternative candidate prior $\mathcal{P}'$ that is non-singular, then Bayesian $D$-optimality may be used with $\mathcal{P'}$ instead. 
Nonetheless, if adopting a subjectivist viewpoint we should be careful to avoid selecting prior distributions purely for analytical convenience if they do not accurately represent the available expert belief or knowledge.

As an example, in Section \ref{sec:exp} we found that $\cP: \theta\sim U(0,a)$ is singular for the exponential regression model.  
A natural question is whether it is sufficient to find designs for the non-singular prior $\cP_\epsilon : \theta \sim U(\epsilon, a)$ for some small value of $\epsilon$ (e.g.~$10^{-3}$ or $10^{-6}$). 
The adequacy of  $\cP_\epsilon$ as a representation of the expert's beliefs will depend substantially on the specifics of the application. 
For small $\epsilon$, the quartiles of $\cP$ and $\cP_\epsilon$ are similar, thus for example it is possible for both distributions to fit expert statements obtained by the bisection method (\citet{garthwaite2005statistical}). 
However, the implication of $\cP_\epsilon$ that there is zero probability that $\theta < \epsilon$ is too strong unless the expert is certain that $\theta\geq \epsilon$. 
The fidelity of the representation $\cP_\epsilon$ would be less important if the resulting design decision were insensitive to the choice of $\epsilon$. 
Unfortunately this is not the case, as shown by the proposition below and its proof (in the supplementary material). 
Intuitively, as $\epsilon \to 0$, some points in the Bayesian $D$-optimal design for $\mathcal{P}_\epsilon$ will converge to zero (while never being equal to zero).
\begin{proposition} 
\label{prop:exp-stability}
For the exponential model, if $\xi$ does not vary with $\epsilon$ then
$$
\operatorname{Bayes-eff}(\xi; \cP_\epsilon) \to 0 \quad \text{as} \quad \epsilon \to 0 \,.
$$
\end{proposition}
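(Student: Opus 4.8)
The plan is to exploit that the exponential decay model has a single parameter, $p=1$, so that
$$
\operatorname{Bayes-eff}(\xi; \cP_\epsilon) = \exp\{ \phi(\xi;\cP_\epsilon) - \phi(\xi^\ast_{\cP_\epsilon};\cP_\epsilon) \} \times 100\% \,,
$$
and it suffices to show the exponent diverges to $-\infty$ as $\epsilon \to 0$. I would establish this by a rate comparison: for a fixed design $\xi$ the objective decays like $-\log(1/\epsilon)$, whereas the optimal value $\phi(\xi^\ast_{\cP_\epsilon};\cP_\epsilon)$ decays only like $-\log\log(1/\epsilon)$, because the optimal support points are free to shrink toward zero in step with $\epsilon$. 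Note that $\cP_\epsilon$ is non-singular for $\epsilon>0$ by Proposition \ref{prop:exponential}, so $\phi(\xi^\ast_{\cP_\epsilon};\cP_\epsilon)$ is finite. We may assume $\xi$ has at least one positive support point (otherwise $\phi(\xi;\cP_\epsilon)=-\infty$ and the efficiency is identically zero); write $x_{\min} = \min_{i:x_i>0}\{x_i\}$ and $S_{xx}=\sum_i x_i^2$. The key moments are $\E_{\cP_\epsilon}(1/\theta) = \log(a/\epsilon)/(a-\epsilon)$ and $\E_{\cP_\epsilon}(\log\theta)$, the latter converging to the finite limit $\log a - 1$.

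First I would bound the fixed-design objective from above. Combining the reparameterization identity \eqref{eq:exp-reparam} with the upper inequality in \eqref{eq:exp-inequal}, rewritten via $\theta=1/\beta$, gives pointwise in $\theta$ that $\log|M_\theta(\xi;\theta)| \le \log S_{xx} - 2 x_{\min}/\theta - 4\log\theta$. Taking expectations under $\cP_\epsilon$ yields
$$
\phi(\xi;\cP_\epsilon) \le \log S_{xx} - 2 x_{\min}\,\frac{\log(a/\epsilon)}{a-\epsilon} - 4\,\E_{\cP_\epsilon}(\log\theta) \,,
$$
and since the final term is bounded while $\log(a/\epsilon)/(a-\epsilon) \sim a^{-1}\log(1/\epsilon) \to \infty$, we obtain $\phi(\xi;\cP_\epsilon) \le C_1 - (2x_{\min}/a)\log(1/\epsilon)\,(1+o(1))$ for a constant $C_1$.

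Next I would bound the optimal value from below by evaluating the objective at a single, $\epsilon$-dependent, design point. For a one-point design at $x$, identity \eqref{eq:exp-reparam} gives $\phi(x;\cP_\epsilon) = 2\log x - 2x\,\E_{\cP_\epsilon}(1/\theta) - 4\,\E_{\cP_\epsilon}(\log\theta)$, maximized at $x^\ast = 1/\E_{\cP_\epsilon}(1/\theta)$ with value $-2\log \E_{\cP_\epsilon}(1/\theta) - 2 - 4\,\E_{\cP_\epsilon}(\log\theta)$. As $\epsilon \to 0$ this behaves like $-2\log\log(1/\epsilon) + O(1)$. Since $\phi(\xi^\ast_{\cP_\epsilon};\cP_\epsilon)$ is a maximum over all designs, it is at least this large, so $\phi(\xi^\ast_{\cP_\epsilon};\cP_\epsilon) \ge -2\log\log(1/\epsilon) - C_2$ for a constant $C_2$. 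The minimizing point $x^\ast = (a-\epsilon)/\log(a/\epsilon) \to 0$ here makes concrete the stated intuition that the optimal support collapses toward zero.

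Subtracting, the exponent is at most $C_1 + C_2 - (2x_{\min}/a)\log(1/\epsilon) + 2\log\log(1/\epsilon)$, and since $\log(1/\epsilon)$ dominates $\log\log(1/\epsilon)$ this tends to $-\infty$; hence the efficiency tends to $0$. The main obstacle is the lower bound in the third step: one must confirm that the optimal design genuinely tracks the shrinking scale of the informative region, so that its objective decays only doubly-logarithmically rather than at the logarithmic rate suffered by any fixed design. Establishing this via the explicit single-point maximizer is precisely what separates the two rates and drives the efficiency to zero.
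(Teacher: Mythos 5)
Your proof is correct and takes essentially the same route as the paper: both arguments compare the fixed design against a one-point design whose support point shrinks like $1/\log(1/\epsilon)$ (the paper posits $x_\epsilon=-1/\log\epsilon$, you derive the exact one-point optimizer $x^\ast=(a-\epsilon)/\log(a/\epsilon)$), and both rest on $\E_{\cP_\epsilon}(1/\theta)=\log(a/\epsilon)/(a-\epsilon)\to\infty$ combined with \eqref{eq:exp-inequal} and \eqref{eq:exp-reparam}. The only stylistic difference is that you quantify the two decay rates separately ($\log(1/\epsilon)$ for the fixed design versus $\log\log(1/\epsilon)$ for the shrinking design), whereas the paper bounds the difference of the two objective values directly in the $\beta$-parameterization.
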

Thus, even if one were to compute the Bayesian $D$-optimal design for $\mathcal{P}_{\epsilon'}$, with say $\epsilon'=10^{-6}$, the resulting design would be highly inefficient when evaluated under $\mathcal{P}_\epsilon$  for $\epsilon \ll \epsilon'$.

The situation above is somewhat similar to problems in the objective Bayesian approach with improper uninformative priors (e.g.~\citet[][Ch.3]{berger1985statistical}; \citet{berger2006case}), which one may need to modify in order to obtain a proper posterior. For example, if an improper prior, say $U(10,\infty)$, does not give a proper posterior, one might attempt to replace it with $U(10, M)$, with $M$ large, e.g.~$10^5$ or $10^6$. However, the results would often be highly sensitive to the value chosen for $M$, which is arbitrary and typically has no objective justification.  For further discussion on the role of prior information in design of experiments, see \citet{woods2016bayesian}.

\mysection{Alternative selection criteria}
\label{sec:alt-criteria}

If all candidate prior distributions that agree  with the elicited prior  knowledge or beliefs are singular, then a Bayesian $D$-optimal design cannot be found and it is necessary to use an alternative design selection criterion that suffers  from fewer problems with singularities. 
One such criterion that has already been mentioned  is EW $D$-optimality (\citet{yang2016optimal}). 
Note that the numerical results in Section \ref{sec:logistic-use-of-bounds} suggest that if the problem is with an ill-conditioned $\mathcal{Q}$ rather than a singular $\cP$, then the EW $D$-optimal design may be less robust than a Bayesian $D$-efficient design found using the numerical methods developed here.} Another alternative is to select $\xi$ to maximize the \emph{mean local efficiency},
$$
\Psi(\xi; \Pri) = \E_\Pri \{ \eff(\xi ; \bm{\theta}) \} \,,
\label{eq:mean-local-eff}
$$
which is fairly insensitive to the presence of $\theta$ with $ |M(\xi; \bm{\theta})| \approx 0$. This is a special case of the objective function discussed by \citet*{dette1996optimal} ($\Phi_1$ in their notation). 
{Unlike Bayesian $D$-optimality, neither of the above alternative criteria has} the interpretation of approximate equivalence to the maximization of Shannon information gain. As an example of the use of the mean local efficiency criterion, consider {again} the exponential decay model from Section \ref{sec:exp}.
From Corollary \ref{cor:exp-unif}, when $\mathcal{P}= U(0,a)$, $a>0$, all  designs are Bayesian ($D$-)singular  with respect to $\cP$ for the $\theta$-parameterization. By contrast, it is shown easily that the design with a single support point $x= a/2$ is $\Psi$-optimal with a mean efficiency of approximately 67\%.
This design is locally $D$-optimal when $\theta$ is equal to its prior mean, but highly inefficient when $\theta$ is small. Thus, $\Psi$-optimal designs are much less strongly driven by their worst-case behaviour. {As with EW $D$-optimality, if the problem is in fact with an ill-conditioned $\mathcal{Q}$ rather than a singular $\cP$, then it is possible that $\Psi$-optimal designs may be less robust than Bayesian $D$-optimal or Bayesian $D$-efficient designs.}

A further alternative approach to design selection under parameter uncertainty is to consider maximin designs. In the case of greatest interest in this paper, $\Theta$ is such that  $\inf_{\theta \in \Theta} |M(\xi;\theta)|=0$ for all $\xi \in \Xi$, thus design selection clearly fails using the unstandardized maximin $D$-criterion (\citet{imhof2001maximin}). Often design selection will also fail when using the standardized maximin $D$-criterion.  It is clear that the Bayesian approach, under suitable prior distributions, benefits from greater robustness to the presence of singular $\theta$ than the use of maximin criteria. For related results see \citet{braess2007number}, where conditions are established under which the number of support points in a standardized maximin or Bayesian $D$-optimal approximate design grows arbitrarily large as $\Theta$ is expanded.  In contrast, in the work presented here it is supposed that $\Theta$ is fixed and the focus is instead on examining the adequacy of the set of competing exact designs under different prior distributions. Here, results have also been developed for additional multiparameter models and numerical methods proposed.
\\[1em]

\noindent {\bf Supplementary material}

The online supplementary material for this paper contains proofs of the analytical results described in the text. 
\par
\vspace{1em}

\noindent {\bf Acknowledgement}

I am grateful to {two anyonymous referees whose insightful and helpful comments prompted several improvements to the paper, and to} Professors David Woods and Susan Lewis for several invaluable discussions and comments. This work was supported by the UK Engineering and Physical Sciences Research Council, via a PhD studentship, Doctoral Prize, and a project grant. The work made use of the Iridis computational cluster at the University of Southampton.
\par
\vspace{1em}

\par

\noindent{\bf References}
\small
\bibliographystyle{agsm}
\bibliography{./refs2_SS.bib}

\vskip .65cm
\noindent
 School of Mathematics, University of Manchester, Manchester, M13 9PL, U.K.
 \vskip 2pt
\noindent
E-mail: timothy.waite@manchester.ac.uk

\clearpage
\setcounter{page}{1}
\def\thepage{S\arabic{page}}

\markright{  }

\markboth{\hfill{\footnotesize\rm TIMOTHY W. WAITE} \hfill}{\hfill {\footnotesize\rm SUPPLEMENTARY MATERIALS} \hfill}{}

\renewcommand{\thefootnote}{}
$\ $\par \fontsize{12}{14pt plus.8pt minus .6pt}\selectfont


 \centerline{\large\bf  SINGULAR PRIOR DISTRIBUTIONS  {AND}  }
\vspace{2pt}
\centerline{\large\bf  { ILL-CONDITIONING} IN BAYESIAN \emph{D}-OPTIMAL }
\vspace{2pt}
\centerline{\large\bf  DESIGN FOR {SEVERAL} NONLINEAR MODELS}
\vspace{.4cm}
\centerline{Timothy W. Waite}
\vspace{.4cm}
\centerline{\it  University of Manchester}
\vspace{.55cm}
 \centerline{\bf Supplementary Material}
\vspace{.55cm}
\fontsize{9}{11.5pt plus.8pt minus .6pt}\selectfont
\noindent
\par

\setcounter{chapter}{0}
\setcounter{equation}{0}
\setcounter{lemma}{4}
\def\theequation{S\arabic{chapter}.\arabic{equation}}
\def\thechapter{S\arabic{chapter}}

\fontsize{12}{14pt plus.8pt minus .6pt}\selectfont

\mychapter{Proofs of analytical results}

\begin{proof}[Proof of Proposition \ref{prop:exponential}]
Assume that at least one $x_i >0$.
For the $\theta$ parameterization, we demonstrate two implications: (i) if $\E_\cP(1/\theta)<\infty$ and $\E_\cP(\log\theta)<\infty$, then $\phi(\xi;\cP) > -\infty$; and (ii) if $\E_\cP(\log\theta) = \infty$ or $\E_\cP(1/\theta) = \infty$, then $\phi(\xi;\cP) = -\infty$. Here, $\phi(\xi;\cP) = E \{ \log|M_\theta(\xi;\theta)| \}$, where $\log|M_\theta (\xi;\theta)|$ is given by \eqref{eq:exp-reparam}.

For (i), observe that 
$
-\infty \leq \E_\Pri\left\{ 
(2/\theta)
\max_{i=1,\ldots,n} \{ x_i\}
+ 4 \log \theta
\right\} < \infty$.
Considering the left hand side of \eqref{eq:exp-inequal} and the reparameterization \eqref{eq:exp-reparam}, 
$$
-\infty < 
\log \sum_{i=1}^{n} x^2_i
- \E_\Pri \left\{ 
(2/\theta)
\max_{i=1,\ldots,n} \{ x_i\}
+ 4 \log \theta
\right\} \leq \phi(\xi; \cP) \,,
$$
as required.
For (ii), note that in addition to \eqref{eq:exp-inequal}, the following weaker inequality holds:
$$
\phi(\xi; \cP) \leq
\log \sum_{i=1}^{n} x^2_i
- 4\log \theta \,.
$$
Taking expectations of both sides, if  $\E_{\cP} (\log \theta )= \infty$ then $\phi(\xi;\cP) = -\infty$.

For the other case, let 
$$
b(\theta) =
\frac{1}{\theta} \left\{ 
2 \min_{i=1,\ldots,n} \{x_i : x_i >0 \}
+ 4 \theta \log \theta
\right\} \,.
$$
Since $\theta \log\theta \to 0$ as $\theta \to 0$, there is some $\delta>0$ such that, for $\theta<\delta$, 
\begin{equation*}
b(\theta)  \geq (1/\theta)\min_{i=1,\ldots,n} \{x_i : x_i >0 \} \,.
\end{equation*}
Hence, with $\mathbb{I}$ denoting an indicator function,
\begin{align}
\E_\cP \{ b(\theta) \}& \geq
\E_\cP \{ b(\theta) \mathbb{I}(\theta<\delta) + \inf_{\theta \geq\delta }b(\theta)\, \mathbb{I}(\theta\geq \delta) \} \notag\\
& \geq  \min_{i=1,\ldots,n} \{x_i : x_i >0 \} \E_\cP \{ (1/\theta) \mathbb{I}(\theta < \delta) \} +  (4\log\delta) \Pr(\theta\geq\delta)
\label{eq:exp-prop-inequal1}
\end{align}
If $\E_{\cP} (1/\theta) = \infty$, then  $\E_{\cP} \{ (1/\theta)\mathbb{I}(\theta<\delta)\} = \infty$, and so by \eqref{eq:exp-prop-inequal1}, we have that $\E_{\cP}\{ b(\theta) \}=\infty$, regardless of whether $\E_\cP (\log\theta) = -\infty$.  Recall from \eqref{eq:exp-inequal} that
$$
\phi(\xi; \cP) 
\leq 
\log \sum_{i=1}^{n} x^2_i
-
\E_{\cP}\{ b(\theta) \} \,.
$$
Hence if $\E_{\cP} (1/\theta) = \infty$, we have $\phi(\xi; \cP)=-\infty$. This is sufficient to establish the proposition. 
\end{proof}

\begin{proof}[Proof of Lemma \ref{lemma:compart-bounds}]
Observe that $
M(x_i ; \theta) = e^{-2\theta_1 x_i}
			\tilde{M}^{(i)}_{\delta,\theta_3}
			$, where $\tilde{M}^{(i)}_{\delta,\theta_3}$ is defined in the statement of the lemma.
 Moreover, for $i=1,\ldots,n$, either (i) $x_i=0$ or (ii) $x_i \geq x_{\min}$.  
In (ii), we have
\begin{equation}
e^{-2 \theta_1 x_{\max} } 
		\tilde{M}_{\delta, \theta_3}^{(i)} 
\preceq 
  M(x_i; \theta)
\preceq
	e^{-2 \theta_1 x_{\min} } 
		\tilde{M}_{\delta, \theta_3}^{(i)} \,. 
		\label{eq:lemma1-proof}
\end{equation}
Moreover, the above holds also in (i) since then $M(x_i; \theta)$ and $M^{(i)}_{\delta,\theta_3} $ are matrices of zeroes. Summing \eqref{eq:lemma1-proof} over $i=1,\ldots,n$, we obtain:
\begin{equation}
e^{-2\theta_1 x_{\max} }
\tilde{M}_{\delta, \theta_3}
\preceq
M(\xi; \theta) 
\preceq e^{-2 \theta_1 x_{\min}} 
				\tilde{M}_{\delta, \theta_3} \,.
\label{eq:compart-proof-infobound}
\end{equation}
Taking  log-determinants throughout \eqref{eq:compart-proof-infobound} yields the result, when combined with the fact that $
| \tilde{M}_{\delta,\theta_3} | = \theta_3^4 | \tilde{M}_{\delta, 1}|
$. 
\end{proof}
Define $g_\xi(\delta) = |\tilde{M}_{\delta,1}|$. The following is needed to establish Lemma \ref{lemma:compart-mtilde}.
\begin{lemma}
\label{lemma:compart-derivs}
Suppose that $\xi$ contains at least three distinct $x_i>0$. Then the derivatives of $g_\xi (\delta)$ satisfy: (i) $g_\xi^{(k)}(0)=0$, $k=1,\ldots, 7$, (ii) $g_\xi^{(8)}(0)>0$. 
\end{lemma}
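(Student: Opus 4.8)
The plan is to exploit the fact that $\tilde M_{\delta,1}$ is a Gram matrix, so that $g_\xi(\delta)$ is a sum of squares of $3\times 3$ minors, and then to compute the exact leading order in $\delta$ of each minor. First I would observe that the single-point matrix is rank one: writing $a_i(\delta) = (-x_i,\, x_i e^{-\delta x_i},\, 1-e^{-\delta x_i})^{\T}$, one checks directly that $\tilde M^{(i)}_{\delta,1} = a_i(\delta)\,a_i(\delta)^{\T}$. Hence, with $A$ the $n\times 3$ matrix whose $i$th row is $a_i(\delta)^{\T}$, we have $\tilde M_{\delta,1} = A^{\T}A$, and the Cauchy--Binet formula gives
$$
g_\xi(\delta) = |A^{\T}A| = \sum_{i<j<k} D_{ijk}(\delta)^2, \qquad D_{ijk}(\delta) = \det\begin{pmatrix} a_i(\delta)^{\T} \\ a_j(\delta)^{\T} \\ a_k(\delta)^{\T} \end{pmatrix}.
$$
Because $g_\xi$ is a finite sum of squares of functions analytic in $\delta$, it suffices to show that each $D_{ijk}$ vanishes to order exactly $4$ at $\delta=0$ when $x_i,x_j,x_k$ are distinct and positive (and vanishes identically otherwise). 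Then $g_\xi(\delta)=C\delta^8+O(\delta^9)$ and the lemma follows by reading off Taylor coefficients, with $C>0$ guaranteed by the hypothesis of three distinct positive support points.

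The core computation is the leading-order expansion of $D_{ijk}$. Adding column~1 to column~2 and extracting two factors of $-1$, I would reduce $D_{ijk}$ to the determinant of the matrix with rows $(x_\ell,\, x_\ell u_\ell,\, u_\ell)$, $\ell\in\{i,j,k\}$, where $u_\ell = 1-e^{-\delta x_\ell} = \sum_{m\ge 1} c_m x_\ell^{\,m}\delta^m$ with $c_m = (-1)^{m-1}/m!$. Substituting these series and using multilinearity of the determinant in the last two columns expresses $D_{ijk}$ as $\sum_{m,m'\ge 1} c_m c_{m'}\,\delta^{m+m'}\,W(1,m{+}1,m')$, where $W(p,q,r)$ denotes the generalized Vandermonde determinant in the exponents $p,q,r$ evaluated at $x_i,x_j,x_k$.

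The key observation, and the step I expect to do the real work, is that $W(1,m{+}1,m')$ vanishes whenever two of the exponents $1,\,m{+}1,\,m'$ coincide. A short case check then shows that every term with $m+m'\le 3$ has a repeated exponent and so vanishes, killing the putative $\delta^2$ and $\delta^3$ contributions; at order $\delta^4$ only $(m,m')=(1,3)$ and $(2,2)$ survive, giving exponent sets $\{1,2,3\}$ and $\{1,3,2\}$ respectively. Since a single column swap gives $W(1,3,2)=-W(1,2,3)$, collecting these yields a $\delta^4$ coefficient of $D_{ijk}$ equal to $(c_1c_3 - c_2^2)\,W(1,2,3) = -\tfrac1{12}\,V_{ijk}$, where $V_{ijk} = x_ix_jx_k\,(x_j-x_i)(x_k-x_i)(x_k-x_j)$ is the scaled Vandermonde determinant. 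Thus $D_{ijk}(\delta) = -\tfrac1{12}V_{ijk}\,\delta^4 + O(\delta^5)$, which is nonzero exactly when $x_i,x_j,x_k$ are distinct and positive.

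Finally I would assemble the pieces: squaring gives $D_{ijk}(\delta)^2 = \tfrac1{144}V_{ijk}^2\,\delta^8 + O(\delta^9)$, so $g_\xi(\delta) = \bigl(\tfrac1{144}\sum_{i<j<k} V_{ijk}^2\bigr)\delta^8 + O(\delta^9)$. The leading coefficient is a sum of squares, hence strictly positive precisely when some triple has $V_{ijk}\neq 0$, i.e.\ when three of the $x_i$ are distinct and positive, which is exactly the hypothesis. Reading off derivatives at $\delta=0$ then gives $g_\xi^{(k)}(0)=0$ for $k=1,\dots,7$ and $g_\xi^{(8)}(0)=8!\cdot\tfrac1{144}\sum_{i<j<k}V_{ijk}^2>0$, establishing (i) and (ii). The only genuine subtlety is ensuring no lower-order term survives, which is handled entirely by the repeated-exponent vanishing of the generalized Vandermonde determinants; positivity in (ii) is then immediate from the sum-of-squares structure.
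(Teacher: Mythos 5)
Your proof is correct, and it takes a genuinely different route from the paper. The paper verifies part (i) by symbolic computation (e.g.\ Mathematica), and for part (ii) it quotes the closed-form expression $g_\xi^{(8)}(0)=280\,|K|$, where $K$ is the $3\times 3$ matrix with entries $S_{l+l'}$, $S_l=\sum_i x_i^l$; positivity is then obtained via the Loewner bound $K \succeq x_{\min}^2 K'$, where $K'$ is the information matrix of the design $(x_i : x_i>0)$ under the linear model with regressors $1,x,x^2$, which is nonsingular because three distinct positive points make that model estimable. Your argument replaces both computational steps with structure: the rank-one identity $\tilde M^{(i)}_{\delta,1}=a_i(\delta)a_i(\delta)^{\T}$ and Cauchy--Binet give $g_\xi(\delta)=\sum_{i<j<k}D_{ijk}(\delta)^2$, and the generalized-Vandermonde expansion shows each non-identically-zero minor has leading term exactly $-\tfrac{1}{12}V_{ijk}\,\delta^4$, so that $g_\xi(\delta)=\tfrac{1}{144}\bigl(\sum_{i<j<k}V_{ijk}^2\bigr)\delta^8+O(\delta^9)$. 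This is fully self-contained (no computer algebra), proves (i) and (ii) simultaneously, and in fact explains the paper's formula: applying Cauchy--Binet to $K=B^{\T}B$ with $B$ having rows $(x_i,x_i^2,x_i^3)$ gives $|K|=\sum_{i<j<k}V_{ijk}^2$, and $8!/144=280$, so your leading coefficient agrees exactly with the paper's $280\,|K|$. The one point to state explicitly (which you gesture at) is the justification for exchanging the two infinite series with the determinant in the expansion of $D_{ijk}$; since all entries are entire functions of $\delta$ and the series converge absolutely, this rearrangement is legitimate, and the term-by-term identification of Taylor coefficients with derivatives at $0$ follows from analyticity of $g_\xi$. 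What the paper's route buys in exchange is brevity and a design-theoretic interpretation (estimability of the quadratic model); what yours buys is a human-checkable proof of part (i), which the paper delegates entirely to software.
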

\begin{proof}[Proof of Lemma \ref{lemma:compart-derivs}]
Part (i) can be verified using symbolic computation, e.g. Mathematica. It can also be  shown that 
$$
g_\xi^{(8)}(0)  = 280 \{ S_2 S_4 S_6 
				- S_2 S^2_5
				- S_3^2 S_6
				+ S_3 S_4 S_5
				+ S_3 S_4 S_5
				- S_4^3 \}  \,,
$$
where $S_l = \sum_{i=1}^{n} x_i^l$. Define the following,
$$
K= 
\begin{pmatrix}
S_2 & S_3 & S_4 \\
S_3 & S_4 & S_5 \\
S_4 & S_5 & S_6
\end{pmatrix}\,,
\qquad
K' =
\sum_{i : x_i > 0}
\begin{pmatrix}
	1      	 &   x_i       &    x^2_i \\
	x_i 	 &   x^2_i   &    x^3_i \\
	x^2_i &   x^3_i   &    x^4_i 
\end{pmatrix} \,,
$$
and $x_{\min} = \min \{ x_i : x_i >0 \}$. Note that $
K \succeq x^2_{\min} K' \,.
$
We have
$$
g_\xi^{(8)}(0) = 280 |K| \geq 280 x^6_{\min} |K'| \,.
$$
Observe also that $K'$ is the information matrix of the design $\xi' = (x_i : x_i >0)$ under the linear model with regressors $1, x, x^2$.
By the assumption that there are at least three distinct $x_i>0$, the above linear model is estimable and so $|K'|>0$.  This establishes part (ii).\end{proof}

\begin{proof}[Proof of Lemma \ref{lemma:compart-mtilde}]
If $\xi$ has fewer than three distinct $x_i>0$, then we have $\operatorname{rank}(\tilde{M}_{\delta,1}) \leq 2$ and  $\E_\Pri ( \log |\tilde{M}_{\delta,1}| )=-\infty$ for any prior $\Pri$. Thus we may assume that $\xi$ has at least three distinct $x_i>0$. From Lemma \ref{lemma:compart-derivs}, it is clear that 
$g_\xi (\delta)   \approx   (\kappa/2) 	\delta^8  $  for small $\delta$, 
where $\kappa > 0$. 
We show  that the approximation is sufficiently close that $\E_\cP ( \log|\tilde{M}_{\delta,1} | ) = -\infty$ 
if $\int_{\delta<1} \log \delta \,d\cP(\theta) = -\infty$. 
By Taylor's theorem, there is an $\epsilon_1 >0$ and $\lambda >0$ such that, for $\delta \in (0,\epsilon_1)$,
$$
|g_\xi(\delta) - (\kappa/2) \delta^8 | 
	\leq \lambda \delta^9 \,.
$$
Hence, for $\delta \in (0,\epsilon_1)$, 
$$
 |2g_\xi(\delta)/(\delta^8 \kappa) - 1 | \leq (2\lambda / \kappa) \delta \,.
$$
As  the logarithm function has derivative 1 at argument 1, there exists $0<\epsilon_2 \leq \epsilon_1$ such that for $ \delta \in (0, \epsilon_2)$,
$$
\left| 
 \log \frac{2g_\xi(\delta)}{\delta^8 \kappa }
 - \log 1
\right| 
\leq 2 |2g_\xi(\delta)/(\delta^8 \kappa) - 1 | 
\leq (4\lambda / \kappa) \delta \,.
$$
Thus, for $\delta \in (0,\epsilon_2)$, 
$$
| \log g_\xi(\delta) - \log (\kappa\delta^8/2) | \leq (4\lambda/\kappa) \delta \,,
$$
so that 
$$
\left|
\int_{\delta <\epsilon_2}
 \log g_\xi(\delta)   d\cP(\theta)
 - 
  \int_{\delta < \epsilon_2} 
\{8\log\delta + \log(\kappa /2) \} d\cP(\theta)
 \right| \leq  (2\lambda/\kappa) \epsilon_2^2 \,.
$$
Hence it is clear that $\int_{\delta < \epsilon_2} \log g_\xi(\delta) d\cP(\theta) = -\infty$ if and only if $\int_{\delta < \epsilon_2} \log \delta\, d\cP(\theta) =-\infty$. Further, $g_\xi(\delta)$ is bounded above, and  
$$\int \log g_\xi(\delta) d\cP(\theta)
= \int_{\delta < \epsilon_2} 
	\log g_\xi(\delta) d\cP(\theta) 
	+ \int_{\delta > \epsilon_2} 
	\log g_\xi(\delta) d\cP(\theta) \,.$$ 
Thus, $\int \log g_\xi(\delta) d\cP(\theta) = -\infty$
	when 
	$\int_{\delta< \epsilon_2} \log \delta \,d\cP(\theta)=-\infty$. The result is finally established by  observing that $\int_{\delta< \epsilon_2} \log \delta \,d\cP(\theta)=-\infty$ if we have $\int_{\delta< 1 } \log \delta \,d\cP(\theta)=-\infty$.

\end{proof}

\begin{proof}[Proof of Proposition \ref{thm:compart-model}]
From Lemma \ref{lemma:compart-bounds},
\begin{equation}
\log|M(\xi;\theta)| \leq -6\theta_1 x_{\min}
+ 4\log \theta_3 + \log |\tilde{M}_{\delta,1}| \,.
\label{eq:appx-bound-prop}
\end{equation}
It can be shown that $|\tilde{M}_{\delta,1}| \leq 2S_0 S_2^2 + 4 S_2 S_1^2$, thus $\int \log |\tilde{M}_{\delta,1}| \,d\Pri(\theta)<\infty$. As $\theta_1>0$, $\int-6 \theta_1 x_{\min} d\Pri(\theta) \leq0 <\infty$. If  $\int_{\theta_3> 1} \log \theta_3 \,d\cP (\theta)<\infty$, as assumed by the lemma, then all terms on the right hand side of \eqref{eq:appx-bound-prop} have integral $<\infty$ and
\begin{align*}
\int \log|M(\xi;\theta)| \,d\Pri(\theta) &\leq \int-6 \theta_1 x_{\min} \,d\Pri(\theta)
+ 4\int \log \theta_3\, d\Pri(\theta) \\&\qquad+ \int \log |\tilde{M}_{\delta,1}| \,d\Pri(\theta) \,.
\end{align*}
Hence if, in addition to $\int_{\theta_3> 1} \log \theta_3 \,d\cP (\theta)<\infty$,  we have that at least one of $\int \log |\tilde{M}_{\delta,1}| \,d\Pri(\theta)=-\infty$,  $\int-6 \theta_1 x_{\min} \,d\Pri(\theta)  =-\infty$, or  $\int_{\theta_3 < 1} \log \theta_3 \,d\cP (\theta)=-\infty$ holds, then also $\int \log|M(\xi;\theta)| \,d\Pri(\theta)=-\infty$. Using Lemma \ref{lemma:compart-mtilde}, the condition $\int \log |\tilde{M}_{\delta,1}| \,d\Pri(\theta)=-\infty$ in the preceding statement may be replaced by $\int_{\delta<1} \log \delta \,d\Pri(\theta)=-\infty$. This establishes the result.
\end{proof}

\begin{proof}[Proof of Theorem \ref{thm:logistic-positive}]
{It follows from Lemma \ref{lemma:info-bound} that}
$$
\log|M(\xi ;\bbeta)| \geq \log|F^\T F| + p \min_{i} \log w_i \,.
$$
From \eqref{eq:logistic-weight}, $w(\eta) \geq (1/4) e^{-|\eta|}$. Thus,
\begin{align*}
\log |M(\xi;\bbeta)|
&\geq 
	\log|F^\T F| + p \log\left[(1/4) e^{-\max_i|\eta_i|}\right]  \\
& \geq 
	\log |F^\T F| - p\max_i|\eta_i| - p\log4 \,.
\end{align*}
Moreover, by the triangle inequality, $\max_i |\eta_i| \leq \sum_j \max_i |f_j(x_i)| |\beta_j|$, and hence
\begin{equation}
\log |M(\xi;\bbeta)| \geq \log |F^\T F| - p\log 4 - p \sum_j \max_i |f_j(x_i)| |\beta_j| \,.
\label{eq:appx-lb-logdet}
\end{equation}
The right hand side of \eqref{eq:appx-lb-logdet} has    expectation greater than $-\infty$ due to the assumptions that $\E_\Pri ( |\beta_j| )<\infty$ and $|F^\T F|>0$. Therefore we have that $\E_\Pri \{ \log |M(\xi;\bbeta)| \} > -\infty$. 
\end{proof}

\begin{proof}[Proof of Proposition \ref{prop:logistic-negative}]
From Lemma \ref{lemma:info-bound},  
\begin{align*}
\log |M(\xi;\bbeta)| &\leq \log|F^\T F|   + p \max_i \log w_i \,.
 \end{align*}
 It can also be shown that $w(\eta)$ is a decreasing function of $|\eta|$ and, from \eqref{eq:logistic-weight}, that $w(|\eta|) \leq \exp ( -|\eta|)$. Hence,
 \begin{align*}
\log |M(\xi;\bbeta)| &\leq \log|F^\T F|   + p \log w(\min_{i}|\eta_i|)\\
& \leq  \log|F^\T F|   - p\min_{i}|\eta_i| \,.
 \end{align*}
It remains to prove $\E_\Pri ( \min_{i} |\eta_i| ) = \infty$ to establish that $\E_\Pri\{ \log|M(\xi;\bbeta)| \}= -\infty$. This is achieved by
  conditioning on an event where the parameter $\beta_j$ dominates. Given $j \in \{0,\ldots, p-1\}$, let $\mathcal{E} \in \Sigma$ be an event such that (a) $\beta_j > 1$, and (b) $\sum_{k\neq j} |f_k(x_i)| |\beta_k| < \epsilon$ for all $i$, where $\epsilon>0$ is such that 
$$
| |f_j(x_i)| - |f_j(x_{i'})| | > 2\epsilon \qquad \text{for any }i, i'\text{ with }|f_j(x_i)| \neq |f_j(x_{i'})| \,.
$$
We can guarantee (a) and (b), for example by taking 
\begin{equation}
\mathcal{E} = \{ \bbeta : \beta_j > 1, \,|\beta_k| < \delta\,, \text{ for }k \neq j\}  \in \Sigma\,,
\label{eq:def-E}
\end{equation}
{with $\delta = \epsilon/[(p-1) \max_{i,l} |f_l(x_i)|]$. The above satisfies  
$$
\Pr(\mathcal{E}) = \Pr( \beta_j >1  ) \Pr ( |\beta_k| <  \delta \text{ for all }k\neq j \,|\, \beta_j > 1) >0 \,,
$$ by assumptions (i) and (ii) of the proposition.}

By {the reverse triangle inequality} and from (b), on event $\mathcal{E}$, 
\begin{equation}
| |\eta_i| - |f_j(x_i)| \beta_j |  \leq \sum_{k\neq j} |f_k(x_i)||\beta_k| 
\leq \epsilon \,.
\label{eq:E_inequal}
\end{equation}

Since on $\mathcal{E}$ the term from $\beta_j$ dominates, the minimum of $|\eta_i|$ is found by minimizing the $\beta_j$ term. To see this formally, observe
that if $|f_j(x_i)| \beta_j > |f_j(x_{i'})| \beta_j$, then by the definition of $\epsilon$,
$$
|f_j(x_i)|\beta_j - |f_j(x_{i'})|\beta_j > 2\epsilon \beta_j > 2\epsilon \,.
$$
and, by also using \eqref{eq:E_inequal},
$$
|\eta_{i'}| < |f_j (x_{i'})| \beta_j +\epsilon < |f_j(x_{i})| \beta_j - \epsilon < |\eta_i| \,.
$$ 
{Thus, on $\mathcal{E}$, if $|f_j(x_i)| \beta_j > |f_j(x_{i'})| \beta_j$ then $|\eta_{i}| > |\eta_{i'}|$. 
This can be used to show that on $\mathcal{E}$, if $i^\ast \in \operatorname{arg\,min}_i |\eta_i |$ then $i^\ast \in \operatorname{arg\,min}_i |f_j(x_i)|$, as follows. Suppose that $i^\ast \not\in \operatorname{arg\,min}_i |f_j(x_i)| $, then there would exist some $i$ such that $|f_j(x_{i^\ast})| \beta_j > |f_j(x_{i})| \beta_j$. By the above, we would have that $|\eta_{i^\ast}| > |\eta_{i}|$, which contradicts the definition of $i^\ast$ as a member of $\operatorname{arg\,min}_i |\eta_i |$. Hence,}
\begin{align*}
\min_i |\eta_i| &= |\eta_{i^\ast}|\,, \quad i^\ast \in \operatorname{arg\,min}_i |f_j(x_i)| \\
&\geq |f_j(x_{i^\ast}) | \beta_j - \epsilon \,.
\end{align*}
Consequently,
{\begin{align*}
\E_\Pri ( \min_i |\eta_i| \, \given \, \mathcal{E} ) 
&\geq |f_j(x_{i^\ast})| \, 
		\E_\Pri ( \beta_j \given \mathcal{E} ) - \epsilon  \\
& = \infty \quad \text{ by assumptions (iii) and (iv) of the proposition. }
\end{align*}}
For the marginal expectation, note that $\Pr(\mathcal{E})>0$, and hence
$$
\E_\Pri ( \min_i |\eta_i| ) \geq \Pr(\mathcal{E})\E_\Pri ( \min_i |\eta_i| \given  \mathcal{E} ) = \infty \,.
$$
\end{proof}

{
\begin{proof}[Proof of Proposition \ref{prop:extreme-probs}]
Case (i): assume that $f_j(x_i)> 0$. On the event $\mathcal{E}$ defined in the previous proof, we have that $\eta_i \geq f_j(x_i) \beta_j - \epsilon$. Hence $\E(\eta_i \given \mathcal{E})= \infty$. Let $P= \Pr(y_i=1\given\beta)$, noting that $1 - \Pr(y_i=1 \given \beta) \leq e^{-\eta_i} $. Then,
\begin{align*}
\E^\mathcal{G}( 1- P \given \mathcal{E} ) = \exp \E \log \{ 1- P \given \mathcal{E} \} \leq \exp \E( -\eta_i \given \mathcal{E} ) = 0 \,,
\end{align*}
where $E^\mathcal{G}$ denotes the geometric mean. Hence the conditional geometric mean of $1-P$ is zero.

Case (ii): assume $f_j(x_i)<0$. On $\mathcal{E}$,  $\eta_i \leq f_j(x_i) \beta_j + \epsilon$ and so $\E(\eta_i \given \mathcal{E}) = -\infty$. However, $P \leq e^{\eta_i}$ and so $E^\mathcal{G}( P\given \mathcal{E}) \leq \exp E (\eta_i \given \mathcal{E}) =0$.
\end{proof}

\begin{proof}[Proof of Theorem  \ref{thm:probit-nonsing}]
Assume $\xi$ is such that $|F^\T F| >0$. Note that $w(\eta)$ is decreasing in $|\eta|$ and so, by Lemma \ref{lemma:info-bound},
\begin{align*}
\log |M(\xi;\beta) | &\geq \log|F^\T F| + p \log w(\max_i | \eta_i | )\,.
\end{align*}
We split $\E \log |M(\xi; \beta)|$ into two components, 
\begin{align}
\E \log |M(\xi;\beta) | &= \E \left[ \, \log|M(\xi ;\beta)| \, \mathbb{I}(\max|\eta_i| \leq \kappa) \, \right] \notag \\
& \qquad + \E \left[ \, \log|M(\xi ;\beta)| \, \mathbb{I}(\max|\eta_i| > \kappa) \, \right] \,, \label{eq:split}
\end{align}
where $\kappa > 0$, and then show that both components are $>-\infty$.

 Note that for $|\eta| \leq \kappa$, $w(\eta)$ is bounded below by a constant, $\lambda>0$.
Thus, if $\max_i|\eta_i| \leq \kappa$, then $\log |M(\xi;\beta)| \geq \log |F^\T F| + p\log \lambda$ and so
\begin{equation}
\E \left[ \, \log|M(\xi ;\beta)| \,  \mathbb{I}( \max_i |\eta_i| \leq \kappa) \, \right] > -\infty\,.
\label{eq:ifsmall}
\end{equation}

For $|\eta| > \kappa$, with $\kappa$ sufficiently large, by the asymptotic approximation \eqref{eq:probit-tail},
$$
w(\eta) \geq L |\eta| e^{-\eta^2/2}  
\,,
$$
for some $L>0$.   
 Hence if $\max_i |\eta_i| > \kappa$, then 
\begin{align*}
\log |M(\xi;\beta) | &\geq \log|F^\T F| + p\log L+ p\log \max_i |\eta_i| - p \max_i \eta_i^2 / 2 \\
& \geq \log|F^\T F| + p\log L + p\log \kappa - p \max_i \eta_i^2/ 2 \,.
\end{align*}
However, it is straightforward to show that if $E\beta_k^2 < \infty$ and $E|\beta_k \beta_l| < \infty$ for $k,l=0,\ldots,p-1$, then $\E \max_i \eta_i^2 < \infty$. This is sufficient to prove that 
\begin{equation}
\E \left[ \, \log|M(\xi ;\beta)| \,  \mathbb{I}( \max_i |\eta_i| > \kappa ) \, \right] > - \infty\,.
\label{eq:ifbig}
\end{equation}
Combining \eqref{eq:split}, \eqref{eq:ifsmall} and \eqref{eq:ifbig}, we find that overall $E\log|M(\xi;\beta)| > -\infty$, and so $\Pri$ is non-singular.

\end{proof}

\begin{lemma}
Let $X$ be a random variable taking values in $A \subseteq \mathbb{R}$, with $A$ unbounded above, and let $s, t : A \to \bar{\mathbb{R}}$ be measurable extended real-valued functions  that satisfy (i) for all $k\in \mathbb{R}$, $\sup_{\{x \in A \,|\, x \leq k\} } |s(x)| <\infty$  (ii) $t(x)$ is increasing,  and (iii) $r(x) = t(x)/s(x) \to 0$ as $x\to \infty$. Given the above, if $E[s(X)] = \infty$ then $E[s(X) - t(X)] = \infty$.
\label{lemma:st-infty}
\end{lemma}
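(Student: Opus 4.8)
The plan is to split both $s(X)$ and $s(X)-t(X)$ into a contribution from a bounded region $\{X \le K\}$ and a contribution from the tail $\{X > K\}$, and to show that the divergence $E[s(X)] = \infty$ is driven entirely by the tail, where hypothesis (iii) forces $t$ to be dominated by $s$. Throughout I use the paper's integration convention, under which $E[s(X)] = \infty$ means precisely $E[s^+(X)] = \infty$ and $E[s^-(X)] < \infty$, and similarly $E[s(X)-t(X)] = \infty$ will follow once I establish $E[(s-t)^+(X)] = \infty$ together with $E[(s-t)^-(X)] < \infty$.

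First I would localize the infinite mass of $s^+$ to the tail. By hypothesis (i), on $\{X \le k\}$ we have $s^+ \le |s| \le \sup_{x \le k} |s(x)| < \infty$, so $E[s^+(X)\,\mathbb{I}(X \le k)] < \infty$ for every $k \in \mathbb{R}$; since $E[s^+(X)] = \infty$, this forces $E[s^+(X)\,\mathbb{I}(X > k)] = \infty$ for every $k$. Next, by hypothesis (iii) I pick $K$ large enough that $|r(x)| = |t(x)/s(x)| < 1/2$ for all $x > K$; in particular $s(x) \neq 0$ there, and $t(x)$ is finite on the tail (if $t$, being increasing, were $+\infty$ on the tail then $r(x)$ could not tend to $0$). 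Hence $|t(x)| < \tfrac12 |s(x)|$ on $\{x > K\}$, which yields the two pointwise bounds $(s-t)^+ \ge \tfrac12 s^+$ and $(s-t)^- \le \tfrac32 s^-$ there (checking the cases $s(x) \ge 0$ and $s(x) < 0$ separately).

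The first bound, combined with the previous paragraph, gives $E[(s-t)^+(X)] \ge \tfrac12 E[s^+(X)\,\mathbb{I}(X > K)] = \infty$, which is the positive-part claim. The second bound gives $E[(s-t)^-(X)\,\mathbb{I}(X > K)] \le \tfrac32 E[s^-(X)] < \infty$, since $E[s^-(X)] < \infty$. It then remains only to control $(s-t)^-$ on the bounded region $\{X \le K\}$: here hypothesis (ii) gives $t(x) \le t(K) < \infty$ and hypothesis (i) bounds $|s|$, so $(s-t)^- = \max\{0,\,t-s\} \le \max\{0,\, t(K) + \sup_{x \le K}|s(x)|\}$ is bounded, whence $E[(s-t)^-(X)\,\mathbb{I}(X \le K)] < \infty$. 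Adding the two pieces yields $E[(s-t)^-(X)] < \infty$, and together with $E[(s-t)^+(X)] = \infty$ this is exactly $E[s(X) - t(X)] = \infty$.

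The argument is elementary, so there is no deep obstacle; the only genuine care is in the third step. Controlling $(s-t)^-$ on $\{X \le K\}$ is the one place where a bound on $t$ away from the tail is needed, and it is monotonicity (hypothesis (ii)) — the sole hypothesis constraining $t$ for small arguments — that supplies $t \le t(K)$. The remaining delicacy is purely bookkeeping with the extended-real values of $s$ and $t$: one checks that $t$ must be finite wherever the ratio in (iii) is invoked, and that the degenerate values ($t = -\infty$ or $s = +\infty$) only shrink the relevant quantities and therefore do no harm.
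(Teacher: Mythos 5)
Your proof is correct and takes essentially the same route as the paper's: split at a threshold $K$ beyond which (iii) gives $|t(x)| < \tfrac12|s(x)|$, so the tail contribution of $s-t$ dominates $\tfrac12 s$ and inherits the infinite expectation, while hypotheses (i) and (ii) bound the contribution from $\{X \le K\}$ via $t \le t(K)$ and the boundedness of $s$. Your explicit $(s-t)^{+}/(s-t)^{-}$ bookkeeping and sign-case analysis is a more careful rendering of the same argument (the paper's tail inequality $s-t \ge \tfrac12 s$ tacitly assumes $s \ge 0$ on the tail), not a different approach.
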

\begin{proof}
Note that from (iii) there exists $k'>0$ such that when $X>k'$, we have $s(X)-t(X) \geq (1/2)s(X)$. When $X\leq k'$, by (ii) we have $t(X) \leq t(k')$. 
Hence, 
\begin{align*}
\E[s(X) - t(X)]  &= \E \{ [ s(X)-t(X)] \, \mathbb{I}(X\leq k') + [s(X)-t(X)] \,\mathbb{I}(X> k') \} \\
&\geq E\{ [ s(X) -t(k')]\, \mathbb{I}(X\leq k') + (1/2)s(X) \,\mathbb{I}(X> k') \}\,.
\end{align*}
By condition (i), $s(x)$ is bounded on $\{x \leq k'\}$. Therefore the first term inside the expectation above is also bounded, and since $\E\{s(X)\} =\infty$ we must have that $\E \{ s(X) \mathbb{I}(X>k') \} = \infty$. Hence the right hand side of the above inequality has infinite expectation, and so $\E[s(X) - t(X)] = \infty$.

\end{proof}

\begin{proof}[Proof of Proposition \ref{prop:probit-sing}]
Similar to the proof of Theorem \ref{thm:probit-nonsing}, we split the integral $\E \log |M(\xi; \beta)|$ into two components,
\begin{align}
\E \log |M(\xi;\beta) | &= \E \left[ \, \log|M(\xi ;\beta)| \, \mathbb{I}(\min_i|\eta_i| \leq \kappa) \, \right] \notag \\
& \qquad + \E \left[ \, \log|M(\xi ;\beta)| \, \mathbb{I}(\min_i|\eta_i| > \kappa) \, \right] \,, \label{eq:split-sing}
\end{align}
where $\kappa>0$. Note that $w(\eta)$ is symmetric and decreasing in $|\eta|$. Thus, by Lemma \ref{lemma:info-bound}, for $\min_i |\eta_i| \leq \kappa$ we have $\log |M(\xi; \beta)| \leq \log |F^\T F |  +\log w(0)$, so
$$
 \E \left[ \, \log|M(\xi ;\beta)| \, \mathbb{I}(\min_i|\eta_i| \leq \kappa) \, \right] \leq 
 \Pr( \min_i |\eta_i| \leq \kappa) [ \log |F^\T F | + \log w(0) ] < \infty\,,
$$
i.e. the first term in \eqref{eq:split-sing} is always $<\infty$.

We now consider the second term in \eqref{eq:split-sing}. For $\min_i |\eta_i | > \kappa$, provided $\kappa$ is sufficiently large then by \eqref{eq:probit-tail}, 
$$
\max_i w(\eta_i) = w(\min_i |\eta_i| ) \leq L \min_i|\eta_i| \,e^{-\min_i \eta_i^2/2}\,,
$$
for some $L>0$. By Lemma \ref{lemma:info-bound}, for $\min_i |\eta_i| > \kappa$,
\begin{align*}
\log |M(\xi;\beta)| \leq \log|F^\T F| + p \log L+ p \log \min_i |\eta_i| - (p/2) \min_i\eta_i^2 \,.
\end{align*}
Assume that $|F^\T F | > 0$. Let $X_1 = \min_i |\eta_i| $, $A_1=[0,\infty)$, $s_1(X_1)= (p/2)X_1^2 \,\mathbb{I}(X_1>\kappa) $, and $t_1(X_1) = p\log X_1 \,\mathbb{I}(X_1>\kappa)$. We have that
\begin{align}
&\E \left[ \log|M(\xi;\beta)| \,
	\mathbb{I}(\min_i |\eta_i| > \kappa)
\right]  \notag \\
& \quad \leq
\E \left[ 
 t_1(X_1) - s_1(X_1)
+  (\log |F^\T F|
+ p \log L) \, \mathbb{I}(X_1 > \kappa) 
\right] \,.
\label{eq:probit-sing-part2}
\end{align}
We may assume that $\kappa > 1$, in which case $t_1$ is increasing and so $X_1$, $A_1$, $s_1$, $t_1$ satisfy the conditions of Lemma \ref{lemma:st-infty}. Hence, if $\E [s_1(X_1)]=\infty$ then $\E[t_1(X_1) - s_1(X_1)] = -\infty$, in which case, from \eqref{eq:probit-sing-part2}, 
$$E[ \log|M(\xi;\beta)| \, \mathbb{I}(\min_i |\eta_i| > \kappa)]= -\infty \,,$$ and so by \eqref{eq:split-sing} we have that $\E \log |M(\xi; \beta)|  = -\infty$. Hence to prove the proposition it is sufficient to show that $\E[s_1(X_1)] = \infty$; 
we demonstrate that this holds in the next paragraph.

Recall that on the event $\mathcal{E}$, defined in \eqref{eq:def-E}, we have that $\min_i |\eta_i| \geq \min_i |f_j(x_i)||\beta_j| - \epsilon$. Thus, on $\mathcal{E}$,
\begin{align}
\min_i \eta^2_i &\geq
\min_i |f_j ( x_i )|^2 |\beta_j|^2 -2 \epsilon \min_i |f_j(x_i)| |\beta_j| + \epsilon^2 \notag \\
& \geq s_2(X_2) - t_2(X_2) + \epsilon^2 \,,
\label{eq:minetasq}
\end{align}
where above $X_2 = \min_i |f_j(x_i) ||\beta_j | $, $A=[0,\infty)$, with $s_2(X_2)=X_2^2 $ and $t_2(X_2)=2\epsilon X_2$. 
From assumptions (ii) and (iii) of the proposition we have that  $\E[ |\beta_j|^2 \,\big|\, \mathcal{E}] =\infty$ and $\min_i|f_j(x_i)| >0$, thus 
$$\E[ s_2(X_2) \,|\, \mathcal{E} ] = \E [   \min_i |f_j(x_i)|^2|\beta_j|^2 \,|\, \mathcal{E}] =\infty\,.$$ 
Hence, applying Lemma \ref{lemma:st-infty} we see that $\E[s_2(X_2)- t_2(X_2)\,|\,\mathcal{E}] = \infty$ and so, by \eqref{eq:minetasq}, $\E [ \min_i \eta^2_i \given \mathcal{E}] = \infty$. 
To complete the proof we must consider the marginal expectation of $s_1(X_1) = (p/2)X^2_1 \,\mathbb{I}(X_1 > \kappa)$, where $X_1 = \min_i |\eta_i|$. Note that by assumption (i), $\Pr(\mathcal{E})>0$, thus
$$
\E X^2_1 = \E \min_i \eta^2_i \geq \Pr( \mathcal{E} ) \E( \min_i \eta^2_i \given \mathcal{E} ) = \infty\,.
$$
Finally, observe that   
$X^2_1 = X^2_1 \, \mathbb{I}(X_1 \leq \kappa) + X^2_1 \, \mathbb{I} (X_1 > \kappa)$  and 
$$
0 \leq E\{ X^2_1 \, \mathbb{I}(X_1 \leq \kappa) \} \leq \kappa^2\,.
$$ 
Since $\E X^2_1 = \infty$, we therefore have that $E\{ X^2_1 \,\mathbb{I}(X_1 > \kappa) \} = \infty$. 
Hence $\E[s_1(X_1)]=\infty$. As shown in the previous paragraph, this is enough to establish that $\E \log |M(\xi; \beta)| = -\infty$, and the proposition is proved.

\end{proof}

\begin{proof}[Proof of Theorem \ref{thm:poiss-nonsing}]
From Lemma \ref{lemma:info-bound} and the fact that $w(\eta)= \exp(\eta)$,
\begin{align*}
\log |M(\xi;\beta)| &\geq p \min_i \log w_i + \log |F^\T F| \\
&\geq p \min_i \eta_i + \log|F^\T F| \,.
\end{align*}
However, $\min_i \eta_i \geq -\max_i |\eta_i|$ and so
$$
\log |M(\xi;\beta)| \geq - p \max_i |\eta_i| + \log |F^\T F| \,.
$$
 We know from the proof of Theorem \ref{thm:logistic-positive} that $E \max_i |\eta_i| < \infty$ under the conditions given, and so we also have that $E\log|M(\xi;\beta)| > -\infty$ for the Poisson model.
\end{proof}

\begin{proof}[Proof of Proposition \ref{prop:poiss-sing}]
First note from Lemma \ref{lemma:info-bound} that
$$
\log|M(\xi;\beta)| \leq p \max_i \eta_i + \log|F^\T F| \,.
$$
Thus, to establish that $\E \log|M(\xi;\beta)| = -\infty$, it is sufficient to show that $\E \max_i \eta_i = -\infty$. 
Similar to the proof of Proposition \ref{prop:logistic-negative}, the strategy is to find an event where $\eta_i$ is well approximated by $f_j(x_i)\beta_j$. Let $\mathcal{E}_2$ be an  event  such that  $\beta_j <-1$ and  $\sum_{k\neq j} |f_k(x_i)| |\beta_k| < \epsilon$ for all $i$, where $\epsilon >0$ satisfies
$$
| |f_j(x_i)| - |f_j(x_{i'})| | > 2\epsilon \qquad \text{for any }i, i'\text{ with }|f_j(x_i)| \neq |f_j(x_{i'})| \,.
$$
 For example, one possible definition is
$$
\mathcal{E}_2 = \{  \beta : \beta_j < -1\,,\, |\beta_k| < \delta \text{ for all } k \neq j \} \,,
$$
with $\delta = \epsilon/[(p-1)\max_{i,k}|f_{k}(x_i)|] $.
On $\mathcal{E}_2$, by arguments similar to those in the proof of Proposition \ref{prop:logistic-negative}, 
\begin{align}
\max_i \eta_i &\leq \max_i \{ f_j(x_i)\beta_j\}  + \epsilon \notag \\
& \leq \beta_j \min_i f_j(x_i) + \epsilon \,,
\label{eq:poiss-maxi}
\end{align}
where the second line follows since, by assumptions (i) and (v), we have $\max_i \{ f_j (x_i) \beta_j \}= \beta_j \min_i f_j(x_i) $.
By condition (iii), $\E[\beta_j  \given \mathcal{E}_2 ] = -\infty$ and so, from \eqref{eq:poiss-maxi} and condition (v), we have
$
\E [ \max_i \eta_i  \given \mathcal{E}_2 ] = -\infty
$.
Moreover, by condition (ii), $\Pr(\mathcal{E}_2)>0$ and so 
\begin{equation}
\E[ \max_i \eta_i \, \mathbb{I}(\mathcal{E}_2)] 
= \Pr(\mathcal{E}_2) \,
	\E [ \, \max_i \eta_i  \given \mathcal{E}_2  \,] 
= -\infty\,.
\label{eq:poiss-intmaxioverE}
\end{equation}
Note that
\begin{equation}
  \max_i \eta_i =   \max_i \eta_i \, \mathbb{I}( \mathcal{E}_2)  +  \max_i \eta_i \, \mathbb{I}( \mathcal{E}^C_2) \,.
\label{eq:poiss-maxeta}
\end{equation}
By assumptions (i) and (v),  $f_j(x_i) \beta_j$ is negative and so we have
$
\max _i \eta_i \leq \sum_{k\neq j} \max_i |f_k(x_i)| |\beta_k| \,.
$
Thus by assumption (iv), $\E \{ \max_i \eta_i \, \mathbb{I}(\mathcal{E}^C_2) \}<\infty$. Hence, by \eqref{eq:poiss-maxeta}, 
$$
\E\max_i \eta_i = 
\Pr(\mathcal{E}_2) \,
	\E [\, \max_i \eta_i \given \mathcal{E}_2 \,]
+ \Pr(\mathcal{E}_2^C) \,
	\E [\, \max_i \eta_i \given \mathcal{E}^C_2\,] \,,
$$
and, by \eqref{eq:poiss-intmaxioverE}, we have $\E \max_i \eta_i = \infty$ and hence $\E\log|M(\xi;\beta)| =-\infty$.

\end{proof}

\begin{proof}[Proof of Proposition \ref{prop:exp-stability}]
Let $\zeta_\epsilon$ denote a one-run exact design with design point $x_\epsilon= -1/\log\epsilon$, with $x_\epsilon \to 0$ as $\epsilon \to 0$. We show that, compared to $\zeta_\epsilon$, the relative Bayesian $D$-efficiency under $\mathcal{P}_\epsilon$ of a fixed (exact) design $\xi$ tends to zero as $\epsilon \to 0$. It is not claimed that $\zeta_\epsilon$ is Bayesian $D$-optimal. However, the relative Bayesian $D$-efficiency of $\xi$ is an upper bound for the absolute Bayesian $D$-efficiency of $\xi$, and so this argument is sufficient to establish that $\text{Bayes-eff}(\xi; \mathcal{P}_\epsilon) \to 0$ as $\epsilon \to 0$.

First note that under $\Pri_\epsilon$
\begin{equation}
\E (\beta) = 
	\int_{\epsilon}^{a} \frac{1}{\theta} \frac{1}{a-\epsilon} d\theta 
	= \frac{\log a - \log \epsilon}{a -\epsilon} 
	\to \infty \text{ as }\epsilon \to 0\,.
\label{eq:exp-meanbeta}
\end{equation} Observe that, for the $\beta$-parameterization, using \eqref{eq:exp-inequal},
\begin{align*}
& 
\phi(\xi; \mathcal{P}_\epsilon ) - \phi(\zeta_\epsilon; \mathcal{P}_\epsilon) 
\leq \log S_{xx} - 2 \min_{i: x_i > 0} \{ x_i\} \E \beta - 
2\log x_\epsilon  + 2 \E \beta x_\epsilon 
\\
& \leq \log S_{xx} - 2 \left\{
 \min_{i: x_i >0} x_i - x_\epsilon
\right\} \E \beta
 - 2\log x_\epsilon \,. 
 \end{align*}
 Using \eqref{eq:exp-meanbeta} and the definition of $x_\epsilon$, for $\epsilon$ sufficiently small,
 $$
\phi(\xi; \mathcal{P}_\epsilon) - \phi(\zeta_\epsilon; \mathcal{P}_\epsilon)
 \leq \log S_{xx} + 2 K \log \epsilon - 2 \log \left( \frac{-1}{\log \epsilon} \right) \,,
$$
for some $K>0$. Hence, provided  $\epsilon$ is sufficiently small, the relative Bayesian $D$-efficiency satisfies 
\begin{align*}
\exp\{ 
\phi(\xi; \mathcal{P}_\epsilon ) - \phi(\zeta_\epsilon; \mathcal{P}_\epsilon) 
\} \leq S_{xx} (\epsilon^{K} \log\epsilon)^2 \to 0 \text{ as }\epsilon \to 0\,,
\end{align*}
which is sufficient to prove the claim. The limit above can be found using L'Hospital's rule. Using \eqref{eq:exp-reparam}, the same result for the Bayesian $D$-efficiency also holds under the $\theta$-parameterization. 
\end{proof}}

\end{document}